\newtheorem{thm}{Theorem}[section]
\newtheorem{cor}[thm]{Corollary}
\newtheorem{lem}[thm]{Lemma}
\newtheorem{defn}[thm]{Definition}
\providecommand{\del}{\partial}
\providecommand{\rd}{\mathrm{d}}
\newcommand{\Z}{\mathbb{Z}}
\newcommand{\N}{\mathbb{N}}
\newcommand{\R}{\mathbb{R}}
\numberwithin{equation}{section}
\begin{document}

\title{Special solutions to a non-linear coarsening model with local interactions}

\author{Constantin Eichenberg}

\affil{Institute for Applied Mathematics, University of Bonn \\
Endenicher Allee 60, 53115 Bonn, Germany \\
\texttt{eichenberg@iam.uni-bonn.de}}

\maketitle

\begin{abstract}
\noindent We consider a class of mass transfer models on a one-dimensional lattice with nearest-neighbour interactions. The evolution is given by the backward parabolic equation $\del_t x = - \frac{\beta}{|\beta|} \Delta x^\beta$, with $\beta$ in the fast diffusion regime $(-\infty,0) \cup (0,1]$. Sites with mass zero are deleted from the system, which leads to a coarsening of the mass distribution. The rate of coarsening suggested by scaling is $t^\frac{1}{1-\beta}$ if $\beta \neq 1$ and exponential if $\beta = 1$. We prove that such solutions actually exist by an analysis of the time-reversed evolution. In particular we establish positivity estimates and long-time equilibrium properties for discrete parabolic equations with initial data in $\ell_+^\infty(\Z)$.
\end{abstract}

{\small \tableofcontents}

\section{Introduction}

Discrete mass transfer models with local interactions have been studied by several authors in different contexts. They have several applications in physics such as the growth and coarsening of sand ripples in \cite{HellenOK2002} or the clustering in granular gases \cite{VanDerMeervdWL2001}, while also serving as approximations or toy models for more complex coarsening scenarios such as the evolution of droplets in dewetting films \cite{GlasnerW2003,GlasnerW2005} and grain growth \cite{HenselerNO2004}. If the mass transport between sites is symmetric, the evolution of such systems in one dimension is governed by an infinite system of ODEs,

\begin{align} \label{mass_transfer_equation}
\frac{\rd}{\rd t} x(t,k) = F(x(t,k-1)) - 2F(x(t,k)) + F(x(t,k+1)), 
\end{align}

\noindent where the right hand side represents the net mass flux at a site $k$ which receives and transfers mass from its neighbours at rates controlled by the flux function $F$. This system can also be interpreted as the spatially discrete non-linear PDE $\del_t x = \Delta F(x)$. The monotonicity properties of $F$ are crucial for the qualitative behaviour of solutions and depend on the application, as an increasing flux function will lead to mass diffusion and a decreasing flux function will lead to aggregation and coarsening. A combination of both is also possible, for example in models that were investigated in \cite{EsedogluS2008,EsedogluG2009}. \newline

\noindent In this paper we are interested in the coarsening model proposed in \cite{HelmersNV2016}, with flux function

\begin{align} 
F(x) = F_\beta(x) = - \frac{\beta}{|\beta|}x^\beta,
\end{align}

\noindent where $ -\infty < \beta < 0$ or $0 < \beta \leq 1$. This largely resembles the sand ripple scenario \cite{HellenOK2002}, although we will refer to the lattice points as \textit{particles} from now on. Distinctive features of the model are the infinite number of particles and the \textit{vanishing rule:} \textit{Particles that reach mass zero are deleted from the system and the remaining particles are relabeled accordingly}. This way small particles vanish from the system while transferring their mass to the rest of the system, which leads to a growth of the average particle size and an overall coarsening of the system. \newline

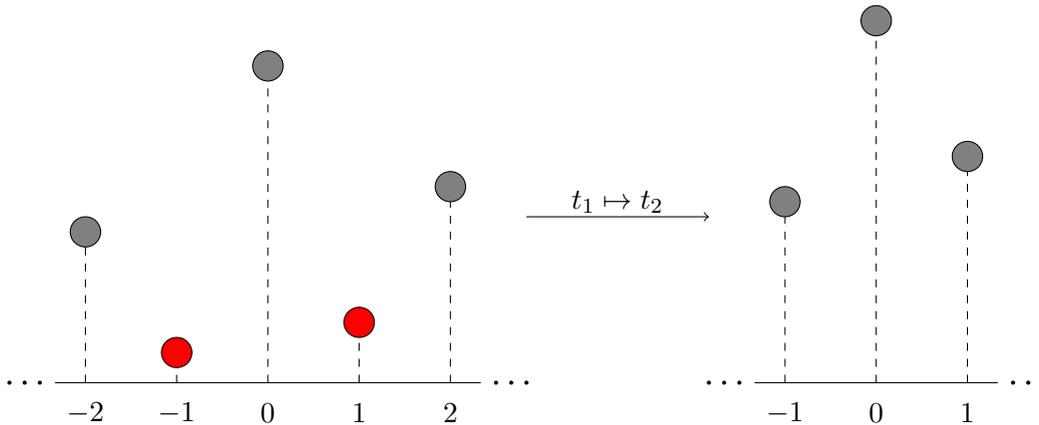
\begin{figure}
\begin{center}
\begin{tikzpicture}[scale = 0.4]
\draw (-20.5,-10) -- (-6.5,-10);
\draw[dashed] (-19.5,-10) -- (-19.5,-5.5);
\draw[dashed] (-16.5,-10) -- (-16.5,-9.5);
\draw[dashed] (-13.5,-10) -- (-13.5,0);
\draw[dashed] (-10.5,-10) -- (-10.5,-8.5);
\draw[dashed] (-7.5,-10) -- (-7.5,-4);

\draw[fill = gray]  (-19.5,-5) circle (0.5);
\draw[fill = red]  (-16.5,-9) circle (0.5);
\draw[fill = gray]  (-13.5,0.5) circle (0.5);
\draw[fill = red]  (-10.5,-8) circle (0.5);
\draw[fill = gray]  (-7.5,-3.5) circle (0.5);

\draw[fill = black]  (-6,-10) circle (0.05);
\draw[fill = black]  (-5.5,-10) circle (0.05);
\draw[fill = black]  (-5,-10) circle (0.05);
\draw[fill = black]  (-21,-10) circle (0.05);
\draw[fill = black]  (-21.5,-10) circle (0.05);
\draw[fill = black]  (-22,-10) circle (0.05);

\node at (-19.5,-11) {$ -2$};
\node at (-16.5,-11) {$ -1$};
\node at (-13.5,-11) {$ 0$};
\node at (-10.5,-11) {$ 1$};
\node at (-7.5,-11) {$ 2$};


\node at (-2,-4) {$t_1 \mapsto t_2$};
\draw[->] (-5,-4.5) -- (1,-4.5);

\draw[fill = black]  (1,-10) circle (0.05);
\draw[fill = black]  (1.5,-10) circle (0.05);
\draw[fill = black]  (2,-10) circle (0.05);

\draw (2.5,-10) -- (10.5,-10);
\draw[dashed] (3.5,-10) -- (3.5,-4.5);
\draw[dashed] (6.5,-10) -- (6.5,1.5);
\draw[dashed] (9.5,-10) -- (9.5,-3);
\draw[fill = gray]  (3.5,-4) circle (0.5);
\draw[fill = gray]  (6.5,2) circle (0.5);
\draw[fill = gray]  (9.5,-2.5) circle (0.5);

\draw[fill = black]  (11,-10) circle (0.05);
\draw[fill = black]  (11.5,-10) circle (0.05);
\draw[fill = black]  (12,-10) circle (0.05);

\node at (3.5,-11) {$-1$};
\node at (6.5,-11) {$ 0$};
\node at (9.5,-11) {$ 1$};
%
\end{tikzpicture}
\end{center}
\caption{Small particles vanish at $t_2$ and the average mass increases.}
\end{figure}

\noindent With this particular choice of the flux function (except when $\beta = 1$), the equation has an invariant scaling: If $x = x(t,k)$ is a solution, then 

\begin{align}
x_\lambda(t,k) = \lambda^\frac{1}{\beta - 1}x(\lambda t,k)
\end{align}

\noindent is another solution. Thus, if $\langle x \rangle$ denotes a suitable lenght-scale, we expect that

\begin{align} \label{scaling_law}
\langle x \rangle \sim t^\frac{1}{1-\beta}.
\end{align}

\noindent In the case $\beta = 1$ the mean-field analysis in \cite{HellenOK2002} indicates that $\langle x \rangle \sim \exp(\lambda t)$, where $\lambda$ is not universal but depends on the initial distribution. \newline

\noindent The problem in the mathematical analysis of such models is to rigorously establish such coarsening rates. The method of Kohn and Otto \cite{KohnO2002} has proved very useful in several situations to obtain upper bounds. Here, $\langle x \rangle$ is usually some negative Sobolev norm. In our setting however, their technique is not obviously applicable. Although the system is formally an $H^{-1}$ gradient flow, the corresponding energy is infinite due to the presence of infinitely many particles. Additionally, the vanishing and relabeling of particles is problematic in this context. Still, the simple structure of our model enables us to apply more elementary arguments to derive upper bounds:  For positive $\beta$, the right hand side of equation (\ref{mass_transfer_equation}) can be estimated to obtain

\begin{align}
\dot{x} \leq 2x^\beta,
\end{align}

\noindent which can be integrated to yield the desired bound in the $\ell^\infty$-norm. For negative $\beta$ the equation gives $\dot{x} \geq -2x^\beta$, which can be used to derive a weak upper bound, see Proposition 2.4 in \cite{HelmersNV2016}. Furthermore, the numerical simulations and heuristics in \cite{HelmersNV2016} demonstrate that single particles can grow linearly (thus faster than the scaling law) in time, showing that an $\ell^\infty$-bound cannot be expected in this case. \newline

\noindent On the other hand, not much is known about the validity of lower bounds. As will be demonstrated below, there are many non-constant initial configurations which become stationary after a finite time due to the vanishing of particles. An easy example for this is a 2-periodic configuration of large and small particles. During the evolution, the large particles grow and the small particles shrink until disappearing at the same time, at which all large particles will be left with the same size and the evolution stops.

\begin{figure}
\begin{center}
\begin{tikzpicture}[scale = 0.4]
\draw (-20.5,-10) -- (2.5,-10);
\draw[dashed] (-19.5,-10) -- (-19.5,-7);
\draw[dashed] (-16.5,-10) -- (-16.5,-4);
\draw[dashed] (-13.5,-10) -- (-13.5,-7);
\draw[dashed] (-10.5,-10) -- (-10.5,-4);
\draw[dashed] (-7.5,-10) -- (-7.5,-7);
\draw[dashed] (-4.5,-10) -- (-4.5,-4);
\draw[dashed] (-1.5,-10) -- (-1.5,-7);
\draw[dashed] (1.5,-10) -- (1.5,-4);
\draw[fill = red]  (-19.5,-6.5) circle (0.5);
\draw[fill = gray]  (-16.5,-3.5) circle (0.5);
\draw[fill = red]  (-13.5,-6.5) circle (0.5);
\draw[fill = gray]  (-10.5,-3.5) circle (0.5);
\draw[fill = red]  (-7.5,-6.5) circle (0.5);
\draw[fill = gray]  (-4.5,-3.5) circle (0.5);
\draw[fill = red]  (-1.5,-6.5) circle (0.5);
\draw[fill = gray]  (1.5,-3.5) circle (0.5);
\draw[fill = black]  (3,-10) circle (0.05);
\draw[fill = black]  (3.5,-10) circle (0.05);
\draw[fill = black]  (4,-10) circle (0.05);
\draw[fill = black]  (-21,-10) circle (0.05);
\draw[fill = black]  (-21.5,-10) circle (0.05);
\draw[fill = black]  (-22,-10) circle (0.05);
\end{tikzpicture}
\end{center}
\caption{After the smaller particles have vanished, the configuration is constant.}
\end{figure}
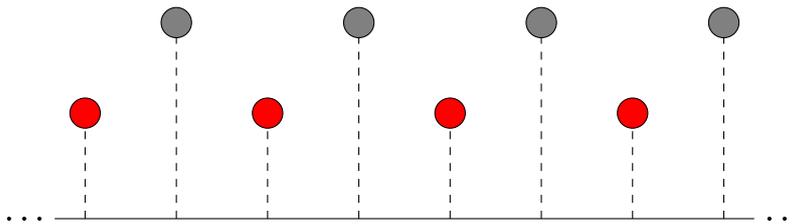

\noindent The problem of classifying all initial data for which some form of a lower coarsening bound holds is completely open. The main result of this work is the existence of initial data and corresponding solutions with scale-characteristic coarsening rates, where $\langle x \rangle$ is a suitable average of the configuration, see Theorem \ref{main_result}. Our general  ansatz is to reverse time, which transforms the equation into a non-linear discrete parabolic equation which behaves much better and can be analysed by means of Harnack-type positivity estimates (see \cite{BonforteV2006}) and parabolic regularity theory (see \cite{Nash1958} for the continuum theory and \cite{GiacominOS2001} for the discrete analogue). It should be mentioned that the solutions that we construct coarsen in a very organised manner, whereas numerical simulations and heuristics that were done in \cite{HelmersNV2016} indicate that the generic coarsening behaviour is more disorganised. Nevertheless we believe that this result is useful as a first step to a better understanding of the coarsening dynamics of this model. In particular, our methods imply the instability of constant configurations in a rather strong sense, see Corollary \ref{instability_result}. \newline

\noindent The rest of the paper is organised as follows: In Section 2 we give a precise description of the model and our main result. A general technique for the construction of solutions is presented in Section 3, while the main result is proved in Section 4. We postpone the proofs of some necessary technical results such as existence of solutions for the time-reversed setting, Harnack inequalities and discrete Nash-Aronson estimates to the appendix.

\section{Statement of results}

We collect some definitions and introduce appropriate notation to give a rigorous description of our setting. We largely follow \cite{HelmersNV2016}, where this model was first introduced in a mathematical context. Then we state our main result and give a short outline of the proof.

\subsection{Setup and notation}

We consider a discrete infinite number of particles with non-negative mass on a one-dimensional lattice. That means each configuration is an element of the space 

\begin{align}
\ell_+^\infty(\Z) = \left \{ x = x(k) \in \ell^\infty(\Z): \ x(k) \geq 0   \right \}.
\end{align}

\noindent As described above, particles with zero mass will be deleted from the system during the evolution. However, relabeling the particle indices whenever a particle vanishes can be problematic. On the one hand, relabeling can be ambiguous, for example the vanishing times might not be in order or could have accumulation points. On the other hand the solution will be discontinuous in time. Thus it is more convenient to leave the configuration unchanged and update the interaction term on the right-hand side of equation (\ref{mass_transfer_equation}) instead. For this purpose we define the the \textit{nearest living neighbour indices} 

\begin{align}
\sigma_+(x,k) &= \inf \{l > k: x(l) > 0 \}, \\
\sigma_-(x,k) &= \sup \{l < k: x(l) > 0 \},
\end{align}

\noindent where we just write $\sigma_{\pm}(k)$ if there is no danger of confusion. Also we define the ordinary discrete Laplacian $\Delta$ and the \textit{living particles Laplacian} $\Delta_\sigma$ as

\begin{align}
\Delta x &= x(k-1) - 2x(k) + x(k), \\
\Delta_\sigma x (k) &= \left( x(\sigma_-(k)) - 2x(k) + x(\sigma_+(k)) \right) \cdot \chi_{\{x(k) \neq 0 \}}.
\end{align} 

\noindent Then the evolution of the system is governed by the following equation:

\begin{align} \label{coarsening_equation}
\begin{cases}
\del_t x = \Delta_\sigma F_\beta(x) \quad \mathrm{in} \ (0,\infty) \times \Z, \\
x(0,\cdot) = x_0, 
\end{cases}
\end{align}

\noindent with $x_0 \in \ell_+^\infty$ and

\begin{align} 
F_\beta(x) = - \frac{\beta}{|\beta|}x^\beta,
\end{align}

\noindent with $F_\beta(0) := 0$ for $ \beta < 0$. The only drawback is that the right-hand side of (\ref{coarsening_equation}) is no longer continuous, hence we have to use a concept of mild solutions, as in \cite{HelmersNV2016}: \newline

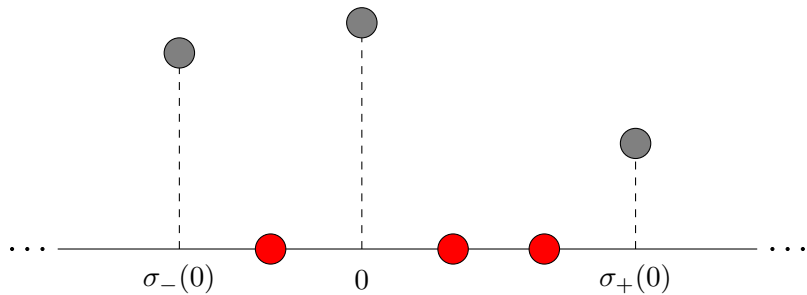
\begin{figure}
\begin{center}
\begin{tikzpicture}[scale = 0.4]
\draw (-20.5,-10) -- (2.5,-10);
\draw[dashed] (-16.5,-10) -- (-16.5,-4);
\draw[dashed] (-10.5,-10) -- (-10.5,-3);
\draw[dashed] (-1.5,-10) -- (-1.5,-7);
\draw[fill = gray]  (-16.5,-3.5) circle (0.5);
\draw[fill = red]  (-13.5,-10) circle (0.5);
\draw[fill = gray]  (-10.5,-2.5) circle (0.5);
\draw[fill = red]  (-7.5,-10) circle (0.5);
\draw[fill = red]  (-4.5,-10) circle (0.5);
\draw[fill = gray]  (-1.5,-6.5) circle (0.5);
\draw[fill = black]  (3,-10) circle (0.05);
\draw[fill = black]  (3.5,-10) circle (0.05);
\draw[fill = black]  (4,-10) circle (0.05);
\draw[fill = black]  (-21,-10) circle (0.05);
\draw[fill = black]  (-21.5,-10) circle (0.05);
\draw[fill = black]  (-22,-10) circle (0.05);

\node at (-10.5,-11) {$0$};
\node at (-1.5,-11) {$\sigma_+(0)$};
\node at (-16.5,-11) {$\sigma_-(0)$};
\end{tikzpicture}
\end{center}
\caption{Vanished particles remain in the physical domain, only neighbour relations $\sigma_+, \sigma_-$ change.}
\end{figure}

\begin{defn}
Let $ 0 < T \leq \infty$. We say that $x: [0,T) \to \ell_+^\infty(\Z)$ is a solution to problem (\ref{coarsening_equation}) if the following conditions are satisfied:

\begin{enumerate}
\item $t \mapsto x(t,k)$ is continuous on $[0,T)$ and $x(0,k) = x_0(k)$ for every $k \in \Z$.

\item $t \mapsto F_\beta(x)\cdot \chi_{\{x(k) \neq 0 \}}$ is locally integrable on $[0,T)$ for every $k \in \Z$.

\item For every $0 \leq t_1 < t_2 < T$ and every $k \in \Z$ we have 

\begin{align}
x(t_1,k) - x(t_2,k) = \int_{t_1}^{t_2} \Delta_\sigma F_\beta(x)(s,k) \ \rd s.
\end{align}
\end{enumerate}

\end{defn}

\noindent The second condition is automatically satisfied if $\beta$ is positive. For the existence of solutions we refer to \cite{HelmersNV2016}, where the case $\beta < 0$ is discussed. We expect a similar result to hold for positive $\beta$ but since we are only concerned with special solutions anyway we will give no proof here. More important for our result is the well-posedness of the time-reversed evolution

\begin{align} 
\del_t u &= \Delta G_\beta(u), 
\end{align}

\noindent with $G_\beta(u) = -F_\beta(u)$, which is the discrete analogue of a fast diffusion equation. This is addressed in the appendix, see Theorem \ref{appendix_existence_thm}. \newline

\noindent It is easy to check that the evolution (\ref{coarsening_equation}) conserves the average mass

\begin{align}
\langle x \rangle = \lim_{N \to \infty} \frac{1}{2N+1}\sum_{k=-N}^N x(k).
\end{align}

\noindent This is not really meaningful, since vanished and living particles are treated the same. To adequately measure the coarsening process, one has to average only over the living particles. Consequently we define

\begin{align}
L_{\sigma_+,N} &= \bigcup_{k = 1}^N \{ \sigma_+^k(0) \}, \\
L_{\sigma_-,N} &= \bigcup_{k = 1}^N \{ \sigma_-^k(0) \}, \\
\end{align}

\noindent as sets of the first $N$ positive, respectively negative living particle particle indices and set

\begin{align}
L_{\sigma,N} &= \begin{cases}
&L_{\sigma_+,N} \cup L_{\sigma_-,N},\  \mathrm{if} \ x(0) = 0, \\
&L_{\sigma_+,N} \cup L_{\sigma_-,N} \cup {0}, \  \mathrm{if} \ x(0) > 0.
\end{cases}
\end{align}

\noindent Then we can define the \textit{living particle means} 

\begin{align}
\langle x \rangle_{\sigma,N} &= \frac{1}{|L_{\sigma,N}|}\sum_{k \in L_{\sigma,N}} x(k), \\
\langle x \rangle_{\sigma}^+ &= \limsup_{N \to \infty} \ \langle x \rangle_{\sigma,N}, \\
\langle x \rangle_{\sigma}^- &= \liminf_{N \to \infty} \ \langle x \rangle_{\sigma,N}.
\end{align}

\noindent Since mass is transferred from small to large particles and the small particles eventually vanish, we expect the living particle means to grow in time.\newline

\subsection{Main result}

In the main result of the paper we show that there exist solutions where the average particle size grows with the characteristic rate that is indicated by scaling:

\begin{thm} \label{main_result}

Let $\beta \in (-\infty,0) \cup (0,1]$ and $F_\beta$ be defined as above. Then the following statements hold:

\begin{enumerate}
\item For every $\beta \in (-\infty,0) \cup (0,1)$ there exists $x_0 \in \ell_+^\infty(\Z)$ and a solution to equation (\ref{coarsening_equation}) with initial data $x_0$ that satisfies

\begin{align}
\langle x \rangle_{\sigma}^- &\gtrsim t^{\frac{1}{1-\beta}}, \\
||x||_\infty &\lesssim t^{\frac{1}{1-\beta}}.
\end{align}

\item For $\beta = 1$ there exists $ 0 < \lambda \leq 2$, $x_0 \in \ell_+^\infty(\Z)$ and a solution to equation (\ref{coarsening_equation}) with initial data $x_0$ that satisfies

\begin{align}
\langle x \rangle_{\sigma}^- &\gtrsim \exp(\lambda t), \\
||x||_\infty &\lesssim \exp(\lambda t).
\end{align}

\end{enumerate}

\noindent Here, $\gtrsim$ and $\lesssim$ mean that the corresponding inequalities hold up to a multiplicative constant that depends only on $\beta$.
\end{thm}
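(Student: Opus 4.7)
The plan is to construct the solutions claimed in the theorem by time reversal, using the well-posedness and positivity theory for the discrete fast diffusion equation supplied by the appendix. If $x$ is a coarsening solution on $[0,T]$ with no vanishing before $s = T$, then $u(t,k) := x(T-t,k)$ solves the well-posed equation $\partial_t u = \Delta G_\beta(u)$ with $G_\beta(u) = \frac{\beta}{|\beta|} u^\beta$; for every non-negative initial datum $u_0 \in \ell_+^\infty(\Z)$ this admits a unique solution by Theorem \ref{appendix_existence_thm}, and discrete Harnack regularity gives strict positivity for $t > 0$. For each horizon $T > 0$, I would pick $u_0^{(T)} \in \ell_+^\infty(\Z)$ whose zero set is a sparse but positive-density subset of $\Z$ (the target vanishing pattern for $x$) and whose amplitude is tuned to produce a coarsening solution $x^{(T)}(s,k) := u^{(T)}(T-s,k)$ on $[0,T]$ with the desired rates. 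Since $u^{(T)}(t,k) > 0$ for all $t > 0$, no particle of $x^{(T)}$ vanishes before $s = T$, hence $\Delta_\sigma = \Delta$ on $[0,T)$, and $x^{(T)}$ really is a mild solution of \eqref{coarsening_equation}.

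The upper bound $\|x^{(T)}(s,\cdot)\|_\infty \lesssim s^{1/(1-\beta)}$ for $\beta \in (0,1)$ then follows by integrating the pointwise inequality $\partial_t x(k) \le 2 x(k)^\beta$ sketched in the introduction; for $\beta < 0$ a similar estimate, now relying on the specific structure of our construction, shows that the generic $L^\infty$-unboundedness of Proposition~2.4 in \cite{HelmersNV2016} cannot occur for these $x$; for $\beta = 1$ the inequality $\partial_t x \le 2 x$ directly yields the exponential cap $\lambda \le 2$. The matching lower bound $\langle x^{(T)}(s,\cdot)\rangle_\sigma^- \gtrsim s^{1/(1-\beta)}$ is the more delicate statement and comes from the discrete Harnack inequality and the Nash-Aronson-type on-diagonal estimates of the appendix (following \cite{BonforteV2006,GiacominOS2001}): these force $u^{(T)}(t,k) \gtrsim t^{1/(1-\beta)}$ on a set of lattice sites of density bounded below uniformly in $T$, and after time reversal this density persists on the living sites of $x^{(T)}$ at time $s$, so that averaging over living particles captures the rate.

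To globalize to a solution defined on all of $[0,\infty)$ I would let $T_n \to \infty$ and exploit both the scaling invariance $x_\lambda(s,k) = \lambda^{1/(\beta-1)} x(\lambda s, k)$ of \eqref{coarsening_equation}, which fixes the form of the quantitative bounds up to $T$-independent constants, and Arzelà-Ascoli compactness pointwise in $k \in \Z$ and locally uniformly in $s$, extracting a subsequential limit $x$ with initial datum $x_0 := \lim_n x^{(T_n)}(0,\cdot) \in \ell_+^\infty$. The uniform-in-$T$ quantitative bounds descend to the limit, and stability of the mild-solution formulation under pointwise convergence ensures that $x$ solves \eqref{coarsening_equation}. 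For $\beta = 1$ the scheme specializes naturally because the time-reversed equation is the discrete linear heat equation, and explicit exponential profiles $u(t,k) = e^{\lambda t} f(k)$ with $-\Delta f = \lambda f$ and $f \ge 0$ furnish the rate $\langle x \rangle_\sigma^- \gtrsim e^{\lambda s}$.

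The hardest step will be obtaining the matching lower bound on $\langle x^{(T)}\rangle_\sigma^-$ with constants uniform in the horizon. This forces a careful calibration of $u_0^{(T)}$ so that simultaneously (i) the sparse pattern of zeros yields a positive living-particle density for $x^{(T)}(T,\cdot)$ bounded below uniformly in $T$, so that $\langle \cdot \rangle_\sigma^-$ remains meaningful in the limit; (ii) the discrete Harnack and Nash-Aronson estimates of the appendix apply with constants independent of $T$, so that the lower bound $u^{(T)}(t,k) \gtrsim t^{1/(1-\beta)}$ propagates into the limit; and (iii) the amplitudes of $u_0^{(T)}$ are self-consistent with the scaling $T^{1/(1-\beta)}$. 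Reconciling these constraints is where the regularity machinery of the appendix is used most critically, and is what forces the "organized" character of the resulting solutions noted in the introduction.
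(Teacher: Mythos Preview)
Your proposal has a genuine structural gap. You run the backward diffusion \emph{once} from a datum $u_0^{(T)}$ with a sparse zero set, producing $x^{(T)}(s,k)=u^{(T)}(T-s,k)$ with a single vanishing event at $s=T$. But then on $[0,T)$ every particle is alive, so $\langle x^{(T)}(s)\rangle_\sigma^-$ coincides with the ordinary average $\langle x^{(T)}(s)\rangle$, and this is \emph{conserved} by $\partial_t u=\Delta G_\beta(u)$; hence $\langle x^{(T)}(s)\rangle_\sigma^-$ is constant on $[0,T)$ and cannot grow like $s^{1/(1-\beta)}$. The Harnack bound you invoke points the wrong way after reversal: $u^{(T)}(t,k)\gtrsim t^{1/(1-\beta)}$ becomes $x^{(T)}(s,k)\gtrsim (T-s)^{1/(1-\beta)}$, i.e.\ decay in $s$. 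Passing to the limit $T_n\to\infty$ does not rescue this: if the amplitude of $u_0^{(T_n)}$ scales like $T_n^{1/(1-\beta)}$ then $x^{(T_n)}(0,\cdot)=u^{(T_n)}(T_n,\cdot)$ blows up by that same Harnack bound and no subsequence converges; if the amplitude stays bounded then so does $x^{(T_n)}$ uniformly, and the limit still has no vanishings and a constant living-particle mean.

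What is missing is the \emph{iterated} construction with particle insertions at a geometric sequence of times. The paper fixes $\theta>1$, starts from constant terminal data $\theta^n$, and in each of $n$ steps (i) applies a creation operator $\Psi_*$ inserting zero-mass particles so that the \emph{local averages} of the configuration drop to roughly half the current scale, then (ii) runs the backward diffusion for a time $\sim\theta^{(1-\beta)(n+1-j)}$ so that the Nash--Aronson/H\"older theory of the appendix turns the averaged bound into a pointwise one (this is the Key Lemma~\ref{key_lemma}). The resulting $x^{(n)}$ has vanishing events at the $n$-independent times $t_j=T\sum_{m=1}^{j}\theta^{(1-\beta)m}$ with $\tfrac12\theta^j\le x^{(n)}(t_j,k)\le\theta^j$ on living sites; it is this infinite cascade of vanishings, each removing a positive fraction of particles, that drives the growth of $\langle x\rangle_\sigma^-$. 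A single backward run cannot manufacture this mechanism.
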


\noindent In the following we give a short outline of the proof. The key observation is that the time-reversed system corresponding to equation (\ref{coarsening_equation}) is a non-linear parabolic equation where particles are inserted instead of vanishing, which is much easier to handle. Thus the idea is to make a more or less explicit construction in the time-reversed setting and then reverse time again to obtain a sequence of approximate solutions $x^{(n)}$ which solve (\ref{coarsening_equation}) and eventually converge to a solution with the desired properties. Each solution $x^{(n)}$ will be constructed in $n$ steps, starting in the future time $T_n$ (with $T_n \to \infty$), where the particle sizes are of order $\theta^{n}$ for some $\theta > 1$. We then insert particles to lower the average particle size to order $\theta^{n-1}$ and run the time-reversed evolution, equilibrating the system until all particle sizes are of order $\theta^{n-1}$. The procedure is then iterated, going from sizes of order $\theta^{n+1-j}$ to $\theta^{n-j}$, until after $n$ steps all particles sizes are of order one. A suitable compactness argument for $n \to \infty$ then yields a solution $x$ on $[0,\infty)$ to equation (\ref{coarsening_equation}). \newline

\noindent In order to achieve the desired coarsening rate the time-span to equilibrate in the $j$-th step has to be of order $\theta^{(1-\beta)(n+1-j)}$, which is a-priori not clear. Due to scaling however, every step is equivalent to the problem of inserting particles into a configuration $u_0$ of order one (denoted by $u_0 \mapsto \Psi_* u_0$) such that after evolving the system under the backward equation for a uniform timespan $T$ the particles are of order $\theta^{-1}$. More precisely, we will prove the following result, which is the heart of the argument:

\begin{lem}[Key Lemma]\label{key_lemma}
Let $\beta \in (-\infty,0) \cup (0,1]$ and $G_\beta = -F_\beta$. Then for every $\varepsilon > 0$ there exists $T=T(\beta,\varepsilon) > 0$, such that the following holds: For every $u_0 \in \ell_+^\infty(\Z)$ with $\frac{1}{2} \leq u_0 \leq 1$ there exists a creation operator $\Psi_*$ and a solution $u$ of the equation

\begin{align}\label{backward_equation}
\begin{cases}
\del_t u = \Delta G_\beta(u) \quad \mathrm{in} \ (0,\infty) \times \Z,  \\
u(0,\cdot) = \Psi_*u_0,
\end{cases}
\end{align}

\noindent that satisfies

\begin{align}
\left| u(T,.) - \frac{1}{2} \right| \leq \varepsilon.
\end{align}

\end{lem}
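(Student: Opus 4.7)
The plan is to choose the creation operator $\Psi_*$ so that the new configuration has a built-in local mean of $\tfrac{1}{2}$, and then to exploit the parabolic smoothing of the backward equation (via the Harnack and Nash--Aronson estimates stated in the appendix) to show that $u(T,\cdot)$ converges uniformly to $\tfrac{1}{2}$ at a rate depending only on $\beta$.

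\textbf{Construction of $\Psi_*$.} Given $u_0$ with $\tfrac{1}{2} \le u_0 \le 1$, I would double the lattice and set $v_0(2k) = u_0(k)$ and $v_0(2k+1) = 1 - u_0(k) \in [0,\tfrac{1}{2}]$, so that consecutive pairs of values sum exactly to $1$. The doubled configuration $v_0 = \Psi_* u_0$ then lies in $[0,1]$ and has Cesàro average $\tfrac{1}{2}$ on every symmetric interval with discrepancy $O(1/N)$. Theorem \ref{appendix_existence_thm} supplies a global solution $u$ of the backward equation with $u(0,\cdot) = v_0$, and the discrete maximum principle confines $u(t,\cdot)$ to $[0,1]$ for all $t$.

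\textbf{Equilibration.} Since $\sup_k v_0(k) \ge \tfrac{1}{2}$ (inherited from $u_0 \ge \tfrac{1}{2}$), the Harnack-type positivity estimate of the appendix gives a uniform lower bound $u(t_0,\cdot) \ge c_0(\beta) > 0$ at some fixed time $t_0 = t_0(\beta)$. Beyond $t_0$ the solution lies in $[c_0,1]$ where $G_\beta$ is smooth, and the difference $\phi = u - \tfrac{1}{2}$ satisfies a linear equation $\del_t \phi = \Delta(a(t,k)\phi)$ with ellipticity constants depending only on $\beta$. Applying the discrete Nash--Aronson estimate then yields a heat-kernel representation $\phi(T,k) = \sum_j p(t_0,T;k,j)\phi(t_0,j)$ with Gaussian bounds on scale $\sqrt{T}$; pairing consecutive terms using the cancellation $v_0(2\ell) + v_0(2\ell+1) = 1$ should give $|\phi(T,k)| \lesssim T^{-1/2}$ uniformly in $k$ and in $u_0$, so choosing $T = T(\beta,\varepsilon)$ large enough finishes the argument.

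\textbf{Main obstacle.} The delicate point is that the exact pair cancellation only holds at $t=0$; I need a quantitative version $|\phi(t_0, 2\ell) + \phi(t_0, 2\ell+1)| \le C(\beta)\eta$ at time $t_0$, which requires controlling how the short initial window $[0,t_0]$---on which the equation is not yet uniformly parabolic because $v_0$ vanishes wherever $u_0 = 1$---distorts the original parity. Combining the appendix's Harnack estimate with the parabolic regularity theory cited in the introduction should suffice, but keeping all constants dependent only on $\beta$ is the main technical hurdle. If this becomes unwieldy, a fallback is to forgo the explicit pair structure and instead run a discrete oscillation decay argument directly on $\phi$, identifying the constant limit as $\tfrac{1}{2}$ through a separate conservation-of-average argument for the backward flow.
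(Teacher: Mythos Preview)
Your construction of $\Psi_*$ is not admissible. In the paper a creation operator is defined (Definition~3.1) as a push-forward that inserts \emph{zero-mass} particles: $\Psi_* x(l) = 0$ for $l \notin \mathrm{Im}(\Psi)$. This is not a cosmetic restriction; the entire back-in-time scheme of Section~3 hinges on it, because in the forward coarsening equation particles are deleted precisely when their mass hits zero, so the time-reversed ``creation'' can only produce particles that start at zero. Your choice $v_0(2k+1) = 1 - u_0(k)$ inserts particles with mass in $[0,\tfrac12]$, which is generically positive, so $v_0$ is not of the form $\Psi_* u_0$ for any creation operator and the resulting $u$ cannot be fed into the construction of Section~3.2. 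The exact pair cancellation $v_0(2\ell)+v_0(2\ell+1)=1$, which is the engine of your equilibration argument, is therefore unavailable.

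Once you are forced to insert only zeros, the local average of $\Psi_* u_0$ is no longer exactly $\tfrac12$: it depends on the local average of $u_0$, which can be anywhere in $[\tfrac12,1]$. The paper handles this by inserting a \emph{variable} number of zeros in each block so that all block averages land within $\varepsilon$ of $\tfrac12$ (Lemma~\ref{particle_insertion_lemma}), and then proves a long-time diffusivity lemma (Lemma~\ref{long_time_diffusivity_lemma}) showing that a uniformly parabolic equation converts approximate control of local averages into pointwise control after a time depending only on the ellipticity constants and $N_0$. The short-time Harnack bound (Lemma~\ref{backward_equation_short_time_estimate}) is used, as you anticipated, to pass from the degenerate initial layer to a strictly positive configuration. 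Your fallback ``oscillation decay plus conservation of average'' idea points in the right direction but is not sharp enough: the global average $\langle \Psi_* u_0\rangle$ need not be $\tfrac12$ and is not conserved pointwise, so you really need the local-average mechanism the paper sets up.
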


\begin{figure}
\begin{center}
\begin{tikzpicture}[scale = 0.4]
\draw (-20.5,-10) -- (-12.5,-10);
\draw[dashed] (-19.5,-10) -- (-19.5,-3.5);
\draw[dashed] (-17.5,-10) -- (-17.5,-5);
\draw[dashed] (-15.5,-10) -- (-15.5,-0.5);
\draw[dashed] (-13.5,-10) -- (-13.5,-5.5);

\draw[orange,dashed] (-20.5,0) -- (9.5,0);
\draw[orange,dashed] (-20.5,-5) -- (9.5,-5);

\node at (-22.5,0) {$\theta^{n+1-j}$};
\node at (-22.5,-5) {$\frac{1}{2}\theta^{n+1-j}$};

\draw[fill = gray]  (-19.5,-3) circle (0.5);
\draw[fill = gray]  (-17.5,-4.5) circle (0.5);
\draw[fill = gray]  (-15.5,0) circle (0.5);
\draw[fill = gray]  (-13.5,-5) circle (0.5);

\draw[fill = black]  (-12,-10) circle (0.05);
\draw[fill = black]  (-11.5,-10) circle (0.05);
\draw[fill = black]  (-11,-10) circle (0.05);
\draw[fill = black]  (-21,-10) circle (0.05);
\draw[fill = black]  (-21.5,-10) circle (0.05);
\draw[fill = black]  (-22,-10) circle (0.05);

\node at (-9,-3.5) {$\Psi_*$};
\draw[->] (-11,-4.5) -- (-7,-4.5);

\draw (-6.5,-10) -- (9.5,-10);
\draw[dashed] (-5.5,-10) -- (-5.5,-3.5);
\draw[dashed] (-1.5,-10) -- (-1.5,-5);
\draw[dashed] (2.5,-10) -- (2.5,-0.5);
\draw[dashed] (8.5,-10) -- (8.5,-5.5);

\draw[fill = gray]  (-5.5,-3) circle (0.5);
\draw[fill = red]  (-3.5,-10) circle (0.5);
\draw[fill = gray]  (-1.5,-4.5) circle (0.5);
\draw[fill = red]  (0.5,-10) circle (0.5);
\draw[fill = gray]  (2.5,0) circle (0.5);
\draw[fill = red]  (4.5,-10) circle (0.5);
\draw[fill = red]  (6.5,-10) circle (0.5);
\draw[fill = gray]  (8.5,-5) circle (0.5);

\draw[fill = black]  (-8,-10) circle (0.05);
\draw[fill = black]  (-7.5,-10) circle (0.05);
\draw[fill = black]  (-7,-10) circle (0.05);
\draw[fill = black]  (10,-10) circle (0.05);
\draw[fill = black]  (10.5,-10) circle (0.05);
\draw[fill = black]  (11,-10) circle (0.05);

\node (v1) at (7.5,-12.5) {};
\node (v2) at (-0.5,-17) {};
\draw[->]  (v1) edge (v2);

\draw (-20.5,-21.5) -- (-4.5,-21.5);

\draw[orange,dashed] (-20.5,-15.5) -- (-4.5,-15.5);
\draw [orange,dashed](-20.5,-18.5) -- (-4.5,-18.5);

\draw[dashed] (-19.5,-21.5) -- (-19.5,-16.5);
\draw[dashed] (-17.5,-21.5) -- (-17.5,-19);
\draw[dashed] (-15.5,-21.5) -- (-15.5,-17);
\draw[dashed] (-13.5,-21.5) -- (-13.5,-18.5);
\draw[dashed] (-11.5,-21.5) -- (-11.5,-16);
\draw[dashed] (-9.5,-21.5) -- (-9.5,-17.5);
\draw[dashed] (-7.5,-21.5) -- (-7.5,-18);
\draw[dashed] (-5.5,-21.5) -- (-5.5,-17);

\draw[fill = gray]  (-19.5,-16) circle (0.5);
\draw[fill = gray]  (-17.5,-18.5) circle (0.5);
\draw[fill = gray]  (-15.5,-16.5) circle (0.5);
\draw[fill = gray]  (-13.5,-18) circle (0.5);
\draw[fill = gray]  (-11.5,-15.5) circle (0.5);
\draw[fill = gray]  (-9.5,-17) circle (0.5);
\draw[fill = gray]  (-7.5,-17.5) circle (0.5);
\draw[fill = gray]  (-5.5,-16.5) circle (0.5);

\draw[fill = black]  (-22,-21.5) circle (0.05);
\draw[fill = black]  (-21.5,-21.5) circle (0.05);
\draw[fill = black]  (-21,-21.5) circle (0.05);
\draw[fill = black]  (-3.5,-21.5) circle (0.05);
\draw[fill = black]  (-3,-21.5) circle (0.05);
\draw[fill = black]  (-2.5,-21.5) circle (0.05);

\node at (7.5,-15.5) {$\Delta t \sim \theta^{(1-\beta)(n+1-j)}$};
\node at (-22.5,-15.5) {$\theta^{n-j}$};
\node at (-22.5,-18.5) {$\frac{1}{2}\theta^{n-j}$};
\end{tikzpicture}
\end{center}
\caption{The $j$-th step in the back-in-time construction.}
\end{figure}
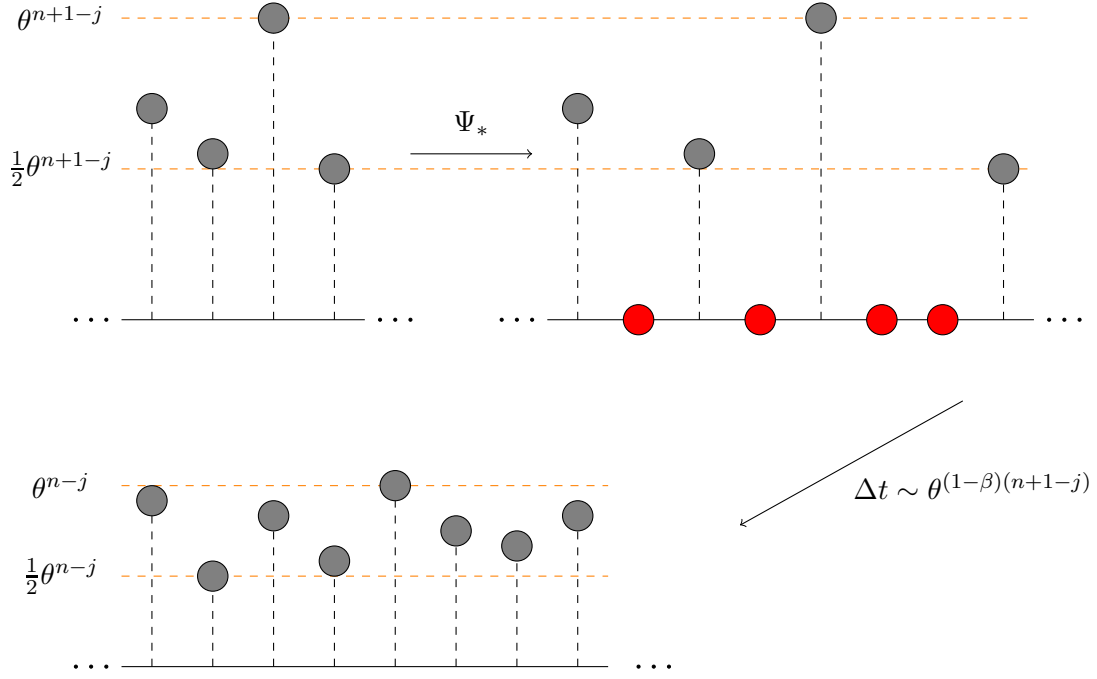

\noindent The precise meaning of $\Psi_* u_0$ will be explained in the next section. In particular, if we set $\theta^{-1} = 1/2 + \varepsilon$ with $\varepsilon \leq 1/6$ then $u$ will satisfy the desired estimate

\begin{align}
\frac{1}{2}\theta^{-1} \leq u(T,.) \leq \theta^{-1}.
\end{align}

\noindent The main idea to prove the lemma is to insert particles such that $1/2 - \varepsilon \leq \Psi_*u_0 \leq 1/2 + \varepsilon$ holds in an averaged sense. Since the backward equation is a diffusion, it is expected that the system equilibrates and average-wise estimates eventually induce point-wise estimates after a certain timespan, see Lemma \ref{long_time_diffusivity_lemma}. Note that due to the freedom of choice in the parameter $\varepsilon$, the back-in-time construction can generate initial data that are arbitrarily flat, demonstrating the instability of constant data:

\begin{cor}\label{instability_result}
Let $c > 0$. Then for every $\varepsilon > 0$ there exist initial data $x_0$ as in Theorem \ref{main_result} such that 

\begin{align}
||x_0 - c||_\infty \leq \varepsilon.
\end{align}

\end{cor}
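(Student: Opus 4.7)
The plan is to combine the freedom in choosing the tolerance $\varepsilon$ in Key Lemma \ref{key_lemma} with the scale invariance of equation (\ref{coarsening_equation}). First, I would carry out the back-in-time construction of Theorem \ref{main_result} using the Key Lemma with a uniform tolerance $\varepsilon_0 \le 1/6$ at every step. After the last equilibration, which lands at $t=0$, the Key Lemma applied with target $\tfrac{1}{2}$ forces
\[
\bigl| x^{(n)}(0,k) - \tfrac{1}{2} \bigr| \le \varepsilon_0
\]
uniformly in $n$ and $k \in \Z$, and this bound survives the compactness argument that extracts a limit $x$ with initial data satisfying $\|x_0 - \tfrac{1}{2}\|_\infty \le \varepsilon_0$.

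To accommodate an arbitrary $c > 0$ I would rescale using the symmetry from the introduction. For $\beta \in (-\infty,0) \cup (0,1)$, set
\[
\tilde{x}(t,k) := 2c \, x\bigl((2c)^{\beta-1} t, k\bigr),
\]
which is again a solution of (\ref{coarsening_equation}) with initial data $\tilde{x}_0 = 2c \, x_0$. Hence $\|\tilde{x}_0 - c\|_\infty \le 2c\,\varepsilon_0$, and the choice $\varepsilon_0 = \min\{1/6, \,\varepsilon/(2c)\}$ yields $\|\tilde{x}_0 - c\|_\infty \le \varepsilon$. The bounds of Theorem \ref{main_result} survive because the scaling factors in amplitude and time cancel:
\[
\langle \tilde{x} \rangle_\sigma^-(t) = 2c \, \langle x \rangle_\sigma^-\bigl((2c)^{\beta-1} t\bigr) \gtrsim 2c \cdot \bigl((2c)^{\beta-1} t\bigr)^{1/(1-\beta)} = t^{1/(1-\beta)},
\]
and similarly for the $\ell^\infty$ bound. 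For $\beta = 1$ the equation is linear, so $\tilde{x}(t,k) := 2c \, x(t,k)$ is already a solution, with the exponential rate $\lambda$ unchanged.

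The main thing requiring care is the tracking of constants: decreasing $\varepsilon_0$ increases the Key Lemma's equilibration time $T(\beta,\varepsilon_0)$ and hence inflates the implicit constants in the coarsening estimates, so those constants depend on $\beta$, $c$, and $\varepsilon$ rather than on $\beta$ alone. This is still consistent with the corollary as stated, whose purpose is only to exhibit initial data arbitrarily close to an arbitrary constant in $\ell^\infty$. All other ingredients — iterative application of Key Lemma \ref{key_lemma}, the compactness extraction, and the backward existence theory — are identical to those used in the proof of Theorem \ref{main_result}, so no new input is required.
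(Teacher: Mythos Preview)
Your overall strategy---shrinking the Key Lemma tolerance and then rescaling---is valid and differs from the paper's, but there is a computational slip. In Step~3 of the construction the $j$-th iteration applies Lemma~\ref{key_lemma} to $\theta^{-(n+1-j)}u^{(n,j-1)}(\tau_{n,j-1},\cdot)$ and then scales back by $\theta^{n+1-j}$. For the final step $j=n$ this factor is $\theta$, not $1$, so the output satisfies $|x^{(n)}(0,\cdot)-\theta/2|\le\theta\varepsilon_0$, not $|x^{(n)}(0,\cdot)-\tfrac12|\le\varepsilon_0$. Since $\theta/2=(1+2\varepsilon_0)^{-1}\to 1$ as $\varepsilon_0\to 0$, the initial data actually concentrates near $1$, and your rescaling by $2c$ then lands near $2c$ rather than $c$. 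The fix is trivial (rescale by $2c/\theta$, or simply aim for $c=1$ first), so the argument survives after this correction.

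The paper takes a cleaner route: it leaves the main construction unchanged (with its own fixed tolerance determining $\theta$) and appends one extra, \emph{unscaled} application of Lemma~\ref{key_lemma} to $u^{(n,n)}(\tau_{n,n},\cdot)\in[\tfrac12,1]$, with the target tolerance $\varepsilon$. This directly yields $|u^{(n,n+1)}(\tau_{n,n}+\tilde T,\cdot)-\tfrac12|\le\varepsilon$ with no scaling ambiguity. The advantage is that $\theta$ is decoupled from $\varepsilon$, so the implicit constants in the coarsening bounds remain those of Theorem~\ref{main_result} and do not degrade as $\varepsilon\to 0$---a drawback of your approach that you yourself flag.
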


\noindent Before proving these results we introduce the formalism $\Psi_*$ for the insertion of particles and explain the general construction of solutions to the coarsening equation. \newline

\section{Construction of solutions}

The general idea to construct (local-in-time) solutions to equation (\ref{coarsening_equation}) is to choose some terminal data $x(T,\cdot)$ and go backward in time from there. The crucial observation is that the vanishing of particles corresponds to the \textit{creation} of particles if time is reversed. Additionally, since the living particles do not carry any information of the vanished particles in the forward-in-time equation, new particles can be created at arbitrary times $\tau_j$ and positions $\{\Psi_*^{(j)} \}$ (for notation see below) in the backward equation. This gives the necessary freedom to construct solutions with desirable properties. Summarizing the above considerations, each data triple $(x(T,\cdot),\{ \tau_j \}, \{\Psi_*^{(j)} \})$ gives rise to a solution of equation (\ref{coarsening_equation}) on the interval $[0,T]$. In the following section we carry this out in detail. 

\subsection{Insertion of particles}

First we fix the notation for the insertion of particles. Basically, we need a precise way to insert zeroes into a given sequence of numbers. The most practical way to do this is via push-forward of a suitable increasing map $\Psi: \Z \to \Z$. This map can be defined by the corresponding sequence of "jumps". We make the following definition:

\begin{defn}
Let $d:\Z \to \N_0$ be a given sequence of jumps. Then define the corresponding deformation as

\begin{align}
&\Psi: \Z \to \Z \\
&\Psi(k) = k + \sum_{m=0}^k d(m).
\end{align} 

\noindent Now for every $x \in \ell_+^\infty(\Z)$ we define the push-forward sequence $\Psi_* x$ as

\begin{align}
\Psi_* x(\Psi(k)) = x(k),
\end{align}

\noindent for all $k \in \Z$ and $\Psi_* x(l) = 0$ if $ l \notin \mathrm{Im}(\Psi)$.

\end{defn} 

\noindent With this definition, if $x$ represents a particle configuration, then $\Psi_* x$ represents the same configuration with new mass-zero particles created. To be more precise, the condition $d(k) = l$ exactly means that we are inserting $l$ new particles between the $k$-th and the $(k\pm 1)$-th particle, (depending on the sign of $k$). We will refer to the mapping $\Psi_*$ as \textit{particle creation operator} and, to keep notation as compact as possible, not explicitly refer to the deformation $\Psi$ or the specific jump sequence $d$ any more, but rather just state where particles are inserted. This is potentially ambiguous, for instance, "creating a particle between each two living particles" can be achieved by different $d$, potentially translating the original living particles. However, in the following sections these ambiguities do not affect the arguments, hence we will ignore them. 

\subsection{Back-in-time construction}

Next we describe how to obtain a solution from a given terminal configuration $x_{\mathrm{ter}}$, an increasing sequence of vanishing/creation times $\{ \tau_j \}_{j=1,..,n}$ and corresponding creation operators $\{\Psi_*^{(j)} \}_{j=1,..,n} $. We define the solution piecewise by iteratively using the backward evolution (\ref{backward_equation}) on $[\tau_{j-1},\tau_j]$ after inserting particles at $t = \tau_{j-1}$ and continuing the procedure. To be precise, we define $u^{(j)}$ on the interval $[\tau_{j-1},\tau_j]$ to be a solution of the following problem:

\begin{align} \label{iteration_step}
&\begin{cases}
\del_t u^{(j)} = \Delta G_\beta(u^{(j)}) \ \mathrm{in} \ (\tau_{j-1}, \tau_j] \times \Z, \\
u^{(j)}(\tau_{j-1}) = \Psi_*^{(j)}\left[u^{(j-1)}(\tau_{j-1})\right],
\end{cases}
\end{align}

\noindent for $j=1,..,n$, with $\tau_0 := 0$, $u^{(0)}(\tau_0) := x_{\mathrm{ter}}$ and $G_\beta = -F_\beta$. We should note that by a solution we mean a classical solution, i.e $u^{(j)} \in C^0([\tau_{j-1},\infty),\ell_+^\infty(\Z))$, for every $k \in \Z$ we have $u^{(j)}(.,k) \in C^1((\tau_{j-1},\infty))$ and the equation holds pointwise. Well-posedness of this problem is a-priori not clear, especially for the case $\beta < 0$. For the moment we just assume that the equation is solvable and focus on carrying out the construction of solutions to the coarsening equation. In Theorem \ref{appendix_existence_thm} we give a sufficient condition on the initial data for existence of solutions that is easily verified for the data considered in the next section. \newline

\noindent Reversing the time direction we obtain piecewise solutions of our original equation. However, one has to compose $u^{(j)}$ with the creation operators once more, since vanished particles remain in the "physical" domain in the original evolution (\ref{coarsening_equation}). To be more precise, we set

\begin{align} 
x^{(j)}(t) &= \left( \prod_{l = 1}^{j-1} \Psi_*^{(n+1-l)} \right) \left[u^{(n+1-j)}(\tau_{n} - t) \right], 
\end{align}

\noindent which lets us glue the solutions together in a continuous way:

\begin{align}
x(t) &= x^{(j)}(t), \ \mathrm{if} \ t \in [\tau_n - \tau_{n+1-j},\tau_n - \tau_{n-j} ), 
\end{align}

\noindent for $j=1,..,n$. Using $u^{(j)}(\tau_{j-1}) = \Psi_*^{(j)}\left[u^{(j-1)}(\tau_{j-1})\right]$ it is easy to check that $x$ defined this way is continuous in time. The next lemma shows that $x$ is indeed a solution to our original equation:

\begin{lem}
Let $\Psi_*$ be a creation operator as above. Then we have

\begin{itemize}
\item[1.]$\sigma_{\pm}(\Psi_*x,\Psi(k)) = \Psi(\sigma_\pm(x,k))$ for every $x \in \ell_+^\infty$ and $k \in \Z$. \\
\item[2.]$[\Delta_\sigma,\Psi_*]x = \left(\Delta_\sigma \Psi_* - \Psi_* \Delta_\sigma \right)x = 0$ for every $x \in \ell_+^\infty$. \\
\item[3.]$\langle \Psi_*x \rangle_{\sigma,N} = \langle x \rangle_{\sigma,N} $ for every $N > 0$ and $x \in \ell_+^\infty$. 
\end{itemize}

\end{lem}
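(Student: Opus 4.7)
The plan is to verify the three statements in order, as (2) and (3) both leverage the structural fact in (1). For (1), I would observe that $\mathrm{Im}(\Psi)$ is by construction the set of indices where $\Psi_* x$ is permitted to be nonzero, and that $\Psi$ is a strictly increasing bijection $\Z \to \mathrm{Im}(\Psi)$ which restricts to a bijection between the living support of $x$ and the living support of $\Psi_* x$. Consequently the positive values of $\Psi_* x$ appear in the same order as those of $x$, only more widely spaced. For any $k$, a candidate $\ell > \Psi(k)$ with $\Psi_* x(\ell) > 0$ must lie in $\mathrm{Im}(\Psi)$ (positions outside the image are forced to zero), so $\ell = \Psi(l)$ for some $l > k$. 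By monotonicity of $\Psi$, the smallest such $\ell$ corresponds to the smallest such $l$, namely $\sigma_+(x,k)$. An intermediate image point $\Psi(l')$ with $k < l' < \sigma_+(x,k)$ cannot intervene, because the defining property of $\sigma_+$ forces $x(l') = 0$ and hence $\Psi_* x(\Psi(l')) = 0$. The $\sigma_-$ case is entirely symmetric, and the exhausted cases $\sigma_\pm(x,k) = \pm\infty$ are covered by the convention $\Psi(\pm\infty) := \pm\infty$.

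For (2) I would check the identity pointwise. At positions $m \notin \mathrm{Im}(\Psi)$ both sides vanish: the left via the indicator $\chi_{\{\Psi_* x(m) \neq 0\}}$ in the definition of $\Delta_\sigma$, the right by the push-forward convention $\Psi_*(\Delta_\sigma x)(m) = 0$. At $m = \Psi(k)$, the indicators coincide since $\Psi_* x(\Psi(k)) = x(k)$, and using (1) the three-point stencil of $\Delta_\sigma$ applied to $\Psi_* x$ at $\Psi(k)$ reads
\begin{align*}
\Psi_* x(\Psi(\sigma_-(x,k))) - 2 \Psi_* x(\Psi(k)) + \Psi_* x(\Psi(\sigma_+(x,k))),
\end{align*}
which collapses via $\Psi_* x \circ \Psi = x$ to $\Delta_\sigma x(k)$. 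The right-hand side equals $\Psi_*(\Delta_\sigma x)(\Psi(k)) = \Delta_\sigma x(k)$ by the same defining relation, yielding the desired commutation.

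For (3), iterating (1) from the basepoint under the natural normalization $\Psi(0) = 0$ (consistent with the convention that particles are inserted only between existing ones, in particular not at position $0$) gives $\sigma_+^k(\Psi_* x, 0) = \Psi(\sigma_+^k(x,0))$ and analogously for $\sigma_-$. Moreover $\Psi_* x(0) > 0$ iff $x(0) > 0$, so the distinguished basepoint is included in $L_{\sigma,N}(\Psi_* x)$ in precisely the same cases as it is included in $L_{\sigma,N}(x)$. Hence $L_{\sigma,N}(\Psi_* x) = \Psi(L_{\sigma,N}(x))$, and since $\Psi$ is injective the cardinalities coincide; the identity $\Psi_* x(\Psi(l)) = x(l)$ then equates the corresponding sums, so the means agree.

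The main technical point is essentially conceptual rather than computational: once one recognises that $\Psi$ restricts to an order-preserving bijection between the living supports of $x$ and $\Psi_* x$, and that positions outside $\mathrm{Im}(\Psi)$ contribute zero trivially to both sides of each identity, all three claims reduce to careful index tracking through the definitions. No analytic input is required; the only mild bookkeeping concerns the status of the basepoint $0$ relative to $\mathrm{Im}(\Psi)$, which the convention $\Psi(0) = 0$ resolves cleanly.
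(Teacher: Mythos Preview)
Your proposal is correct and follows essentially the same approach as the paper: part (1) via the observation that positive values of $\Psi_* x$ lie only at image points and $\Psi$ is strictly increasing, part (2) by a pointwise check using (1) at image points and triviality elsewhere, and part (3) by transporting the living-particle index sets through $\Psi$. If anything, you are more careful than the paper in part (3), which it dismisses as ``obvious from the definition''; your explicit handling of the basepoint via the normalisation $\Psi(0)=0$ is exactly the kind of convention the paper alludes to when it says such ambiguities ``do not affect the arguments''.
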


\begin{proof}
\textit{1.} It suffices to prove the claim for $\sigma_+$, the other case is completely analogous. Because $\Psi$ is strictly increasing, we have $\Psi(\sigma_+(x,k)) > \Psi(k)$. We also have

\begin{align}
\Psi_*x(\Psi(\sigma_+(x,k))) = x(\sigma_+(x,k)) > 0,
\end{align}

\noindent which shows $\sigma_{+}(\Psi_*x,\Psi(k)) \leq \Psi(\sigma_+(x,k))$. For the other inequality, we note that $\Psi_*x(l) > 0$ implies that $l = \Psi(m)$ for some $m \in \Z$. In this case we have

\begin{align}
0 < \Psi_*x(l) = x(m)
\end{align}

\noindent which implies $m \geq \sigma_+(x,k)$, and because $\Psi$ is increasing we conclude

\begin{align}
l = \Psi(m) \geq \Psi(\sigma_+(x,k)),
\end{align}

\noindent which proves the first assertion.\newline

\noindent \textit{2.} Let $l = \Psi(k)$. We apply the identity in \textit{1.} to get

\begin{align}
\Delta_\sigma \Psi_* x (l) &= \left( \Psi_* x(\sigma_-(\Psi_* x,l)) - 2\Psi_* x(l) + \Psi_* x(\sigma_+(\Psi_* x,l)) \right) \cdot \chi_{\{\Psi_* x(l) \neq 0 \}} \\
&= \left( x(\sigma_-(k)) - 2x(k) + x(\sigma_+(k)) \right) \cdot \chi_{\{x(k) \neq 0 \}} \\
 &= \Delta_\sigma x(k) = \Psi_* \Delta_\sigma x (l).
\end{align}

\noindent On the other hand, if $l \notin \mathrm{Im}(\Psi)$, the identity is trivial. \newline

\noindent \textit{3.} Obvious from the definition.

\end{proof}

\noindent With the second statement of the above lemma, it is not difficult to verify that the sequence $x$ we have constructed above solves equation (\ref{coarsening_equation}):

\begin{cor}
Let $x_\mathrm{ter}$, $\{ \tau_j \}$ and $\{\Psi_*^{(j)} \}$ be given and $x$ be constructed as above. If $t \mapsto F_\beta(x)\cdot \chi_{\{x(k) \neq 0 \}}$ is locally integrable for every $k \in \Z$, then $x$ is a (mild) solution to equation (\ref{coarsening_equation}) on $[0,\tau_n)$.
\end{cor}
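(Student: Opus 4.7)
The plan is to check the three defining properties of a mild solution in turn, the integral identity being the main point.

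Continuity of $t \mapsto x(t,k)$ is built into the construction. Writing $P^{(j)} := \prod_{l=1}^{j-1}\Psi_*^{(n+1-l)}$ for short, each piece $x^{(j)}(t) = P^{(j)}[u^{(n+1-j)}(\tau_n - t)]$ is continuous in $t$ because $u^{(n+1-j)}$ is a classical solution of the backward equation and $P^{(j)}$ is a fixed coordinate rearrangement. At each transition time $t = \tau_n - \tau_{n-j}$, the identity $u^{(n+1-j)}(\tau_{n-j}) = \Psi_*^{(n+1-j)}[u^{(n-j)}(\tau_{n-j})]$ from the iteration step (\ref{iteration_step}) together with $P^{(j+1)} = P^{(j)} \circ \Psi_*^{(n+1-j)}$ forces the left and right limits to agree, and local integrability of $F_\beta(x)\chi_{\{x(k)\neq 0\}}$ is present by assumption.

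For the integral identity I would first localise to $t_1, t_2$ lying in a common open subinterval $(\tau_n - \tau_{n+1-j}, \tau_n - \tau_{n-j})$ by telescoping across the (finitely many) transitions, discarding boundary contributions via continuity. Having done so, I split on whether $k \in \mathrm{Im}(P^{(j)})$: if not, both sides vanish, since $x^{(j)}(t,k) \equiv 0$ on the subinterval and the indicator $\chi_{\{x(k)\neq 0\}}$ kills the right-hand side. If $k = P^{(j)}(k')$, then $x(t,k) = u^{(n+1-j)}(\tau_n - t, k')$, and the PDE $\del_s u^{(n+1-j)} = \Delta G_\beta(u^{(n+1-j)})$ together with the substitution $s \mapsto \tau_n - s$ rewrites the left-hand side as a forward integral of $\Delta G_\beta(u^{(n+1-j)})(\tau_n - s, k')$ over $[t_1, t_2]$.

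It remains to identify this integrand with $\Delta_\sigma F_\beta(x^{(j)})(s,k)$ up to the sign from $G_\beta = -F_\beta$. This is a commutation argument in three steps: (i) $F_\beta(0) = 0$ implies $F_\beta \circ \Psi_* = \Psi_* \circ F_\beta$ pointwise, whence $F_\beta(P^{(j)} u) = P^{(j)} F_\beta(u)$; (ii) part~2 of the preceding lemma, applied iteratively, gives $\Delta_\sigma P^{(j)} = P^{(j)} \Delta_\sigma$, so evaluating at $k = P^{(j)}(k')$ yields $\Delta_\sigma F_\beta(x^{(j)})(s,k) = \Delta_\sigma F_\beta(u^{(n+1-j)})(\tau_n - s, k')$; and (iii) at any site $k'$ where $u^{(n+1-j)}$ is strictly positive on $\{k'-1, k', k'+1\}$, the nearest living neighbour indices coincide with $k' \pm 1$, so $\Delta_\sigma$ reduces to $\Delta$. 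Summing the resulting per-piece identity over the partition then produces the desired mild form on $[0, \tau_n)$.

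The main obstacle is step~(iii): $u^{(n+1-j)}$ is installed at time $\tau_{n-j}$ with zeros inserted by $\Psi_*^{(n+1-j)}$, and one has to argue that these zeros either disappear instantaneously under the backward flow or contribute only on a Lebesgue-null time set. For $\beta \in (0,1]$ the fast-diffusion equation $\del_s u = \Delta u^\beta$ fills in any inserted zero surrounded by positive mass immediately, so strict positivity holds throughout the open interval $(\tau_{n-j}, \tau_{n+1-j}]$. For $\beta < 0$ the analogous strict positivity has to be read off from Theorem~\ref{appendix_existence_thm}, and the local integrability hypothesis on $F_\beta(x)\chi_{\{x(k)\neq 0\}}$ is precisely what keeps the time integrals finite across the singular creation events.
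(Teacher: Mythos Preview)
Your proof is correct and follows essentially the same route as the paper: both reduce the verification to the pointwise identity $\del_t x^{(j)} = \Delta_\sigma F_\beta(x^{(j)})$ on each open subinterval via the commutation $[\Delta_\sigma, \Psi_*] = 0$ and $F_\beta \circ \Psi_* = \Psi_* \circ F_\beta$. The paper's write-up is terser and leaves your step~(iii) implicit in the notion of classical solution to the backward equation, whereas you spell out that strict positivity of $u^{(n+1-j)}$ on the open interval (guaranteed by Theorem~\ref{appendix_existence_thm}) is what collapses $\Delta_\sigma$ to $\Delta$ there---a point worth making explicit.
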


\begin{proof}
Since $x$ is continuous and piecewise smooth by construction, it suffices to show that $\del_t x = \Delta_\sigma F_\beta(x)$ holds pointwise on all intervals $[\tau_n - \tau_{n+1-j},\tau_n - \tau_{n-j} )$. Indeed, we calculate

\begin{align}
\del_t x^{(j)}(t) &= \left( \prod_{l = 1}^{j-1} \Psi_*^{(n+1-l)} \right) \left[\del_t u^{(n+1-j)}(\tau_{n} - t) \right] \\
&= \left( \prod_{l = 1}^{j-1} \Psi_*^{(n+1-l)} \right) \left[\Delta_\sigma F_\beta(u^{(n+1-j)})(\tau_{n} - t) \right] \\
&= \Delta_\sigma F_\beta \left(\left( \prod_{l = 1}^{j-1} \Psi_*^{(n+1-l)} \right) \left[u^{(n+1-j)}(\tau_{n} - t) \right] \right) = \Delta_\sigma F_\beta \left(x^{(j)} \right).
\end{align}

\noindent Here we used that $\Delta_\sigma$ commutes with creation operators by the previous lemma, as well as composition with the function $F_\beta$.
\end{proof}

\noindent \textbf{Remark}. The above construction scheme implies the existence of many initial data and corresponding solutions to the coarsening equation which become stationary after a finite time. Indeed, $x$ as above has this property if we pick $x_\mathrm{ter}$ to be a constant sequence. Because there is much freedom in the choice of particle creations and vanishing times this means that finding conditions on initial data such that lower coarsening bounds hold is a difficult task and remains an open problem. In the construction for the proof of Theorem \ref{main_result} we will in fact choose $x_\mathrm{ter}(k) = \theta^n$ so that each approximate solution becomes stationary. Because $\theta^n \to \infty$ and $\tau_n \to \infty$ the limit solution however will grow indefinitely. The details will be explained in the next section.

\section{Proof of the main result}

We divide the full proof of Theorem \ref{main_result} into four main steps. In the first step we show how to insert particles to modify the local average in a uniform way. The second step is to prove a long-time diffusive property of the backward equation which, together with the first step, will enable us to prove Lemma \ref{key_lemma}. In the third step the construction of the approximate sequence $x^{(n)}$ is thoroughly carried out. Finally we use a compactness argument to pass to the limit and obtain a solution with the desired properties, finishing the proof.

\subsection{Step 1: Average modification by particle insertion}

\begin{defn}[Local Averages]
Let $x \in \ell^\infty(\Z)$. Then define the associated sequence of local averages as

\begin{align}
\Lambda(x,k,N) = \frac{1}{2N+1}\sum_{l=-N}^N x(k-l).
\end{align}

\end{defn}

\noindent In the next lemma we show how to modify the local averages of a given sequence by inserting particles in a suitable way:

\begin{lem}[Particle insertion]\label{particle_insertion_lemma}
Let $u_0 \in \ell_+^\infty(\Z)$ with

\begin{align}
\frac{1}{2} \leq u_0 \leq 1.
\end{align}

\noindent Then for every $\varepsilon > 0$ there exists a creation operator $\Psi_*$ and $N_0 \in \N$ such that

\begin{align}
\left | \Lambda(\Psi_* u_0,.,N) - \frac{1}{2} \right | \leq \varepsilon,
\end{align}

\noindent for $N \geq N_0$. Furthermore, if $d$ is the jump sequence associated to $\Psi_*$, then $||d||_\infty$ is finite and depends only on $\varepsilon$.
\end{lem}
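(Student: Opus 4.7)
The plan is to insert zeros at a local rate matching $2u_0-1$, which automatically brings the density of nonzero sites to $\tfrac{1}{2\bar u_0}$ and drives the local averages towards $\tfrac12$ up to an $O(1/N)$ boundary correction. After the rescaling $a(k):=2u_0(k)-1\in[0,1]$, the requirement on $\Psi_*$ reduces to producing a jump sequence $d:\Z\to\N_0$ with $\|d\|_\infty<\infty$ and
\[
\Big|\sum_{k\in K_l^N}\bigl(a(k)-d(k)\bigr)\Big|\le C
\]
uniformly in $l$ and $N$, where $K_l^N:=\{k\in\Z:\Psi(k)\in[l-N,l+N]\}$. Indeed, writing $u_0=(1+a)/2$ and using the telescoping identity $\sum_{k\in K_l^N}(1+d(k))=\Psi(k_+)-\Psi(k_-{-}1)$ with $k_\pm$ the extreme elements of $K_l^N$, together with the elementary fact that this telescope differs from $2N+1$ by at most $\|d\|_\infty$, I obtain
\[
\Bigl|(2N+1)\,\Lambda(\Psi_*u_0,l,N)-\tfrac{2N+1}{2}\Bigr|\le\tfrac12\bigl(C+\|d\|_\infty\bigr),
\]
and the claim follows by picking $N_0$ large enough so that the right-hand side divided by $2N+1$ is $\le\varepsilon$.

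The construction of $d$ is a standard deterministic (greedy) rounding. Starting from a base index and running outward in both directions, I would set $d(k):=1$ if $E(k-1)+a(k)\ge\tfrac12$ and $d(k):=0$ otherwise, where $E(k):=\sum_{j}(a(j)-d(j))$ is the running discrepancy. A short case analysis using $a(k)\in[0,1]$ shows by induction that $E(k)\in[-\tfrac12,\tfrac12]$ for all $k$; in particular $\|d\|_\infty\le1$ (independently of $\varepsilon$) and $|E(k_+)-E(k_-{-}1)|\le1$, so the uniform bound above holds with an absolute constant.

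The main obstacle I foresee is bookkeeping rather than analysis: correctly identifying $K_l^N$ and its extremes, handling the edge correction $\Psi(k_+)-\Psi(k_-{-}1)-(2N+1)$ uniformly in $l$ (this is where $\|d\|_\infty<\infty$ really enters), and spelling out the mirror-image greedy rule for $k<0$ so that it matches the convention used for $\Psi$ in the paper. Once this scaffolding is in place, the combination of the two steps above yields $|\Lambda(\Psi_*u_0,l,N)-\tfrac12|\le C'/(2N+1)$ with a universal $C'$, and the choice $N_0:=\lceil C'/\varepsilon\rceil$ finishes the proof, with $\|d\|_\infty\le1$ as required.
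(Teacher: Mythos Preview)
Your proposal is correct and takes a genuinely different route from the paper. The paper partitions $\Z$ into blocks of a fixed length $K=K(\varepsilon)$, computes the average of $u_0$ on each block, and then inserts $L_i\le K$ zeros after a block whose average lies in the $i$-th subinterval of an $\varepsilon$-fine partition of $[\tfrac12,1]$; the resulting error in the local average is $O(\varepsilon)+O(K/N)$, and the jump sequence satisfies $\|d\|_\infty\le K$, so it depends on $\varepsilon$. Your greedy discrepancy rounding instead inserts at most one zero per site and keeps the running error $E(k)\in[-\tfrac12,\tfrac12]$, which via the telescoping identity gives the sharper bound $|\Lambda(\Psi_*u_0,l,N)-\tfrac12|\le C'/(2N+1)$ with a universal $C'$, and $\|d\|_\infty\le 1$ independently of $\varepsilon$. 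This is strictly stronger than what the lemma asserts and would in fact simplify the downstream dependence on $L$ in Lemma~\ref{backward_equation_short_time_estimate}. The only points to be careful about are exactly the ones you flag: matching the sign convention for $\Psi$ on negative indices and checking the edge estimate $|\Psi(k_+)-\Psi(k_-{-}1)-(2N+1)|\le\|d\|_\infty$; both are routine once written out.
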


\begin{proof}

Let $(\lambda_i)$ be an equidistant partition of the interval $[1/2,1]$ with $|\lambda_i - \lambda_{i+1}| \leq \varepsilon$. We give an explicit scheme for the particle insertion as follows: We divide $\Z$ into disjoint blocks of particles with length $K$, where $K$ is determined later:

\begin{align}
\Z = \bigcup_{j \in \Z} B_j,
\end{align}

\noindent with $B_j = \{ jK,...,(j+1)K - 1 \}$. Let $\Lambda_j$ denote the average mass in $B_j$ with respect to $u_0$. We define the deformation $\Psi$ by inserting $L_i$ (determined later) particles to the right of $(j+1)K-1$ whenever he have

\begin{align}
\lambda_i \leq \Lambda_j \leq \lambda_{i+1}.
\end{align}

\noindent This gives rise to a new partition of $\Z$ into blocks $\tilde{B}_j$ with varying lenghts $K + L_i$, where $\tilde{B}_j$ contains all elements of $\Psi (B_j)$ and the next $L_i$ numbers that are not elements of $\mathrm{Im}(\Psi)$. We call a block with $L_i$ inserted particles a \textit{block of the $i$-th kind}. Then the average mass $\tilde{\Lambda}_{j}$ of such a block with respect to $\Psi_*u_0$ is by construction 

\begin{align}
\tilde{\Lambda}_j = \frac{1}{K+L_i}\left(\sum_{k \in B_j} u_0(k) \right) = \frac{K}{K+L_i}\Lambda_j,
\end{align}

\noindent which gives
 
\begin{align}
\lambda_{i} \frac{K}{K+L_i} \leq \tilde{\Lambda}_{j} \leq \lambda_{i+1} \frac{K}{K+L_i} := \lambda_{i+1}\theta_i.
\end{align}

\noindent Because $1/2 \leq \lambda_i \leq 1$ we can, if $K$ is large enough, choose $L_i \leq K$ such that 

\begin{align}
\left |\lambda_i \theta_i - \frac{1}{2} \right | \leq \mathcal{O}(\varepsilon),
\end{align} 

\noindent and because $\lambda_i$ and $\lambda_{i+1}$ are close we also have

\begin{align}
\left |\lambda_{i+1} \theta_i - \frac{1}{2} \right | \leq \mathcal{O}(\varepsilon).
\end{align}

\noindent This implies that the average mass of every block $\tilde{B}_j$ can be estimated as

\begin{align}
\left |\tilde{\Lambda}_{j} - \frac{1}{2} \right | \leq \mathcal{O}(\varepsilon).
\end{align}

\noindent Next we calculate $\Lambda(\Psi_* u_0,k,N)$ for $N \gg K$ and arbitrary $k \in \Z$. Denote by $n_i$ the number of blocks of the $i$-th kind in the domain of summation, that is $\{ k-N,....,k+N \}$. This implies that  

\begin{align}
|\{ k-N,....,k+N \}| = 2N + 1 = \sum_i (K+L_i)n_i + \mathcal{O}(K).
\end{align}

\noindent Then we divide the summation in $\Lambda(\Psi_* u_0,k,N)$ into summation over the respective blocks and the rest of the particles in $\{ k-N,....,k+N \}$, whose number, and thus total mass $R$, is of order $K$. Thus we have

\begin{align}
\Lambda(\Psi_* u_0,k,N) &= \frac{1}{2N+1} \left(\sum_{\mathrm{sum \ over \ blocks}} + R \right) \\
&= (1+\mathcal{O}(K/N))\frac{\sum_{\mathrm{sum \ over \ blocks}}}{\sum_i (K+L_i)n_i}  + \mathcal{O}(K/N).
\end{align}

\noindent By the estimates on the average masses of the blocks we have 

\begin{align}
\left( \frac{1}{2} - \varepsilon \right)\sum_i (K+L_i)n_i \leq \sum_{\mathrm{sum \ over \ blocks}} \leq \left( \frac{1}{2} + \varepsilon \right) \sum_i (K+L_i)n_i,
\end{align}

\noindent which implies the desired estimate if $K/N \leq \mathcal{O}(\varepsilon)$. Because $L_i \leq K$ by construction the jump sequence satisfies $d \leq K$ and $K$ depends only on $\varepsilon$.
\end{proof}

\subsection{Step 2: Estimate for the backward equation}

The basic idea to analyse equation (\ref{backward_equation}) is to view it as a discrete parabolic equation in divergence form with time-dependent coefficients. More precisely, with the finite difference operators

\begin{align}
\del^+ u(k) &= u(k+1) - u(k), \\
\del^- u(k) &= u(k) - u(k-1),
\end{align}

\noindent we calculate

\begin{align}
\del_t u = \Delta G_\beta(u) = \del^- \del^+ (G_\beta(u)) = \del^-(a \del^+ u),
\end{align}

\noindent where

\begin{align}
a(t,k) = a_\beta(t,k) = \frac{G_\beta(u(t,k+1)) - G_\beta(u(t,k))}{u(t,k+1) - u(t,k)}.
\end{align}

\noindent The coefficient $a$ is strictly positive and bounded from below if $u$ is bounded from above but becomes singular at $u = 0$, except for $\beta = 1$, where $a = 1$. Because of this we want to work with solutions that are bounded from above and below: 

\begin{lem}(Positivity estimate)\label{backward_equation_short_time_estimate}
Let $\beta \in (-\infty,0) \cup (0,1)$ and $u_0 \in \ell_+^\infty(\Z)$ with $\frac{1}{2} \leq u_0 \leq 1$. Let $\Psi_*$ be a creation operator with associated jump sequence $d$ that satisfies $d(k) \leq L$. Then there exists a (classical) solution to equation (\ref{backward_equation}) with initial data $\Psi_* u_0$. Furthermore, we have $u \leq 1$ and

\begin{align}
u(t,\cdot) \geq c \left( 1 \wedge t^{\frac{1}{1-\beta}} \right), \label{positivity_estimate}
\end{align}

\noindent where $c$ depends only on $\beta$ and $L$.
\end{lem}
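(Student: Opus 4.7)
Existence of a classical solution $u$ with initial data $\Psi_*u_0$ follows from Theorem~\ref{appendix_existence_thm}: the bound $d(k)\leq L$ together with $\frac{1}{2}\leq u_0 \leq 1$ forces every site of $\Psi_*u_0$ to have a neighbour carrying mass in $[1/2,1]$ within distance at most $L+1$, which falls in the regime treated in the appendix. For the upper bound, observe that $G_\beta$ is strictly increasing on $(0,\infty)$: for $\beta<0$ one has $G_\beta(u)=-u^\beta$ with derivative $|\beta|u^{\beta-1}>0$. The constant sequence $1$ is therefore a stationary super-solution, and a standard comparison argument applied at the (approximate) supremum of $u$ yields $u\leq 1$.

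For the lower bound, the plan is to dominate $u$ from below by an explicit $(L+1)$-periodic sub-solution $\underline u$. Since consecutive living particles in $\Psi_*u_0$ are at distance at most $L+1$, there exists $\underline u_0\in\ell^\infty_+(\Z)$ that is $(L+1)$-periodic, takes value $\frac{1}{2}$ at exactly one representative per period and $0$ elsewhere, and satisfies $\underline u_0 \leq \Psi_*u_0$ pointwise. By the comparison principle the associated solution $\underline u$ satisfies $\underline u \leq u$; moreover $\underline u$ reduces to a finite ODE system on $L+1$ nodes that conserves the total mass $\frac{1}{2}$ and stays bounded by $1$. It therefore suffices to bound the minimum $\rho(t):=\min_k \underline u(t,k)$ from below by $c(1\wedge t^{1/(1-\beta)})$ with $c=c(\beta,L)$.

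The short-time analysis splits according to the sign of $\beta$. For $\beta<0$ the nonlinearity is singular at $0$: at any minimizer $k^*$ one finds $\dot{\underline u}(t,k^*)\geq 2\rho(t)^\beta - C$, where $C$ controls $|G_\beta|$ on $[\rho,1]$. For small $\rho$ the singular term dominates and produces the differential inequality $\dot\rho\geq\rho^\beta$, which integrates to $\rho(t)\geq c\,t^{1/(1-\beta)}$. For $\beta\in(0,1)$ the equation is non-singular at $0$, so positivity must be \emph{propagated} from the source: by induction on the distance $d\in\{1,\ldots,L\}$ from the source site, I would construct elementary ODE sub-solutions showing $\underline u(t,k)\geq c_d\,t^{a_d}$ with $a_d=1+\beta+\cdots+\beta^{d-1}<1/(1-\beta)$, each step using the previous estimate $\underline u(t,k-1)\geq c_{d-1}t^{a_{d-1}}$ as an effective source in the ODE for $\underline u(t,k)$. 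Since $a_d<1/(1-\beta)$, for $t\leq 1$ this yields $\underline u(t,k)\geq c_L\,t^{1/(1-\beta)}$ uniformly over the block. The long-time bound $\rho(t)\gtrsim 1$ then follows from exponential relaxation of the finite periodic ODE system toward its spatial mean $\frac{1}{2(L+1)}$; once $\underline u$ is bounded above and below, the linearization has uniformly positive coefficients and this convergence is standard.

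The main obstacle is the iterative propagation step in the regime $\beta\in(0,1)$, where one must carefully track how the constants $c_d$ and the admissible time intervals degrade under each iteration (roughly $c_d\lesssim c_{d-1}^\beta/a_d$) so as to end up with a final estimate depending only on $(\beta,L)$. Everything else—the maximum principle, the comparison against the periodic sub-solution, and the finite-dimensional relaxation—is essentially routine once the correct sub-solution is identified.
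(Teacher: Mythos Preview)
Your first paragraph already contains the paper's entire proof: once you observe that $\Psi_*u_0 \in \mathcal{P}_{L+1,1/2}$, Theorem~\ref{appendix_existence_thm} delivers not only existence but also the comparison principle (hence $u\leq 1$) \emph{and} the positivity estimate $u(t,\cdot) \geq c\,(1 \wedge t^{1/(1-\beta)})$ directly. The lemma is an immediate corollary of that appendix theorem, and nothing further is needed.

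The alternative route you outline for the lower bound has a genuine gap. The claimed $(L+1)$-periodic sub-solution $\underline u_0$ with $\underline u_0 \leq \Psi_*u_0$ need not exist: the living sites of $\Psi_*u_0$ have spacings of \emph{at most} $L+1$, but these spacings vary, and in general no arithmetic progression of step $L+1$ lies entirely inside the set of living sites. For instance, with $L=2$ and jump sequence $d=(\ldots,0,0,2,0,2,0,2,\ldots)$ the living sites are $\{4k,4k+1:k\in\Z\}$; none of the three residue classes modulo~$3$ is entirely living, so no $3$-periodic $\underline u_0$ with one nonzero entry per period can sit below $\Psi_*u_0$. Without periodicity the reduction to a finite ODE system collapses, and you are forced back to the infinite-lattice analysis---which is exactly what Theorem~\ref{appendix_existence_thm} already carries out (in fact via an iterated local-propagation argument in the spirit of your $a_d$-iteration, see Lemmas~\ref{harnack_iteration_step_lemma} and~\ref{harnack_inequality_positive}). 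A secondary issue is that the comparison principle in the appendix is stated only for constant barriers; comparing against a sub-solution that touches zero is delicate for $\beta<0$ because $G_\beta$ is singular there.
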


\begin{proof}
Because the jump sequence satisfies $d \leq L$, the distance between particles that have mass at least $1/2$ is at most $L+1$. Then the result follows directly from Theorem \ref{appendix_existence_thm}, since the above considerations imply $\Psi_* u_0 \in \mathcal{P}_{L+1,\frac{1}{2}}$.
\end{proof}

\noindent The lemma implies that there exists a solution $u$ such that equation (\ref{backward_equation}) is immediately strictly parabolic. Before we turn to the analysis of linear parabolic equations we establish uniform H\"older continuity. This is important for the stability of local averages for small times and later for the compactness of the approximating sequence.

\begin{lem}(Uniform H\"older continuity)\label{uniform_Hoelder_continuity_lemma}
Let $u_0,\beta, \Psi_*$ and $u$ be as above. Then the following statements hold:

\begin{enumerate}
\item For $\beta < 0$ and $T > 0$ there exists $C = C(\beta,L,T)$ such that 

\begin{align}
|u(t_2,k) - u(t_1,k)| \leq C|t_2 - t_1|^\frac{1}{1-\beta},
\end{align}

for all $0 < t \leq T$ and $k \in \Z$.

\item For $\beta \in (0,1]$ we have 

\begin{align}
|u(t_2,k) - u(t_1,k)| \leq 4|t_2 - t_1|,
\end{align}

for all $t > 0$ and $k \in \Z$.

\end{enumerate}
 
\end{lem}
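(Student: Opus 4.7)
The plan is to bound $|\partial_t u(t,k)|$ pointwise from the equation $\partial_t u = \Delta G_\beta(u)$ and then integrate in time. The only inputs needed are the two-sided controls supplied by Lemma \ref{backward_equation_short_time_estimate}: $u \leq 1$ and, when $\beta < 0$, $u(t,\cdot) \geq c(1 \wedge t^{1/(1-\beta)})$.

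For $\beta \in (0,1]$ the argument is essentially immediate. Here $G_\beta(u) = u^\beta$ and the combination $u \leq 1$ with $\beta \leq 1$ forces $0 \leq G_\beta(u) \leq 1$ pointwise, so $|\partial_t u(t,k)| = |\Delta G_\beta(u)(t,k)| \leq 4$. Integrating this on $[t_1,t_2]$ yields the claimed Lipschitz bound with constant $4$.

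The substantive case is $\beta < 0$, where $G_\beta(u) = -u^\beta = -u^{-|\beta|}$ is singular at $u = 0$. I first upgrade the positivity estimate into a single power-type lower bound on $(0,T]$: for $s \leq 1$ it already gives $u(s,\cdot) \geq c\,s^{1/(1-\beta)}$, while for $s \in [1,T]$ the bound $u \geq c$ combined with $s^{1/(1-\beta)} \leq T^{1/(1-\beta)}$ gives $u \geq c\,T^{-1/(1-\beta)} s^{1/(1-\beta)}$, so $u(s,\cdot) \geq c_T\,s^{1/(1-\beta)}$ uniformly on $(0,T]$. Substituting into the flux produces $|G_\beta(u)(s,\cdot)| \leq c_T^{-|\beta|}\,s^{\beta/(1-\beta)}$ and hence $|\partial_t u(s,k)| \leq 4c_T^{-|\beta|}\,s^{\beta/(1-\beta)}$.

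To conclude, I integrate this in $s \in [t_1,t_2]$. Because $\beta/(1-\beta) \in (-1,0)$, the antiderivative equals $(1-\beta)\,s^{1/(1-\beta)}$, which gives
\begin{equation*}
|u(t_2,k) - u(t_1,k)| \leq C(\beta,L,T)\bigl(t_2^{1/(1-\beta)} - t_1^{1/(1-\beta)}\bigr).
\end{equation*}
A final appeal to subadditivity of $s \mapsto s^{1/(1-\beta)}$ on $[0,\infty)$—valid because for $\beta < 0$ its exponent lies in $(0,1)$ and it vanishes at $0$, so it is concave with $f(0) = 0$—replaces the difference of powers by $(t_2 - t_1)^{1/(1-\beta)}$, which is the desired Hölder bound. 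The main challenge is really just the singularity of $G_\beta$ at $u = 0$ when $\beta < 0$; once this is controlled via the positivity estimate (and a small amount of bookkeeping across the threshold $t = 1$), the proof reduces to integrating an explicit power of $s$.
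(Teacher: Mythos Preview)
Your proof is correct and follows essentially the same route as the paper: integrate $|\Delta G_\beta(u)|$ over $[t_1,t_2]$, use $u\le 1$ for $\beta\in(0,1]$ and the positivity estimate for $\beta<0$ to bound the integrand by $Cs^{\beta/(1-\beta)}$, then convert $t_2^{1/(1-\beta)}-t_1^{1/(1-\beta)}$ to $(t_2-t_1)^{1/(1-\beta)}$ via the subadditivity of $s\mapsto s^{1/(1-\beta)}$ (the paper phrases this last step as the elementary inequality $a^{1/(1-\beta)}-b^{1/(1-\beta)}\le(a-b)^{1/(1-\beta)}$). Your explicit handling of the region $s\in[1,T]$ just makes precise what the paper absorbs into the constant $C(\beta,L,T)$.
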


\begin{proof}
First we note that due to equation (\ref{backward_equation}) we have for $t_2 > t_1$:

\begin{align*}
|u(t_2,k) - u(t_1,k)| \leq \int_{t_1}^{t_2} |\Delta G_\beta(u)(s,k)| \ \rd s.
\end{align*}

\noindent The estimate $u \leq 1$ implies $|\Delta G_\beta(u)| \leq 4$ if $\beta \in (0,1]$. For negative $\beta$ we use the lower bound (\ref{positivity_estimate}) to get $|\Delta G_\beta(u)(s,k)| \leq C(\beta,L,T) s^{\frac{\beta}{1-\beta}}$ on each compact interval $[0,T]$. Then the desired inequality follows by using the estimates on $\Delta G_\beta(u)$ in the above identity and the elementary inequality $a^\frac{1}{1-\beta} - b^\frac{1}{1-\beta} \leq (a-b)^\frac{1}{1-\beta}$ for $a \geq b$ and $\beta < 0$.

\end{proof}

\noindent The next step is the long-time diffusivity result for linear equations, making use of discrete parabolic H\"older regularity (see appendix for details). 

\begin{lem}(Long-time estimate)\label{long_time_diffusivity_lemma}
Let $a:[0,\infty) \to \ell_+^\infty(\Z)$ with $0 < \lambda_1 \leq a \leq \lambda_2$ and $a(.,k) \in C^0([0,\infty))$ for every $k \in \Z$. Let $u_0 \in \ell_+^\infty(\Z)$ and assume that there exist positive constants $c_1,c_2$ such that $c_1 \leq \Lambda(u_0,.,N) \leq c_2$ for $N \geq N_0$. Let $u$ be the solution of

\begin{align}
\begin{cases}
\del_t u = \del^-(a \del^+ u) = \mathcal{L}(t)u \\ \label{discrete_parabolic_equation}
u(0,.) = u_0.
\end{cases}
\end{align}

\noindent Then for every $\varepsilon > 0$ there exists $T = T(\varepsilon) > 0$, depending only on $N_0$ and the bounds on $a$, such that 

\begin{align}
c_1 - \varepsilon \leq u(t,k) \leq c_2 + \varepsilon
\end{align}

\noindent for all $t \geq T$, $ k \in \Z$.

\end{lem}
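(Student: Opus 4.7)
The plan is to view \eqref{discrete_parabolic_equation} as a linear parabolic equation in divergence form with uniformly elliptic, time-dependent coefficients and to represent the solution via the associated fundamental solution $p(s,t;k,l)$, giving
\begin{align*}
u(t,k) = \sum_{l \in \Z} p(0,t;k,l)\, u_0(l), \qquad \sum_l p(0,t;k,l) = 1, \quad p \geq 0.
\end{align*}
Two ingredients from the discrete parabolic regularity theory collected in the appendix are used: an Aronson-type Gaussian upper bound
\begin{align*}
p(0,t;k,l) \leq \frac{C}{\sqrt{t}}\,\exp\!\left(-\frac{c\,|k-l|^2}{t}\right),
\end{align*}
and a spatial H\"older regularity estimate $|p(0,t;k,l)-p(0,t;k,l')| \leq C\,t^{-(1+\alpha)/2}|l-l'|^\alpha$, with $C,c,\alpha > 0$ depending only on $\lambda_1,\lambda_2$.

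Fix $M = 2N_0+1$ and partition $\Z$ into disjoint consecutive blocks $I_j$ of length $M$. Each $I_j$ is a centered window of radius $N_0$ around its midpoint, so the hypothesis gives
\begin{align*}
c_1 \leq \Lambda_j := \frac{1}{M}\sum_{l \in I_j} u_0(l) \leq c_2 \qquad \text{for all } j \in \Z.
\end{align*}
Picking an arbitrary reference point $l_j \in I_j$ and splitting
\begin{align*}
u(t,k) = \sum_j p(0,t;k,l_j)\, M\Lambda_j \;+\; \sum_j \sum_{l \in I_j}\bigl[p(0,t;k,l) - p(0,t;k,l_j)\bigr]u_0(l),
\end{align*}
the H\"older estimate controls each bracket by $CM^\alpha t^{-(1+\alpha)/2}$, and, combined with the Gaussian concentration of the kernel on $\mathcal{O}(\sqrt{t}/M)$ blocks up to exponentially small tails, the double sum is bounded by $C\|u_0\|_\infty M^\alpha t^{-\alpha/2}$. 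Applying the same reasoning to $\sum_l p(0,t;k,l) = 1$ yields $\sum_j p(0,t;k,l_j) M = 1 + \mathcal{O}(M^\alpha t^{-\alpha/2})$. Inserting the block bounds $c_1 \leq \Lambda_j \leq c_2$ and using $p \geq 0$ gives
\begin{align*}
c_1 - C\,\frac{M^\alpha}{t^{\alpha/2}} \;\leq\; u(t,k) \;\leq\; c_2 + C\,\frac{M^\alpha}{t^{\alpha/2}},
\end{align*}
where the constant depends on $c_1,c_2,\|u_0\|_\infty,\lambda_1,\lambda_2$. Choosing $T$ large enough so that the error is below $\varepsilon$ completes the proof.

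The main obstacle is securing the two parabolic estimates above for discrete divergence-form equations with bounded, elliptic, but otherwise merely measurable coefficients $a(t,k)$; these are the discrete analogues of De~Giorgi--Nash--Moser and of Aronson's Gaussian bound, and will be established in the appendix following the approach of \cite{GiacominOS2001}. A minor secondary issue is the clean handling of the far-field contributions when partitioning into blocks of fixed length, but these are absorbed into the $\mathcal{O}(M^\alpha t^{-\alpha/2})$ error term by the Gaussian decay.
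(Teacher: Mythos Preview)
Your approach is correct and rests on the same two ingredients as the paper's proof: the Aronson upper bound and the Nash H\"older estimate for the fundamental solution of \eqref{discrete_parabolic_equation}. The organisation, however, is different. The paper first rescales the kernel to the macroscopic function $U(t,\xi)=t^{1/2}\phi(t,0,0,\lfloor t^{1/2}\xi\rfloor)$, shows that $U(t,\cdot)$ is tight and \emph{approximately H\"older continuous} as $t\to\infty$, and then invokes a separate lemma (Corollary~\ref{stepfct_approximation_fund_sol}) asserting that such functions can be approximated in $L^1$ by positive step functions of a fixed minimal step width; the conclusion follows by integrating $u_0(\lfloor t^{1/2}\xi\rfloor)$ against these step functions. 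You bypass the rescaling and the auxiliary approximation lemma entirely: you partition $\Z$ directly into blocks of length $M=2N_0+1$, freeze the kernel on each block via the Nash estimate, and control the sum over blocks via Aronson. This is more direct and yields the explicit quantitative rate $\mathcal{O}(M^\alpha t^{-\alpha/2})$, which the paper's route does not produce.

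One small correction: the Aronson bound available in the discrete setting (and stated in the appendix, equation~\eqref{Aronson_estimate}) is exponential,
\[
p(0,t;k,l)\le \frac{C}{1\vee\sqrt{t}}\exp\!\left(-\frac{|k-l|}{1\vee\sqrt{t}}\right),
\]
not Gaussian; a Gaussian upper bound generally fails for discrete parabolic equations outside the regime $|k-l|\lesssim t$. This does not affect your argument, since exponential decay of the tails already gives the needed $\mathcal{O}(\sqrt{t}/M)$ effective block count and exponentially small far-field error. Also note that your error term carries a factor $\|u_0\|_\infty$, so the resulting $T$ depends on $\|u_0\|_\infty$ as well; the paper's proof has the same implicit dependence (through the term $\|u_0\|_\infty\|U(t,\cdot)-\chi\|_{L^1}$), even though the lemma's statement does not list it.
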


\begin{proof}
Since spatial translation does not change the type of equation and the bounds on $a$, it suffices to estimate $u(t,0)$. Let $\phi = \phi(t,k,s,l)$ denote the full fundamental solution to equation (\ref{discrete_parabolic_equation}). Then we have

\begin{align}
u(t,0) = \sum_l \phi(t,0,0,l)u_0(l) =  \int_\R U(t,\xi)u_0(\lfloor t^\frac{1}{2} \xi \rfloor) \ \rd \xi,
\end{align}

\noindent with

\begin{align}
U(t,\xi) = t^\frac{1}{2}\phi(t,0,0,\lfloor t^\frac{1}{2} \xi \rfloor).
\end{align}

\noindent Let $\varepsilon > 0$. By Corollary \ref{stepfct_approximation_fund_sol} there exist $T,\delta > 0$ depending only on $\varepsilon$ and the bounds on $a$ such that $U(t,.)$ can be approximated in $L^1$ by step-functions (which can be chosen to be positive since $\phi$ is positive) of step-width $\delta$ up to an error of $\varepsilon$ for $t \geq T$. Let $\chi = \sum_k a_k \chi_{I_k}$ be such a step-function, then we calculate

\begin{align}
\int_\R \chi(\xi)u_0(\lfloor t^\frac{1}{2} \xi \rfloor) \ \rd \xi &= \sum_k a_k \int_{I_k}u_0(\lfloor t^\frac{1}{2} \xi \rfloor) \ \rd \xi \\
&= \sum_k |I_k| a_k \frac{1}{t^\frac{1}{2} |I_k|} \int_{t^\frac{1}{2} I_k}u_0(\lfloor \xi \rfloor) \ \rd \xi.
\end{align}

\noindent Since $c_1 \leq \Lambda(u_0,.,N) \leq c_2$ for $N \geq N_0$, we have

\begin{align}
c_1 - \varepsilon \leq \frac{1}{2R}\int_{[-R,R]}u_0(\lfloor \xi-\eta \rfloor) \ \rd \eta \leq c_2 + \varepsilon,
\end{align}

\noindent for large enough $R$. Hence we have 

\begin{align}
(c_1-\varepsilon)||\chi||_{L^1} \leq \int_\R \chi(\xi)u_0(\lfloor t^\frac{1}{2} \xi \rfloor) \ \rd \xi \leq (c_2+\varepsilon)||\chi||_{L^1},
\end{align}

\noindent for large enough $t \geq T$, since $|I_k| \geq \delta$. For such $t$ and $\chi$ approximating $U$ we calculate

\begin{align}
\int_\R U(t,\xi)u_0(\lfloor t^\frac{1}{2} \xi \rfloor) \ \rd \xi &\geq (c_1-\varepsilon)||\chi||_{L^1} - ||u_0||_\infty||U(t,.) - \chi||_{L^1} \\
&= c_1 + \mathcal{O}(\varepsilon),
\end{align} 

\noindent where we used $||\chi||_{L^1} = ||U(t,.)||_{L^1} + \mathcal{O}(\varepsilon) = 1  + \mathcal{O}(\varepsilon)$ in the last step. The other bound is analogous. 
\end{proof}

\noindent Combining these two results we can prove the Key Lemma \ref{key_lemma}:\newline

\begin{proof}[Proof of Lemma \ref{key_lemma}]
Let $\varepsilon > 0$. By Lemma \ref{particle_insertion_lemma} there exists $\Psi_*$ (with $||d||_\infty$ depending on $\varepsilon$) and $N_0 = N_0(\varepsilon)$ such that 

\begin{align}
\frac{1}{2} - \varepsilon \leq \Lambda(\Psi_* u_0,\cdot,N) \leq \frac{1}{2} + \varepsilon,
\end{align}

\noindent for $N \geq N_0$. Because of uniform H\"older continuity (see Lemma \ref{uniform_Hoelder_continuity_lemma}) there exists $t_1 = t_1(\beta,\varepsilon) > 0$ such that

\begin{align}
\frac{1}{2} - 2\varepsilon \leq \Lambda(u(t_1,\cdot),\cdot,N) \leq \frac{1}{2} + 2\varepsilon.
\end{align}

\noindent On the other hand, Lemma \ref{backward_equation_short_time_estimate} implies that 

\begin{align}
u(t_1,\cdot) \geq \delta = \delta(\beta,\varepsilon) > 0,
\end{align}

\noindent for all $t \geq t_1$. In particular, equation (\ref{backward_equation}) is strictly parabolic on $(t_1,\infty)$ (for $\beta = 1$ this step is obsolete). Then according to Lemma \ref{long_time_diffusivity_lemma} there exists $t_2 = t_2(\beta,\varepsilon)$ such that 

\begin{align}
\frac{1}{2} - 3\varepsilon \leq u(T,\cdot) \leq \frac{1}{2} + 3\varepsilon,
\end{align}

\noindent with $T = t_2 + t_1$. 

\end{proof}

\subsection{Step 3: Approximating sequence}

We will now construct the approximating sequence $x^{(n)}$, using the technique established in Section 3. Thus for every $n$ we have to specify the terminal data, creation times $\tau_{n,j}$ and creation operators $\Psi_*^{(n,j)}$. For the rest of the section we fix $0 < \varepsilon \leq 1/6$ and define 

\begin{align}
\theta^{-1} = \frac{1}{2} + \varepsilon.
\end{align}

\noindent Let $T$ be as in Lemma \ref{key_lemma} and set

\begin{align}
x^{(n)}_\mathrm{ter} = \theta^n,
\end{align}

\noindent a constant sequence. Now $\tau_{n,j}$ and $\Psi_*^{(n,j)}$ are constructed iteratively. We choose $\Psi_*^{(n,1)}$ according to Lemma \ref{key_lemma} applied to the constant sequence $1$, then by the statement of the lemma, the fact that $1/2 - \varepsilon \geq \theta^{-1}/2$ and the scaling properties of the equation we have

\begin{align}
\frac{1}{2}\theta^{n-1} \leq u^{(n,1)}(T\theta^{(1-\beta)n},\cdot) \leq \theta^{n-1},
\end{align}

\noindent where $u^{(n,1)}$ is a solution to the backward equation (\ref{backward_equation}) with initial data $\Psi_*^{(n,1)}x^{(n)}_\mathrm{ter}$ according to Lemma \ref{key_lemma}. Consequently we set 

\begin{align}
\tau_{n,1} = T\theta^{(1-\beta)n}.
\end{align}

\noindent Iterating the procedure, for given $\tau_{n,j-1}$ and $u^{(n,j-1)}$ with

\begin{align}
\frac{1}{2}\theta^{n-j+1} \leq u^{(n,j-1)}(\tau_{n,j-1},.) \leq \theta^{n-j+1},
\end{align}

\noindent we apply Lemma \ref{key_lemma} to the rescaled sequence

\begin{align}
\theta^{-n+j-1}u^{(n,j-1)}(\tau_{n,j-1},.),
\end{align}

\noindent which yields a creation operator $\Psi_* =: \Psi_*^{(n,j)}$ and a solution $u^{(n,j)}$ to 

\begin{align}
&\begin{cases}
\del_t u^{(n,j)} = \Delta G_\beta(u^{(n,j)}) \ \mathrm{in} \ (\tau_{j-1}, \tau_j] \times \Z, \\
u^{(n,j)}(\tau_{n,j-1}) = \Psi_*^{(n,j)}\left[u^{(n,j-1)}(\tau_{n,j-1})\right]
\end{cases},
\end{align}

\noindent that, by scaling, satisfies  

\begin{align}\label{iteration_estimate}
\frac{1}{2}\theta^{n-j} \leq u^{(n,j)}(\tau_{n,j},.) \leq \theta^{n-j},
\end{align}

\noindent with

\begin{align}
\tau_{n,j} &= \tau_{n,j-1} + T\theta^{(1-\beta)(n+1-j)}.
\end{align}

\noindent As described in the previous section, this procedure yields a solution $x^{(n)}$ on the interval $[0,\tau_{n,n}]$ to the coarsening equation. The local integrability condition for negative $\beta$ is satisfied due to (\ref{positivity_estimate}). Since $x^{(n)}(\tau_{n,n})$ is constant up to vanished particles, the solution can be extended to $[0,\infty)$. Let

\begin{align}
T_n &= \tau_{n,n}, \notag \\
t_{j} &= T_n - \tau_{n,n-j} = T \sum_{k=n+1-j}^{n}\theta^{(1-\beta)(n+1-k)} = T\sum_{m=1}^{j}\theta^{(1-\beta)m}. \label{vanishing_time_formula}
\end{align}

\noindent The numbers $t_j$ are exactly the times where particles can vanish and are the same for all $n$. In particular, the vanishing times of the limit will be contained in the set $\{t_j \}$. We summarize the properties of of $x^{(n)}$ that follow directly by construction:

\begin{align}
&\del_t x^{(n)} = \Delta_\sigma F_\beta(x^{(n)}) \ \mathrm{in} \ (0,\infty) \times \Z, \label{approximate_property_1} \\ 
&x^{(n)}(t,k) = \theta^n \ \mathrm{for \ all} \ t \geq T_n \ \mathrm{and} \ x^{(n)}(t,k) > 0, \label{approximate_property_2} \\ 
&x^{(n)}(t,k) \leq \theta^{j} \ \mathrm{for \ all} \ t_{j-1} \leq t \leq t_{j} \ \mathrm{and} \ k \in \Z, \label{approximate_property_3} \\ 
&x^{(n)}(t_j,k) \geq \frac{1}{2}\theta^{j} \ \mathrm{for \ all} \ 1 \leq j \leq n \ \mathrm{and} \ k \in \Z \ \mathrm{with} \ x^{(n)}(t_j,k) > 0. \label{approximate_property_4} 
\end{align}

\subsection{Step 4: Passage to the limit}

\noindent Before using an appropriate compactness argument on the approximating sequence it is also necessary to control the decay of particles near their vanishing times uniformly since the particle interaction is discontinuous at $x(k) = 0$. y\newline

\begin{lem}\label{Vanishing_time_estimate}
Let $x^{(n)}$ be defined as above and $\beta \neq 1$. For every $j > 0$ there exists $C = C(j,\beta)$ and $\varepsilon = \varepsilon(j,\beta) > 0$, such that  for all particles $k \in \Z$ that vanish at $t = t_j$ we have

\begin{align}
x^{(n)}(t,k) \geq C(t_j-t)^{\frac{1}{1-\beta}}, 
\end{align}

\noindent for $t \in [t_j - \varepsilon, t_j]$. For $\beta = 1$ the statement holds in the same way except we have the lower bound

\begin{align}
x^{(n)}(t,k) \geq C \exp(-2(t_j-t))I_L(2(t_j-t)),
\end{align}

\noindent where $I_L$ is the $L$-th modified Bessel function of the first kind and $L$ depends only on $\beta$.

\end{lem}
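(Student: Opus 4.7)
The plan is to identify any particle $k$ that vanishes at $t = t_j$ with a particle newly created (with mass zero) by the operator $\Psi_*^{(n,n-j+1)}$ at backward time $\tau_{n,n-j}$ in the iteration $u^{(n,n-j+1)}$. For $t \in [t_j - \varepsilon, t_j]$ with $\varepsilon \leq t_j - t_{j-1}$, the value $x^{(n)}(t,k)$ equals $u^{(n,n-j+1)}(\tau_{n,n-j} + s, \cdot)$ at the pre-image position, where $s = t_j - t$; the remaining push-forward operators appearing in the forward-time representation do not alter values at existing positions. Thus it suffices to establish a short-time lower bound for $u^{(n,n-j+1)}$ at the newly inserted zero positions.

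For $\beta \neq 1$, I exploit the scaling invariance of the backward equation. By the iteration estimate \eqref{iteration_estimate}, the living particles of $u^{(n,n-j)}(\tau_{n,n-j}, \cdot)$ lie in $[\tfrac{1}{2}\theta^j, \theta^j]$, so with $\mu := \theta^{j(1-\beta)}$ the rescaling $v(\sigma, \cdot) := \theta^{-j}\, u^{(n,n-j+1)}(\tau_{n,n-j} + \mu\sigma, \cdot)$ yields another solution of the backward equation whose initial data has the form $v(0,\cdot) = \Psi_*^{(n,n-j+1)} w_0$ with $w_0 \in [\tfrac{1}{2}, 1]$ and jump sequence bounded by $L = \|d\|_\infty$, which depends only on the fixed $\varepsilon$ from Lemma \ref{particle_insertion_lemma}. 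Lemma \ref{backward_equation_short_time_estimate} then gives $v(\sigma, \cdot) \geq c\,(1 \wedge \sigma^{1/(1-\beta)})$. Undoing the rescaling via $s = \mu\sigma$ and using the exact cancellation $\theta^j \cdot \mu^{-1/(1-\beta)} = 1$ produces $u^{(n,n-j+1)}(\tau_{n,n-j} + s, \cdot) \geq c\, s^{1/(1-\beta)}$ for $s \leq \mu$, which gives the claim with $C = c$ and $\varepsilon(j,\beta) = \min(\mu,\, t_j - t_{j-1})$.

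For $\beta = 1$ the backward equation is the discrete heat equation, whose fundamental solution on $\Z$ is $e^{-2t} I_{k-l}(2t)$. At time $\tau_{n,n-j}$ the living particles of $u^{(n,n-j+1)}(\tau_{n,n-j}, \cdot)$ have values at least $\tfrac{1}{2}\theta^j$, and any newly inserted zero at position $k_0$ has a living neighbor within distance $L = \|d\|_\infty$. Retaining only that single term in the convolution representation and invoking the monotonicity $I_m(x) \geq I_L(x)$ for $0 \leq m \leq L$, $x > 0$, yields $u^{(n,n-j+1)}(\tau_{n,n-j} + s, k_0) \geq \tfrac{1}{2}\theta^j\, e^{-2s} I_L(2s)$, which gives the stated Bessel bound with $C(j,1) = \tfrac{1}{2}\theta^j$.

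The principal bookkeeping obstacle is to verify that $L$, and hence $C$ and $\varepsilon$, depend only on $\beta$ and $j$ rather than on $n$. This reduces to the observation that the jump-sequence bound $\|d\|_\infty$ provided by Lemma \ref{particle_insertion_lemma} depends only on the fixed $\varepsilon \leq 1/6$ used throughout the construction; since $\varepsilon$ and $\theta$ are chosen once and for all in terms of $\beta$, the constants from Lemma \ref{backward_equation_short_time_estimate} and from the heat-kernel estimate are uniform in $n$, which is precisely what is required in the subsequent compactness argument.
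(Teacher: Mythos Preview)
Your proposal is correct and follows essentially the same approach as the paper, which simply states that the result follows from the construction of $x^{(n)}$ together with Lemma~\ref{backward_equation_short_time_estimate} (and Theorem~\ref{appendix_existence_thm} for $\beta=1$); you have just unpacked the scaling and indexing that the paper leaves implicit. Your explicit scaling computation showing the cancellation $\theta^j\cdot\mu^{-1/(1-\beta)}=1$, and your direct heat-kernel argument for $\beta=1$ (which amounts to reproving the relevant part of Lemma~\ref{harnack_inequality_linear}), are exactly the mechanisms behind the paper's one-line justification.
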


\begin{proof}

The construction of $x^{(n)}$ and Lemma \ref{backward_equation_short_time_estimate} directly imply the statement for $\beta \neq 1$. The inequality for $\beta = 1$ follows from Theorem \ref{appendix_existence_thm} in the same way as Lemma \ref{backward_equation_short_time_estimate}.
\end{proof}

\noindent With the previous preparation we can prove the main result:

\begin{proof}[Proof of Theorem \ref{main_result}]
Using the explicit construction of the sequence $x^{(n)}$ and Lemma \ref{uniform_Hoelder_continuity_lemma} it is easy to see that $x^{(n)}(\cdot,k)$ is uniformly H\"older continuous on $[0,T]$ for every $k \in \Z$ and $T > 0$ defined above. By the Arzela-Ascoli Theorem we can, after an exhaustion $T_N \nearrow +\infty$ and a diagonal argument, extract a subsequence (not renamed) and a limit $x = x(t,k)$ such that

\begin{align}
x^{(n)}(t,k) \to x(t,k) \ \mathrm{as} \ n \to \infty \ \mathrm{for \ all} \ t \in [0,\infty) \ \mathrm{and} \ k \in \Z,
\end{align}

\noindent and $x(.,k) \in C([0,\infty))$ for all $k \in \Z$. Let

\begin{align*}
\eta_k^{(n)} = \inf \{t > 0: \ x^{(n)}(t,k) = 0 \}
\end{align*}

\noindent denote the vanishing time of the $k$-th particle. By another diagonal argument we can further restrict ourselves to a subsequence such that the particle vanishing times $\{\eta_k^{(n)} \}$ converge:

\begin{align}
\eta_k^{(n)} \to \eta_k \in \{t_j \}_{j = 0,..,\infty} \cup +\infty \ \mathrm{as} \ n \to \infty \ \mathrm{for \ all} \ k \in \Z.
\end{align}

\noindent In fact if $\eta_k^{(n)}$ is bounded we even have $\eta_k^{(n)} = \eta_k$ for $n$ large enough because the set $\{t_j \}$ is discrete. Otherwise we can assume $\eta_k^{(n)} \to +\infty$ increasingly. If $\eta_k = t_j $ we check that this is indeed the vanishing time of $x(.,k)$:

\begin{align}
0 = \lim_{n \to \infty} x^{(n)}(t,k) = x(t,k),
\end{align}

\noindent for every $t > \eta_k $. Furthermore, by Lemma \ref{Vanishing_time_estimate} we have for $\beta \neq 1$

\begin{align}
x^{(n)}(t,k) \geq C(t_j - t)^{\frac{1}{1-\beta}} > 0 \ \mathrm{for \ all } \ t \in [t_j - \varepsilon,t_j),
\end{align}

\noindent for large enough $n$, hence also $x(t,k) > 0$ for $t < t_j$, and the analogous argument works for $\beta = 1$ with the corresponding estimate. Next we show that $x$ solves equation (\ref{coarsening_equation}). Fix $k \in \Z$, $0 \leq j_1 < j_2$ and integrate equation (\ref{coarsening_equation}) with $x^{(n)}$ from $s_1$ to $s_2$, where $t_{j_1} \leq s_1 < s_2 \leq t_{j_2}$, to obtain

\begin{align}
x^{(n)}(s_2,k) - x^{(n)}(s_1,k) = \int_{s_1}^{s_2} \Delta_\sigma F_\beta \left(x^{(n)} \right) (t,k) \ \rd t.
\end{align}

\noindent By construction, the function $t \mapsto \sigma_\pm(x^{(n)}(t,.),k)$ is constant on $(s_1,s_2)$. Furthermore, by the construction of the sequence $x^{(n)}$, the number of values of $\sigma_\pm(x^{(n)}(t,.),k)$ regarded as a sequence in $n$ is finite, hence we can assume this to be independent of $n$ after taking another subsequence, in other words

\begin{align}
\sigma_\pm(x^{(n)}(t,.),k) = \sigma_\pm(x(t,.),k) \ \mathrm{for \ all } \ s_1 < t < s_2.
\end{align} 

\noindent Let $s_1 < t < s_2$. If $x(t,k) > 0$, then also $x^{(n)}(t,k) > 0$ for large $n$ and by the point-wise convergence of $x^{(n)}$ and the above identity we conclude

\begin{align}
\Delta_\sigma F_\beta(x^{(n)}) (t,k) \to \Delta_\sigma F_\beta(x(t,k)) \ \mathrm{as} \ n \to \infty.
\end{align}

\noindent On the other hand, $x(t,k) = 0$ implies $\eta_k \leq t_{j_1}$ and $x^{(n)}(t,k) = 0$, since $\eta^{(n)}_k = \eta_k$ for $n$ large, hence we also get $\Delta_\sigma F_\beta(x^{(n)}) (t,k) \to \Delta_\sigma F_\beta(x(t,k))$. Then we can apply dominated convergence, where we use that $x^{(n)}$ is locally bounded in the case $\beta > 0$ and the lower bound from Lemma \ref{Vanishing_time_estimate} in the case $\beta < 0$, and pass to the limit in the above integral identity to conclude

\begin{align}
x(s_2,k) - x(s_1,k) = \int_{s_1}^{s_2} \Delta_\sigma F_\beta(x)(t,k) \ \rd t,
\end{align}

\noindent which shows that $x$ is a solution to the coarsening equation.\newline

\noindent It remains to show that $x$ satisfies the desired bounds. We first consider the case $\beta \neq 1$. By (\ref{approximate_property_4}) we have

\begin{align}
x^{(n)} (t_j,k) \geq \frac{1}{2}\theta^{j},
\end{align} 

\noindent for all $j \in \N$ and living particles, and by convergence the same inequality holds in the limit $n \to \infty$. In particular we have

\begin{align}
\langle x(t_j) \rangle_{\sigma}^- \geq \frac{1}{2}\theta^{j}.
\end{align}

\noindent On the other hand, it is easy to check that $\langle x \rangle_{\sigma}^-$ is conserved between particle vanishings, hence we have

\begin{align}
\langle x(t) \rangle_{\sigma}^- \geq \frac{1}{2}\theta^{j},
\end{align}

\noindent whenever $t_{j} \leq t \leq t_{j+1}$. Because of (\ref{vanishing_time_formula}) we have

\begin{align}
t_j \sim \theta^{(1-\beta)j}.
\end{align}

\noindent This means that $t_{j} \leq t \leq t_{j+1}$ implies

\begin{align}
\frac{1}{1-\beta}\log_\theta(t) - C \leq j \leq \frac{1}{1-\beta}\log_\theta(t) + C,
\end{align}

\noindent and consequently

\begin{align}
\langle x(t) \rangle_{\sigma}^- \gtrsim t^\frac{1}{1-\beta}.
\end{align}

\noindent The upper bound on $||x||_\infty$ follows in the same way by (\ref{approximate_property_3}). For the case $\beta = 1$ the same argument applies, but in this case we have

\begin{align}
t_j = jT,
\end{align}

\noindent which leads to 

\begin{align}
\langle x(t) \rangle_{\sigma}^- \gtrsim \theta^{\frac{t}{T}} = \exp(\lambda t).
\end{align}

\noindent The restriction $\lambda \leq 2$ follows from the fact that equation (\ref{coarsening_equation}) for $\beta = 1$ implies $\dot{x} \leq 2x$.
\end{proof}

\noindent The proof of Corollary \ref{instability_result} follows easily with a very similar argument:

\begin{proof}[Proof of Corollary \ref{instability_result}]
Let $\varepsilon > 0$. It suffices to consider the case $c = 1/2$. We apply the same construction as in the proof of Theorem \ref{main_result} (with potentially different $\varepsilon$ in the definition of $\theta$) to get an approximate solution $x^{(n)}$, but in the iteration scheme we apply an additional step to $u^{(n,n)}$, satisfying $1/2 \leq u^{(n,n)} \leq 1 $ according to(\ref{iteration_estimate}). Using Lemma \ref{key_lemma} with $\varepsilon$ from above yields $\tilde{T}$, only depending on $\beta$ and $\varepsilon$ and a solution $u^{(n,n+1)}$ to the backward equation that satisfies 

\begin{align}
\left | u^{(n,n+1)}(\tau_{n,n} + \tilde{T}, \cdot) - \frac{1}{2} \right | \leq \varepsilon.
\end{align}

\noindent Then the sequence $x^{(n)}$ has the same properties as in the proof of Theorem \ref{main_result} with the addition that the initial data are in an $\varepsilon$-ball around $1/2$ for all $n$ by the above inequality, which gives the desired result after sending $n$ to infinity.

\end{proof}

\noindent \textbf{Remarks.} 

\begin{enumerate}
\item For $\beta \in (0,1)$ the achieved growth rate is optimal, because equation (\ref{coarsening_equation}) implies $\dot{x} \leq 2x^\beta$, which can be integrated to obtain $||x||_\infty \lesssim t^\frac{1}{1-\beta}$.
\item The fact that it was possible to choose $\tau_{n,j} - \tau_{n,j-1} = T\theta^{n+1-j}$ in the iteration step was crucial to obtain the desired growth rates. By comparison principle however, the estimate (\ref{iteration_estimate}) also remains valid if we choose a much larger time-span between particle insertions. This means that the above method can be adapted to produce solutions that are unbounded but grow arbitrarily slowly.
\item For convenience we chose $x^{(n)}_\mathrm{ter}$ to be constant in the approximation scheme. For the construction however we only used that $\frac{1}{2}\theta^n \leq x^{(n)}_\mathrm{ter} \leq \theta^n$ so that one can use arbitrary sequences satisfying these bounds in our construction scheme to produce solutions with the same coarsening behaviour.
\end{enumerate}

\newpage

\appendix

\section{Appendix}

Here we we address all technical results that were used in the previous sections and either prove them or give a reference. In the first three parts we discuss aspects of the discrete fast diffusion equation, while the rest of the appendix contains results about parabolic H\"older regularity in the discrete setting.

\subsection{The equation $\del_t u = \Delta G_\beta(u)$}

\noindent We consider the Cauchy problem for the discrete fast diffusion equation 

\begin{align} \label{appendix_backward_equation}
\begin{cases}
\del_t u = \frac{\beta}{|\beta|} \Delta u^\beta =  \Delta G_\beta(u) \quad \mathrm{in} \ (0,\infty) \times \Z,  \\
u(0,\cdot) = u_0,
\end{cases}
\end{align}

\noindent with $\beta \in (-\infty,0) \cup (0,1]$, $u_0 \in \ell_+^\infty(\Z)$ and 

\begin{align}
\Delta u = u(k-1) -2u(k) + u(k+1).
\end{align}

\noindent We are concerned with the long-time existence of classical solutions:

\begin{defn}
A function $u: [0,\infty) \to \ell_+^\infty(\Z)$ is a solution to problem (\ref{appendix_backward_equation}) if the following conditions are satisfied:

\begin{enumerate}
\item $t \mapsto u(t,\cdot)$ is in $C^0([0,\infty);\ell_+^\infty(\Z))$ and $u(0,\cdot) = u_0$.

\item For every $k \in \Z$ we have $u(.,k) \in C^1((0,\infty);\R_{> 0})$ and  

\begin{align}
\frac{\rd}{\rd t} u(t,k) = \Delta G_\beta(u)(t,k),
\end{align}

\noindent for all $k \in \Z$ and $t > 0$.
\end{enumerate}

\end{defn}

\noindent For positive $\beta$ it is not hard to prove the existence of a solution for any kind of initial data, since $G_\beta$ is bounded at zero and one has simple a-priori estimates due to the comparison principle (see below). For negative $\beta$ the existence of a sufficiently regular solution for arbitrary data cannot be expected due to $G_\beta(x)$ becoming singular at $x = 0$. Similar to the result in \cite{HelmersNV2016}, we have to restrict to initial data which satisfy a certain positivity condition:

\begin{defn}
For $u \in \ell_+^\infty(\Z)$ and $d > 0$ let 

\begin{align}
\sigma_+(u,k,d) &= \inf \{l > k: u(l) \geq d \}. 
\end{align}

\noindent Then for any $L \in \N$ and $d > 0$ we define

\begin{align}
\mathcal{P}_{L,d} = \left \{u \in \ell_+^\infty: \ \sup_{k \in \Z} \sigma_+(u,k,d) \leq L \right \}.
\end{align}

\end{defn}

\noindent In other words, $u \in \mathcal{P}_{L,d}$ means that particles with large mass cannot be very far apart. This is also relevant for the case $ \beta > 0$ since it allows us to prove certain Harnack-type positivity estimates. We have the following result: 

\begin{thm}\label{appendix_existence_thm}
Let $\beta \in (-\infty,0) \cup (0,1]$, and consider initial data $u_0 \in \mathcal{P}_{L,d}$. Then the following statements hold:
\begin{enumerate}

\item If $\beta \in (-\infty,0) \cup (0,1)$, there exists a positive constant $c = c \left(L, d, \beta, ||u_0||_\infty \right)$ and a solution $u$ to equation (\ref{appendix_backward_equation}) on $[0,\infty)$ with initial data $u_0$ satisfying

\begin{align}
u(t,k) &\geq c \left( 1 \wedge t^{\frac{1}{1-\beta}} \right), \ \mathrm{for} \ \mathrm{all} \ k \in \Z. \label{appendix_positivity_estimate}
\end{align}

\item If $\beta = 1$, the same statement holds with estimate (\ref{appendix_positivity_estimate}) replaced by

\begin{align}
u(t,k) &\geq c \exp(-2t)I_L(2t), 
\end{align}

\noindent where $I_L(t)$ denotes the $L$-th modified Bessel function of the first kind.

\item Comparison principle: If $ c_1 \leq u_0 \leq c_2$, then $u$ satisfies these bounds for all times.

\end{enumerate}
\end{thm}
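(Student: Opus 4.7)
The plan is to treat the linear case $\beta = 1$ directly, establish the comparison principle by a linearization argument, and handle existence for $\beta \in (-\infty, 0) \cup (0, 1)$ by a regularization-and-compactness scheme whose heart is the positivity estimate.

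For $\beta = 1$ the equation collapses to the discrete heat equation $\partial_t u = \Delta u$, whose fundamental solution on $\mathbb{Z}$ is $p(t, k) = e^{-2t} I_{|k|}(2t)$, with $I_n$ the modified Bessel function of the first kind. The solution $u(t, k) = \sum_j p(t, k-j) u_0(j)$ is classical, and part 2 of the theorem follows by choosing, for each $k$, an index $j_0$ with $|j_0 - k| \leq L$ and $u_0(j_0) \geq d$ (available by $u_0 \in \mathcal{P}_{L, d}$), then using that $I_n(t)$ is non-increasing in $|n|$ to obtain $u(t, k) \geq d\, e^{-2t} I_L(2t)$. The comparison principle is proved by rewriting the equation in divergence form $\partial_t u = \partial^-(a \partial^+ u)$, with $a$ positive as long as $u$ is bounded above and below, and then applying the discrete maximum principle to the linear equation satisfied by the difference of two ordered solutions; comparing with the constant sub- and supersolutions $c_1$ and $c_2$ gives the asserted bounds on $u$.

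For $\beta \neq 1$ I would construct the solution by approximation. Truncate to a finite lattice $\{-N, \ldots, N\}$ with constant-extension boundary values, and for $\beta < 0$ also replace $u_0$ by $u_0 + \varepsilon$ to move away from the singularity of $G_\beta$ at $0$. This yields a finite-dimensional ODE system whose right-hand side is locally Lipschitz on the positive cone, so Picard (for $\beta < 0$) or Peano (for $\beta \in (0, 1)$) gives a local solution. Comparison with constants gives the upper bound $u \leq \|u_0\|_\infty$, and the positivity estimate below gives a uniform lower bound; together these yield global existence of the truncated solutions as well as equicontinuity in time via $|u(t_2, k) - u(t_1, k)| \leq \int_{t_1}^{t_2} |\Delta G_\beta(u)(s, k)| \, \rd s$. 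Arzela-Ascoli with a diagonal extraction then lets me send $N \to \infty$ (and $\varepsilon \to 0$ for $\beta < 0$), producing a classical solution on all of $\mathbb{Z} \times [0, \infty)$.

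The core of the argument is the positivity estimate \eqref{appendix_positivity_estimate}. For each $k \in \mathbb{Z}$ pick $j_0$ with $|j_0 - k| \leq L$ and $u_0(j_0) \geq d$. A short-time continuity argument using the comparison principle shows $u(t, j_0) \geq d/2$ on an interval $[0, \tau]$ with $\tau = \tau(L, d, \beta, \|u_0\|_\infty)$. At a neighbour $k'$ of $j_0$ the equation yields the scalar inequality $\partial_t u(t, k') \geq G_\beta(d/2) - 2 G_\beta(u(t, k')) + G_\beta(u(t, k''))$, where $k''$ is the other neighbour of $k'$; estimating $G_\beta(u(t, k''))$ from below by $G_\beta(\|u_0\|_\infty)$ (absorbing signs) reduces matters to an ODE inequality. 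For $\beta < 0$ the term $-2 G_\beta(v) = 2 v^\beta$ blows up as $v \to 0^+$, and $\dot v = 2 v^\beta - C$ admits a subsolution behaving like $t^{1/(1-\beta)}$ for small $t$; for $\beta \in (0, 1)$ the constant $G_\beta(d/2)$ dominates and $\dot v \geq G_\beta(d/2) - 2 v^\beta$ yields even linear growth, which dominates $t^{1/(1-\beta)}$. Iterating this one-hop propagation at most $L$ times transports the lower bound from $j_0$ to $k$ and preserves the exponent $1/(1-\beta)$ at the cost of constants depending on $L$, $d$, $\beta$, $\|u_0\|_\infty$. The main obstacle is this multi-hop propagation together with the bookkeeping of constants through the $\varepsilon \to 0$ limit in the $\beta < 0$ case, since the resulting lower bound must be independent of $\varepsilon$ for the compactness argument to close.
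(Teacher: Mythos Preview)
Your treatment of $\beta=1$, the comparison principle, and the overall regularization-and-compactness scheme is close to the paper's; the paper also uses finite (periodic) approximations and a $\pm\varepsilon$-perturbed maximum principle, then passes to the limit.

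The serious gap is in the positivity estimate for $\beta<0$. Your one-hop propagation relies on the differential inequality
\[
\partial_t u(t,k') \;\geq\; G_\beta(d/2) - 2G_\beta(u(t,k')) + G_\beta(u(t,k'')),
\]
and you propose to bound $G_\beta(u(t,k''))$ from below by $G_\beta(\|u_0\|_\infty)$. But $G_\beta$ is increasing, so a lower bound on $G_\beta(u(t,k''))$ would require a \emph{lower} bound on $u(t,k'')$, which is precisely what you do not have: if $u(t,k'')$ is tiny, then for $\beta<0$ one has $G_\beta(u(t,k''))=-u(t,k'')^\beta\to-\infty$. Consequently the reduced inequality $\dot v \geq 2v^\beta - C$ cannot be derived, and the hop cannot be started. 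This is not a matter of bookkeeping; the mechanism is genuinely different for negative $\beta$, because the uncontrolled neighbour can drain mass arbitrarily fast. The paper circumvents this by an indirect argument: assuming the estimate fails along a sequence, rescaling so that the hypothetical small particle sits at scale one, using persistence of nearby ``traps'' (particles with $u_0\geq d$, shown via a H\"older estimate to stay $\geq d/2$ on a uniform interval) to enclose a finite block $B$, and then deriving a differential inequality for the \emph{total} mass $M(s)=\sum_{k\in B} v(s,k)$. The point is that in $\dot M$ the bad boundary contributions from the two outer neighbours are controlled by the traps, while the interior singular terms only help; this yields $\dot M \gtrsim M^\beta$ near the alleged vanishing time and a contradiction.

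For $0<\beta<1$ your hop-by-hop idea is in fact what the paper does, since there the offending neighbour term $G_\beta(u(t,k''))=u(t,k'')^\beta\geq 0$ can simply be dropped. However, ``linear growth'' only covers the first hop: at the second hop the source is $u(t,k')^\beta$, which is not a constant but behaves like $t^{\beta/(1-\beta)}$. To iterate you need the sharper statement that $\dot v \geq c\,t^{\beta/(1-\beta)} - 2v^\beta$ forces $v(t)\geq \eta(c)\,t^{1/(1-\beta)}$; the paper obtains this by the substitution $w(t)=t^{1/(\beta-1)}v(t)$, which turns the inequality into $t\dot w \geq c - \theta(w)$ for a strictly increasing $\theta$, from which $w\geq\theta^{-1}(c)$ follows. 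Filling in this rescaling step would make your $\beta\in(0,1)$ argument complete; for $\beta<0$ you need to replace the one-hop scheme by a block-mass argument as above.
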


\noindent In this and the next two sections we give a full proof of the above result. The general strategy to prove existence of solutions for equation (\ref{backward_equation}) is to use regularization and standard ODE theory. Instead of infinitely many particles with non-negative mass we first consider a periodic $N$-particle ensemble where particles have strictly positive mass. The first important a-priori estimate is the comparison principle:

\begin{lem}[Finite positive ensemble]
Let $\mathbb{T}_N$ denote the one-dimensional periodic lattice with $N$ lattice points. Let $u_0 \in \ell_+^\infty(\mathbb{T}_N)$ with $ 0 < \delta \leq u_0 \leq C$. Then there exists a unique solution $u:[0,\infty) \to \ell_+^\infty(\mathbb{T}_N)$ of (\ref{appendix_backward_equation}) with $\delta \leq  u(t,\cdot) \leq C$.
\end{lem}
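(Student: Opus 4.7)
The plan is to view equation (\ref{appendix_backward_equation}) restricted to $\mathbb{T}_N$ as a finite-dimensional system of ODEs in $\R^N$ and combine classical ODE theory with a discrete maximum principle to get the two-sided bound, which then automatically yields global existence.

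First I would establish local existence and uniqueness. Since $G_\beta \in C^\infty((0,\infty))$, the right-hand side $u \mapsto \Delta G_\beta(u)$ is locally Lipschitz on the open set $\Omega = (0,\infty)^N \subset \R^N$. Hence the Picard-Lindel\"of theorem applied to the initial data $u_0 \in \Omega$ produces a unique maximal classical solution $u: [0, T_{\max}) \to \Omega$, and the standard extension criterion says that either $T_{\max} = \infty$ or $u(t,\cdot)$ leaves every compact subset of $\Omega$ as $t \to T_{\max}$.

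Next I would prove the a priori bound $\delta \leq u(t,\cdot) \leq C$ by a discrete maximum principle. The key observation is that $G_\beta$ is strictly increasing on $(0,\infty)$ in all cases considered: for $\beta \in (0,1]$ we have $G_\beta(u) = u^\beta$, and for $\beta < 0$ we have $G_\beta(u) = -u^\beta$, both monotone increasing on $(0,\infty)$. Consequently, at any time $t \in [0, T_{\max})$ and any index $k^\ast$ where $u(t,\cdot)$ attains its maximum on $\mathbb{T}_N$, monotonicity of $G_\beta$ gives
\begin{align*}
\Delta G_\beta(u)(t, k^\ast) = G_\beta(u(t, k^\ast{-}1)) + G_\beta(u(t, k^\ast{+}1)) - 2 G_\beta(u(t, k^\ast)) \leq 0,
\end{align*}
and the symmetric inequality holds at a minimizing index. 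By Danskin's theorem (or by approximating $\max$ and $\min$ with smooth $L^p$-norms and passing to the limit), the functions $M(t) := \max_k u(t,k)$ and $m(t) := \min_k u(t,k)$ satisfy $\dot{M} \leq 0$ and $\dot{m} \geq 0$ in the sense of Dini derivatives, so $\delta \leq m(t) \leq M(t) \leq C$ throughout $[0, T_{\max})$.

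Finally, these bounds confine the trajectory to the compact subset $[\delta, C]^N \subset \Omega$, so the extension criterion forces $T_{\max} = \infty$. Uniqueness of the global solution follows from the local Picard-Lindel\"of uniqueness applied on each interval. I do not expect any real obstacle; the only mild subtlety is the non-smoothness in $t$ of $M$ and $m$, which can either be handled by the Dini-derivative argument sketched above, or avoided altogether by noting that for each $c \in [\delta, C]$ the constant sequence $v \equiv c$ is a stationary solution of the ODE system and then invoking the component-wise comparison principle for Lipschitz ODEs between $u$ and the constant super/subsolutions $C$ and $\delta$.
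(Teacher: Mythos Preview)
Your proof is correct and follows essentially the same approach as the paper: both exploit that $G_\beta$ is strictly increasing to run a discrete maximum principle on the finite system, obtain the two-sided bound $\delta \leq u \leq C$, and then use this to extend globally. The only technical difference is in how the non-smoothness of $\max_k u$ and $\min_k u$ is handled: the paper perturbs the equation by $\pm\varepsilon$ to force a strict inequality at an interior extremum and then lets $\varepsilon \to 0$, whereas you use Dini derivatives (or comparison with the constant stationary solutions, which works because the system is cooperative). Both devices are standard and equivalent here.
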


\begin{proof}
The proof is very similar to Lemma 2 in \cite{EsedogluG2009} and a standard maximum principle argument. Because $u_0 \geq \delta$, standard ODE theory gives the existence and uniqueness of a smooth solution $u$ on the time interval $[0,t^*]$ to equation (\ref{appendix_backward_equation}) with $\delta /2 \leq u(t,.) \leq 2C$ for some positive $t^*$. For small $\varepsilon > 0$ we then consider the solution $u_\varepsilon$ of the modified problem

\begin{align}
\del_t u_\varepsilon = \Delta G_\beta(u_\varepsilon) + \varepsilon, \\
u_\varepsilon(0,\cdot) = u_0,
\end{align}

\noindent that exists on the same time interval as $u$ and satisfies the same bounds after possibly making $t^*$ smaller. Because $G_\beta$ is smooth on $[\delta/2,2C]$ we have that $u_\varepsilon \to u$ uniformly on $[0,t^*]$. We claim that $u_\varepsilon$ attains its minimum over $[0,t^*] \times \mathbb{T}_N$ at $t = 0$. If not, there exists $t_0 \in (0,t^*]$ and $k_1$ such that $u_\varepsilon(t_1,k_1)$ is the absolute minimum. Consequently we get 

\begin{align}
0 \geq \del_t u_\varepsilon(t_1,k_1) = \Delta G_\beta(u_\varepsilon)(t_1,k_1) + \varepsilon \geq \varepsilon,
\end{align}

\noindent a contradiction. Here we used that $G_\beta$ is increasing. Hence $u_\varepsilon(t,.) \geq \delta$ for all $t \in [0,t^*]$ and, sending $\varepsilon \to 0$, the same bound holds for $u$. The same argument for the maximum where $+\varepsilon$ is replaced with $-\varepsilon$ yields that $u \leq C$. Iterating from $t = t^*$, we see that the solution can be extended to $[0,\infty)$ and always satisfies the desired bounds.

\end{proof}

\noindent From this result we can easily pass to the limit as $N \to \infty$ to obtain solutions for infinite numbers of particles:

\begin{cor}[Infinite positive ensemble] \label{comparison_principle_lemma}
Let $u_0 \in \ell_+^\infty(\Z)$ with $ 0 < \delta \leq u_0 \leq C$. Then there exists a solution $u:[0,\infty) \to \ell_+^\infty(\Z)$ of (\ref{appendix_backward_equation}) with $\delta \leq  u(t,.) \leq C$.
\end{cor}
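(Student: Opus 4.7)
The plan is to approximate by the finite periodic ensembles given by the previous lemma and then pass to the limit using Arzel\`a--Ascoli together with the uniform bounds $\delta \leq u^N \leq C$. First I would construct, for each $N \in \N$, an approximating solution $u^N$ as follows. Define $u_0^N \in \ell_+^\infty(\mathbb{T}_{2N+1})$ by $u_0^N(k) = u_0(k)$ for $k \in \{-N,\dots,N\}$, identified with $\mathbb{T}_{2N+1}$ in the obvious way. Since $\delta \leq u_0^N \leq C$, the previous lemma produces a unique solution $u^N : [0,\infty) \to \ell_+^\infty(\mathbb{T}_{2N+1})$ to equation (\ref{appendix_backward_equation}) on the torus satisfying the same pointwise bounds for all times.

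Next I would extract uniform time-regularity for $u^N$. Because $\delta \leq u^N \leq C$ and $G_\beta$ is smooth on $[\delta,C]$, the function $G_\beta$ is Lipschitz there with some constant $L = L(\delta,C,\beta)$. From the equation this gives $|\partial_t u^N(t,k)| \leq 4L$ uniformly in $N$, $t$ and $k$, so each $u^N(\cdot,k)$ is uniformly Lipschitz in time. Identifying $\mathbb{T}_{2N+1}$ with $\{-N,\dots,N\} \subset \Z$, for any fixed $k \in \Z$ and any bounded time interval $[0,T]$, the sequence $\{u^N(\cdot,k)\}_{N \geq |k|+2}$ is uniformly bounded and equicontinuous on $[0,T]$. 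Arzel\`a--Ascoli combined with the usual diagonal extraction over an exhaustion of $\Z \times [0,\infty)$ yields a subsequence (not renamed) and a limit $u : [0,\infty) \to \ell_+^\infty(\Z)$ with $\delta \leq u \leq C$ such that $u^N(\cdot,k) \to u(\cdot,k)$ locally uniformly in $t$ for each $k \in \Z$.

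To verify that $u$ solves the equation, I would integrate (\ref{appendix_backward_equation}) for $u^N$ over $[0,t]$ to obtain
\begin{align}
u^N(t,k) - u_0(k) = \int_0^t \Delta G_\beta(u^N)(s,k)\, \rd s,
\end{align}
valid for $N \geq |k|+2$ (so that the neighbours of $k$ in $\mathbb{T}_{2N+1}$ are the usual $k-1$ and $k+1$, with no wrap-around). By the pointwise convergence and continuity of $G_\beta$ on $[\delta,C]$, the integrand converges pointwise and is uniformly bounded by $4 \max_{[\delta,C]} |G_\beta|$, so dominated convergence gives
\begin{align}
u(t,k) - u_0(k) = \int_0^t \Delta G_\beta(u)(s,k)\, \rd s.
\end{align}
The right-hand side is continuous in $t$, hence $u(\cdot,k) \in C^1((0,\infty); \R_{>0})$ and the ODE holds pointwise, establishing that $u$ is a classical solution. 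The two-sided bound $\delta \leq u \leq C$ is inherited from the approximants.

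I do not foresee a serious obstacle: the only small points to take care of are choosing the periodization so that boundary/wrap-around effects vanish for every fixed site once $N$ is large enough, and confirming that the uniform Lipschitz-in-time bound survives the choice $\delta > 0$ (it does, precisely because $G_\beta$ is smooth on $[\delta,C]$ even in the fast-diffusion regime $\beta < 0$).
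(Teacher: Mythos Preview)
Your proposal is correct and follows essentially the same route as the paper: periodic approximants from the finite-ensemble lemma, a uniform Lipschitz-in-time bound coming from $\delta \leq u^N \leq C$, Arzel\`a--Ascoli with a diagonal argument, and passage to the limit in the integrated equation via dominated convergence. The only cosmetic differences are your explicit choice of periodization on $\mathbb{T}_{2N+1}$ and your attention to the wrap-around issue, which the paper leaves implicit.
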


\begin{proof}
This is a standard compactness argument. We choose $u_0^{(N)}$ to be $N$-periodic such that $u_0^{(N)}(k) \to u_0(k)$ for each $k \in \Z$. Let $u^{(N)}$ be the corresponding solutions from the above lemma. Then due to the a-priori bounds $\delta \leq  u^{(N)} \leq C$ and equation (\ref{appendix_backward_equation}) we have

\begin{align}
\left|\frac{\rd}{\rd t} u^{(N)} \right| = \left |\Delta G_\beta \left(u^{(N)} \right) \right| \leq K(\delta,C),
\end{align}

\noindent hence the solutions are uniformly Lipschitz continuous. Applying the Arzela-Ascoli Theorem and a diagonal argument we can extract a convergent subsequence (not relabeled) such that $u^{(N)}(\cdot,k) \to u(\cdot,k)$ for some uniformly on compact time intervals, where $u(\cdot,k) \in C^0([0,\infty)$. In particular $u$ satisfies the same bounds as $u^{(N)}$. Integrating (\ref{appendix_backward_equation}) in time and passing to the limit (which is possible due to the a-priori bounds) then yields 

\begin{align}
u(t,k) = u_0(k) + \int_{0}^{t} \Delta_\sigma G_\beta(u)(s,k) \ \rd s.
\end{align}

\noindent This in turn shows that $u(.,k)$ is continuously differentiable and solves (\ref{appendix_backward_equation}) pointwise. Again, the bounds on $u$ yield Lipschitz continuity in $\ell_+^\infty(\Z)$.
\end{proof}

\noindent For our purpose, we need the existence of solutions in particular for initial data with mass zero particles. The general strategy is to approximate the initial data by regularized data via

\begin{align}
u_{0,\delta} = u_0 \vee \delta.
\end{align}

\noindent The above existence result then yields long-time solutions $u_\delta$ with initial data $u_{0,\delta}$. In the case $\beta > 0$ one can pass to the limit $\delta \to 0$ in the same manner as above, since $G_\beta$  is bounded at zero, yielding a general existence result:

\begin{cor}[Existence for positive $\beta$] 
Let $\beta \in (0,1]$ and $u_0 \in \ell_+^\infty(\Z)$. Then there exists a solution $u:[0,\infty) \to \ell_+^\infty(\Z)$ of (\ref{appendix_backward_equation}).
\end{cor}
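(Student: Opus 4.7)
The plan is the one already indicated in the paragraph preceding the statement: regularize from below, apply Corollary \ref{comparison_principle_lemma}, and pass to the limit. Concretely, for each $\delta \in (0,1]$ set $u_{0,\delta} = u_0 \vee \delta$, so that $\delta \leq u_{0,\delta} \leq C$ with $C := ||u_0||_\infty \vee 1$, and $u_{0,\delta} \to u_0$ pointwise as $\delta \to 0$. By Corollary \ref{comparison_principle_lemma} there exists a solution $u_\delta : [0,\infty) \to \ell_+^\infty(\Z)$ with $u_\delta(0,\cdot) = u_{0,\delta}$ and $\delta \leq u_\delta \leq C$.

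The critical observation for $\beta \in (0,1]$ is that $G_\beta(x) = x^\beta$ is continuous and bounded by $C^\beta$ on $[0,C]$. Plugging this back into the equation yields the uniform estimate
\[
\left| \tfrac{\rd}{\rd t} u_\delta(t,k) \right| = |\Delta G_\beta(u_\delta)(t,k)| \leq 4 C^\beta,
\]
uniformly in $\delta$, $t$, and $k$. This is precisely where the argument departs from the $\beta < 0$ case: the bound does not deteriorate as $\delta \to 0$, because controlling $G_\beta(u_\delta)$ from above does not require a positive lower bound on $u_\delta$. Thus $\{u_\delta(\cdot,k)\}_\delta$ is equi-Lipschitz and uniformly bounded for each $k$, so Arzela-Ascoli together with a diagonal extraction over $k \in \Z$ and a compact exhaustion $[0,N] \nearrow [0,\infty)$ produces a subsequence (not relabeled) and a limit $u$ such that $u_\delta(\cdot,k) \to u(\cdot,k)$ uniformly on compact subsets of $[0,\infty)$ for every $k \in \Z$. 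The limit inherits the bounds $0 \leq u \leq C$, the initial condition $u(0,\cdot) = u_0$, and Lipschitz continuity in $t$ with constant $4C^\beta$, uniformly in $k$; in particular $u \in C^0([0,\infty); \ell_+^\infty(\Z))$.

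To identify $u$ as a solution I would pass to the limit in the integrated form
\[
u_\delta(t,k) = u_{0,\delta}(k) + \int_0^t \Delta G_\beta(u_\delta)(s,k) \, \rd s,
\]
using continuity of $G_\beta$ on $[0,C]$, pointwise convergence of $u_\delta$, and dominated convergence with majorant $4 C^\beta$. The limit identity
\[
u(t,k) = u_0(k) + \int_0^t \Delta G_\beta(u)(s,k) \, \rd s
\]
then gives $u(\cdot,k) \in C^1([0,\infty))$ with $\del_t u = \Delta G_\beta(u)$ holding pointwise.

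The only point that requires a moment's care is the behaviour at points where $u(t,k) = 0$: for $\beta \in (0,1)$ the function $G_\beta$ is merely H\"older continuous at the origin, so one must check that no pathology appears when values of $u_\delta$ collapse to zero. Since the equation is posed pointwise and only involves values of $G_\beta$ on non-negative arguments where it is continuous, this causes no trouble in the integral identity. The strict positivity $\R_{>0}$ in the definition of solution can, wherever the initial data is non-trivial, be recovered by the same maximum-principle argument used in Corollary \ref{comparison_principle_lemma}, applied at the approximate level and passed to the limit.
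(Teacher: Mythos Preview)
Your proposal is correct and follows exactly the approach the paper indicates: the paper's own ``proof'' of this corollary is essentially the single sentence preceding it, namely that for $\beta > 0$ one regularizes via $u_{0,\delta} = u_0 \vee \delta$, applies Corollary~\ref{comparison_principle_lemma}, and passes to the limit $\delta \to 0$ by the same compactness argument as before, the key point being that $G_\beta$ is bounded at zero so the Lipschitz bound on $u_\delta$ is uniform in $\delta$. You have simply spelled out these details.
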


\noindent Alternatively it is likely possible to prove this result directly via an infinite dimensional fixed-point method. Since we need Corollary (\ref{comparison_principle_lemma}) for the case $\beta < 0$ anyway, the above method is the fastest for our purpose. In the next section we deal with the negative $\beta$ case, including existence and the positivity estimate (\ref{appendix_positivity_estimate}). Then we prove the positivity estimate for positive $\beta$, completing the proof of Theorem \ref{appendix_existence_thm}. \newline

\subsection{Existence of solutions for $\beta < 0$}

In the following we always assume $\beta < 0$. The key idea to prove existence of solutions is to exploit the fact that regions which are enclosed by large particles (called \textit{traps}) are screened from the rest of the particles, very similar to \cite{HelmersNV2016}. One important difference however is the fact that the backward equation does not yield a-priori estimates on the persistence of traps. We make the following definition:

\begin{defn}
We say that a solution $u$ to equation (\ref{appendix_backward_equation}) with initial data $u_0 \in  \mathcal{P}_{L,d}$ has the persistence property on $[0,T]$ if $u(0,k) \geq d$ implies $u(t,k) \geq \frac{d}{2}$ for all $t \in [0,T]$.
\end{defn}

\noindent By making use of the theory for the coarsening equation developed in \cite{HelmersNV2016} we have the following result concerning H\"older regularity:

\begin{lem}\label{trap_hoelder_estimate}
There exist constants $T' = T'(\beta,d)$ and $C = C(\beta,L) > 0$ such that the following holds: If a solution $u$ to equation (\ref{appendix_backward_equation}) with initial data $u_0 \in  \mathcal{P}_{L,d}$ has the persistence property on $[0,T]$ and $T \leq T'$, then 

\begin{align}
|u(t_2,k)-u(t_1,k)| &\leq C|t_2 - t_1|^{\frac{1}{1-\beta}},
\end{align}

\noindent for all $t_2,t_1 \in [0,T]$ and $k \in \Z$.
\end{lem}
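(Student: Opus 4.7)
The plan is to reduce the claim to the regularity theory for the coarsening equation developed in \cite{HelmersNV2016} via time reversal. Setting $v(s,k) := u(T-s,k)$ for $s \in [0,T]$, one computes $\del_s v = -\Delta G_\beta(v) = \Delta F_\beta(v)$. By the strict positivity of $u$ on positive times, which follows from the regularized construction in Corollary \ref{comparison_principle_lemma}, we have $v(s,k) > 0$ for every $s \in [0,T)$, so on this interval the ordinary Laplacian coincides with the living-particle Laplacian $\Delta_\sigma$. Hence $v$ is a classical, strictly positive solution of the coarsening equation on $[0,T)$ with terminal data $v(T,\cdot) = u_0 \in \mathcal{P}_{L,d}$.

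Next I would translate the persistence assumption. The hypothesis that $u(t,k) \geq d/2$ whenever $u_0(k) \geq d$ for all $t \in [0,T]$ amounts, in the $v$-picture, to the statement that $v(s,k) \geq d/2$ for all $s \in [0,T]$ whenever $v(T,k) \geq d$. Together with $u_0 \in \mathcal{P}_{L,d}$, this shows that the large-particle set is $L$-dense uniformly on $[0,T]$, so $v$ is partitioned into persistent traps of width at most $L$, each enclosed by particles of mass $\geq d/2$.

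Under this trap structure, the interior of each trap is a finite ODE system with boundary data bounded uniformly away from zero and from infinity, and the desired Hölder estimate follows from the analysis of the coarsening equation in \cite{HelmersNV2016}. The key mechanism is that the dominant term for the smallest interior particle is reaction-like, $\del_s v(s,k) \sim -2\, v(s,k)^\beta$, which upon direct integration yields, via subadditivity of $t \mapsto t^{1/(1-\beta)}$ for $\beta < 0$, the bound $|v(s_2,k) - v(s_1,k)| \lesssim |s_2-s_1|^{1/(1-\beta)}$ with a constant depending only on $\beta$ and the trap width $L$. The remaining particles in the trap are handled by the crude bound $|\del_s v(s,k)| \leq \sum_{j \in \{-1,0,1\}} C\, v(s,k+j)^\beta$, each summand of which satisfies the same rate estimate. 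The time restriction $T \leq T'(\beta,d)$ enters here to guarantee that these localized rate bounds remain valid throughout $[0,T]$, i.e.\ that the trap structure does not degenerate. Transferring the estimate back via $u(t,k) = v(T-t,k)$ yields the claim.

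The main obstacle is to verify that the Hölder estimate from \cite{HelmersNV2016} applies verbatim in this setting: that paper analyses the coarsening equation forward in time with zero-mass particles emerging during the evolution, whereas in our setting zeros occur in the terminal data of $v$ and the persistence hypothesis used there may be formulated slightly differently. If a direct citation is not available, the argument would have to be adapted by isolating a single trap, establishing a sharp decay estimate for the smallest particle via comparison with the model ODE $\dot w = -2 w^\beta$, and extending to the remaining particles in the trap via comparison and the subadditivity inequality mentioned above.
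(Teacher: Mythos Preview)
Your approach is correct and matches the paper's: reverse time to obtain a solution $x(s,k)=u(T-s,k)$ of the coarsening equation and invoke the H\"older estimate from \cite{HelmersNV2016}. The obstacle you raise dissolves once you note that the persistence hypothesis gives $u(T,k)\geq d/2$ at every site where $u_0(k)\geq d$, so the \emph{initial} data $x(0,\cdot)=u(T,\cdot)$ already lies in $\mathcal{P}_{L,d/2}$; this is precisely the hypothesis of Lemma~3.3 in \cite{HelmersNV2016}, which then applies directly provided $T\leq T^*(\beta,d/2)=:T'$. There is no need to worry about zeros in the terminal data $x(T,\cdot)=u_0$ or to re-derive the trap estimates by hand.
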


\begin{proof}
We consider the time reversed function

\begin{align}
x(s,k) = u(T - s,k), 
\end{align}

\noindent then $x$ is a solution to the coarsening equation for $ 0 \leq s \leq T$ that satisfies $x(0,.) \in \mathcal{P}_{L,\frac{d}{2}}$.  Applying Lemma 3.3 from \cite{HelmersNV2016} (if $T \leq T^*(\beta,\frac{d}{2}) =: T'$) yields the desired H\"older continuity for $x$, and thus also for $u$.

\end{proof}

\noindent From this result we derive the first a-priori estimate:

\begin{lem} \label{a_priori_traps}
Let $u_{0,\delta}$ be as above and let $u_\delta$ be the corresponding solution of equation (\ref{appendix_backward_equation}), which exists by Lemma \ref{comparison_principle_lemma}. Then there exists $T = T(L,d) > 0$ such that $u_\delta$ has the persistence property on $[0,T]$.
\end{lem}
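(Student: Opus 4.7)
The plan is a bootstrap (continuity) argument driven by Lemma \ref{trap_hoelder_estimate}. The self-referential nature of that lemma, which \emph{assumes} the persistence property to conclude H\"older regularity, is precisely the main obstacle: we cannot simply invoke the H\"older bound directly. Instead, I would identify the maximal time interval on which persistence is already known to hold, show that the resulting H\"older estimate leaves strict room for extension, and thus force this maximal time to exceed an explicit threshold depending only on $\beta, L, d$.

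In detail, first note that $u_{0,\delta} = u_0 \vee \delta \geq u_0$, so $u_{0,\delta} \in \mathcal{P}_{L,d}$, and for $\delta < d/2$ the set $\{k : u_{0,\delta}(k) \geq d\}$ coincides with the trap set $\{k : u_0(k) \geq d\}$. Using $T' = T'(\beta,d)$ from Lemma \ref{trap_hoelder_estimate}, define
\[
T^* := \sup\bigl\{\, T \in [0,T'] : u_\delta \text{ has the persistence property on } [0,T] \,\bigr\}.
\]
Since $u_{0,\delta}(k) \geq d > d/2$ at every trap site and $t \mapsto u_\delta(t,k)$ is continuous by Corollary \ref{comparison_principle_lemma}, we have $T^* > 0$; moreover, by pointwise continuity and passage to the supremum, the persistence property also holds on the closed interval $[0,T^*]$.

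Now Lemma \ref{trap_hoelder_estimate} applies on $[0,T^*]$, yielding a constant $C = C(\beta,L)$ such that for every trap site $k$ and every $t \in [0,T^*]$,
\[
u_\delta(t,k) \geq u_{0,\delta}(k) - C\,t^{1/(1-\beta)} \geq d - C\,t^{1/(1-\beta)}.
\]
Set
\[
T_0 := \min\!\left\{\, T',\ \left(\tfrac{3d}{4C}\right)^{1-\beta}\,\right\},
\]
which depends only on $\beta, L, d$. Suppose for contradiction that $T^* < T_0$. Then at $t = T^*$ we have $u_\delta(T^*,k) \geq d - C (T^*)^{1/(1-\beta)} > d - \tfrac{3d}{4} = \tfrac{d}{4}$, which in fact exceeds $d/2$ by the choice of threshold. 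By continuity of $u_\delta(\cdot,k)$ the strict inequality $u_\delta(t,k) > d/2$ persists on a small open interval beyond $T^*$ at every trap site, contradicting the maximality of $T^*$. Hence $T^* \geq T_0$ and the lemma holds with $T = T_0$.

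The only substantive issue is the circularity just mentioned, which the bootstrap resolves cleanly. The remaining ingredients, namely closedness of the persistence property at $T^*$ and invariance of the trap set under the regularization $u_0 \mapsto u_0 \vee \delta$, are immediate from the definitions and the continuity granted by Corollary \ref{comparison_principle_lemma}.
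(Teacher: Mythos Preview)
Your approach is essentially identical to the paper's: a bootstrap/continuity argument that uses Lemma \ref{trap_hoelder_estimate} on the maximal persistence interval to force a $\delta$-independent lower bound on its length. Two small points deserve attention. First, there is an arithmetic slip in your threshold: with $T_0 = \min\{T',(3d/(4C))^{1-\beta}\}$ you only obtain $u_\delta(T^*,k) > d - 3d/4 = d/4$, which does \emph{not} exceed $d/2$ as you then claim; replacing $3d/(4C)$ by, say, $d/(4C)$ gives $u_\delta(T^*,k) > 3d/4 > d/2$ and the argument goes through. Second, the extension step beyond $T^*$ requires continuity that is \emph{uniform} over the (generally infinite) set of trap sites, not just pointwise in $k$; this is exactly the $\ell^\infty$-continuity provided by Corollary \ref{comparison_principle_lemma}, or equivalently the Lipschitz bound $|\partial_t u_\delta|\leq 4\delta^\beta$, which the paper uses both to get $T>0$ and (implicitly) to produce the trap site with $u_\delta(T,k)\leq 3d/4$ needed for the contradiction.
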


\begin{proof}

 First we note that because the solution satisfies $u_\delta \geq \delta$ for all times then we have the Lipschitz estimate

\begin{align}
|\del_t u_\delta| \leq 4\delta^{\beta}.
\end{align}

\noindent This means that $u_\delta(0,k) \geq d$ implies $u_\delta(t,k) \geq \frac{d}{2}$ for $0 \leq t \leq t_0$, where 

\begin{align}
t_0 = \frac{d}{8\delta^\beta}.
\end{align}

\noindent Let $T$ be the largest time such that $u_\delta(0,k) \geq d$ implies $u_\delta(t,k) \geq \frac{d}{2}$ on $[0,T]$. By the above considerations we already know that $T > 0$. If $T \leq T'(\beta,d)$ we can apply Lemma \ref{trap_hoelder_estimate} and get

\begin{align}
u_\delta(t,k) \geq u_\delta(0,k) - Ct^{\frac{1}{1-\beta}}.
\end{align}

\noindent  If $u_\delta(0,k) \geq d$, this implies $u_\delta(T,k) \geq d - CT^\frac{1}{1-\beta}$. On the other hand, by the definition of $T$ there exists such a $k$ with $u_\delta(T,k) \leq \frac{3d}{4}$, hence

\begin{align}
\frac{3d}{4} \geq d - C(L)T^\frac{1}{1-\beta},
\end{align} 

\noindent which gives a lower bound for $T$ in terms of $L$ and $d$.

\end{proof}

\noindent The next a-priori estimate is crucial to get uniform H\"older bounds on $u_\delta$, as well as integral bounds which are needed to pass to the limit. 

\begin{lem} \label{a_priori_lower_bound}
Let $u_\delta$ be as above. Then there exists $c = c(\beta,L,d) > 0$ and $t^* = t^*(\beta,L,d) > 0$ such that 

\begin{align}
u_\delta(t,\cdot) \geq ct^{\frac{1}{1-\beta}},
\end{align}

\noindent for $0 \leq t \leq t^*$.
\end{lem}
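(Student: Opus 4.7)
The plan is to dominate $u_\delta$ from below by a barrier on a finite segment whose growth rate is dictated by the self-similar scaling of equation (\ref{appendix_backward_equation}). By Lemma \ref{a_priori_traps} there exists $T = T(L, d) > 0$ on which the persistence property holds, so that in every window of $L$ consecutive sites at least one particle has mass $\geq d/2$. Fix an arbitrary $k_0 \in \Z$ and pick $k_-^0 \leq k_0 \leq k_+^0$ with $|k_\pm^0 - k_0| \leq L$ and $u_\delta(t, k_\pm^0) \geq d/2$ throughout $[0, T]$.

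Define $\phi_\delta$ as the solution of the finite ODE system obtained by restricting (\ref{appendix_backward_equation}) to the interior of $[k_-^0, k_+^0]$, with Dirichlet boundary values $\phi_\delta(t, k_\pm^0) = d/2$ and initial data $\phi_\delta(0, \cdot) = \delta$. A maximum-principle argument as in Corollary \ref{comparison_principle_lemma} gives $\delta \leq \phi_\delta \leq d/2$, and since $\phi_\delta \leq u_\delta$ both initially (as $u_{0,\delta} \geq \delta$) and at the boundary (by persistence), the finite-segment comparison principle yields $\phi_\delta(t, j) \leq u_\delta(t, j)$ on $[0, T] \times [k_-^0, k_+^0]$. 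It thus suffices to produce a lower bound $\phi_\delta(t, k_0) \geq c\, t^{1/(1-\beta)}$ with $c = c(\beta, L, d) > 0$ independent of $\delta$.

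Set $m(t) = \min_{j \in [k_-^0, k_+^0]} \phi_\delta(t, j)$. As long as $m(t) < d/4$, the flat region $\{j : \phi_\delta(t, j) = m(t)\}$ has length at most $L - 1$ (else persistence would be violated), so by pigeonhole along a chain from the flat region out to $k_\pm^0$ (whose values climb from $m(t)$ to $\geq d/2$ in at most $L$ steps), some site $k_1$ with $\phi_\delta(t, k_1) = m(t)$ has a neighbour satisfying $\phi_\delta(t, k_1 \pm 1) \geq m(t) + d/(4L)$. A direct computation, using $G_\beta(u) = -u^\beta$ and the fact that all other values in the sum defining $\Delta G_\beta(\phi_\delta)(k_1)$ have $\phi^\beta$ bounded above by $(d/(4L))^\beta$, yields $\partial_t \phi_\delta(t, k_1) \geq m(t)^\beta - (d/(4L))^\beta \gtrsim m(t)^\beta$ for $m(t)$ small. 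Propagating this growth through the flat region by iterating a bounded number of times, one derives the differential inequality $\dot m(t) \gtrsim m(t)^\beta$, whose integration (a Bernoulli inequality for negative $\beta$) gives $m(t) \gtrsim t^{1/(1-\beta)}$ on a suitable interval $[0, t^*]$. The scaling invariance $u \mapsto \lambda u$, $t \mapsto \lambda^{1-\beta} t$ of equation (\ref{appendix_backward_equation}) reduces the constants to depend only on $\beta$, $L$, $d$, as required.

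The main obstacle is propagating the quantitative growth from the transition site $k_1$ to the full flat minimum region: since $\dot m(t)$ equals the minimum of $\partial_t \phi_\delta$ over the flat region (by a Danskin-type argument), a single transition site is not enough, and one must track how the flat region erodes as $\phi_\delta(t, k_1)$ grows. The cleanest alternative is to replace the iteration by a direct comparison of $\phi_\delta$ with an explicit self-similar subsolution of the form $t^{1/(1-\beta)} \Phi(j)$ on a subsegment, which sidesteps the erosion analysis and delivers the bound in one step.
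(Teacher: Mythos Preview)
Your barrier setup and finite-segment comparison are sound, but the heart of the argument---the differential inequality $\dot m(t) \gtrsim m(t)^\beta$---does not follow, and you correctly flag this yourself. Two concrete problems: first, your pigeonhole step is off. The chain from the flat region to the boundary $k_\pm^0$ does contain a jump of size $\geq d/(4L)$, but there is no reason this jump occurs \emph{adjacent to a site where $\phi_\delta = m(t)$}; it may sit between two sites both strictly above $m(t)$. At the actual edge of the flat region the neighbour could exceed $m(t)$ by an arbitrarily small amount, giving $\partial_t \phi_\delta(k_1) = m^\beta - (m+\epsilon)^\beta \approx -\beta m^{\beta-1}\epsilon$, which is not $\gtrsim m^\beta$. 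Second, even if the edge site grew at the right rate, interior flat sites have $\partial_t \phi_\delta = 0$, so $\dot m = 0$ there and the inequality for $m$ fails outright. Your suggested fix via a self-similar subsolution $t^{1/(1-\beta)}\Phi(j)$ is plausible in spirit but you have not exhibited $\Phi$, and the discrete profile equation $\frac{1}{1-\beta}\Phi \leq -\Delta\Phi^\beta$ on a segment with large Dirichlet data is not obviously solvable; this is where the work would be.

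The paper avoids the flat-region obstruction entirely by a different mechanism. It argues by contradiction: assume a sequence with $u_{\delta_n}(t_n,k_n) \ll t_n^{1/(1-\beta)}$, rescale to $v_n(s,k) = t_n^{1/(\beta-1)} u_{\delta_n}(t_n s, k)$, and study the block $B$ of sites around $k_0$ where $v_n(1,\cdot)$ stays bounded. The key quantity is the \emph{local mass} $M_n(s) = \sum_{k \in B} v_n(s,k)$ rather than the minimum. Because the equation telescopes, $\dot M_n$ involves only boundary fluxes, and since each term in $M_n$ is $\leq M_n$ while $\beta < 0$, one gets $2\dot M_n \geq M_n^\beta$ cleanly---no flat-region analysis needed. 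Integrating gives $M_n(1) \geq \tilde\varepsilon > 0$, contradicting $v_n(1,k) \to 0$ on $B$. The switch from minimum to summed mass is precisely what makes the differential inequality go through; you should either adopt this, or actually construct the self-similar profile $\Phi$.
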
 

\begin{proof}
We apply a very similar argument as in the proof of Lemma 3.5 in \cite{HelmersNV2016}. If the statement is false, there exist sequences $u^{(n)}$, $t_n \to 0$, $\delta_n \to 0$ and $k_n \in \Z$ such that

\begin{align}
u^{(n)}_{\delta_n}(t_n,k_n) \leq \frac{1}{n} t_n^{\frac{1}{1-\beta}}.
\end{align}

\noindent By translation invariance we can assume that $k_n = k_0$ is constant. We rescale and define

\begin{align}
v_n(s,k) = t_n^{\frac{1}{\beta-1}}u^{(n)}_{\delta_n}(t_ns,k).
\end{align}

\noindent Then $v_n$ is a solution to equation (\ref{appendix_backward_equation}) with $v_n(1,k_0) \to 0$. Additionally we have $v_n(0,.) \in \mathcal{P}_{L,d}$ and $v_n$ satisfies the persistence property on $[0,1]$ for large $n$ by Lemma \ref{a_priori_traps}. Since $t_n^{\frac{1}{\beta -1}} d \to \infty$ and $T' \to \infty$ as $d \to \infty$ we also have that $v_n$ is uniformly H\"older continuous by Lemma \ref{trap_hoelder_estimate}. Let $B$ be the largest set of consecutive indices containing $k_0$ such that 

\begin{align}
\liminf_{n \to \infty} v_n(1,k) > 0,
\end{align}

\noindent for $k \in B$. Observe that we have $|B| \leq L$ due to $t_n^{\frac{1}{\beta -1}} d \to \infty$ and the persistence property. Let $l_-, l_+$ be the nearest particle index to the left, respectively to the right of $B$.  We restrict to a subsequence such that $v_n(1,k) \to 0$ for $k \in B$ and $v_n(1,l_\pm) \geq \lambda > 0$. If we define the local mass $M_n(s)$ as

\begin{align}
M_n(s) = \sum_{k \in B}v_n(s,k),
\end{align}

\noindent then an elementary calculation gives

\begin{align} \label{rate_of_change_local_mass}
\frac{\rd}{\rd s}M_n(s) = v_n^\beta(s,l_- + 1) - v_n^\beta(s,l_-) + v_n^\beta(s,l_+ -1) - v_n^\beta(s,l_+).
\end{align}

\noindent Due to uniform H\"older continuity and particles in $B$ going to zero there exists $\varepsilon > 0$ such that 

\begin{align}
v_n^\beta(s,l_\pm) \leq \frac{1}{2} v_n^\beta(s,l_\pm \mp 1),
\end{align}

\noindent for $s \in [1-\varepsilon,1]$ and large enough $n$. Using equation (\ref{rate_of_change_local_mass}) on this time interval we obtain

\begin{align}
2\frac{\rd}{\rd s}M_n(s) \geq v_n^\beta(s,l_- + 1) + v_n^\beta(s,l_+ - 1) \geq M_n^{\beta}(s),
\end{align}

\noindent which, after integrating from $1-\varepsilon $ to $1$ yields

\begin{align}
M_n(1) \geq \tilde{\varepsilon} > 0,
\end{align}

\noindent which gives a contradiction after sending $n$ to infinity.

\end{proof}

\noindent This result gives us the a-priori estimates we need to pass to the limit:

\begin{cor}[H\"older continuity] \label{a_priori_hoelder_estimate}
Let $u_\delta$ and $t^*$ be as above. Then there exists $C = C(\beta,L,d)$ such that

\begin{align}
|u_\delta(t_2,k)-u_\delta(t_1,k)| &\leq C|t_2 - t_1|^{\frac{1}{1-\beta}},
\end{align}

\noindent for all $t_2,t_1 \in [0,t^*]$ and $k \in \Z$.

\end{cor}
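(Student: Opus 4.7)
The plan is to combine the pointwise lower bound from Lemma \ref{a_priori_lower_bound} with the equation itself to get a quantitative pointwise control of $\partial_t u_\delta$, then integrate in time and exploit the concavity of $x \mapsto x^{1/(1-\beta)}$.

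First I would use Lemma \ref{a_priori_lower_bound} to obtain $c = c(\beta,L,d) > 0$ and $t^* = t^*(\beta,L,d) > 0$ with $u_\delta(t,k) \geq c\, t^{1/(1-\beta)}$ for all $k \in \Z$ and $t \in (0, t^*]$. Since $\beta < 0$ and $G_\beta(u) = -u^\beta$, the map $u \mapsto |G_\beta(u)| = u^\beta$ is monotone decreasing in $u$, so the lower bound yields $|G_\beta(u_\delta(t,k))| \leq c^{\beta}\, t^{\beta/(1-\beta)}$ uniformly in $k$. Plugging this into the equation $\partial_t u_\delta = \Delta G_\beta(u_\delta)$ and using the trivial bound $|\Delta v(k)| \leq 4 \sup|v|$ gives
\[
|\partial_t u_\delta(t,k)| \leq 4 c^{\beta}\, t^{\beta/(1-\beta)}, \qquad t \in (0, t^*],\ k \in \Z.
\]

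Second, I would integrate this estimate from $t_1$ to $t_2$ (with $0 \leq t_1 < t_2 \leq t^*$). Since $\beta/(1-\beta) + 1 = 1/(1-\beta)$, the integral evaluates to
\[
|u_\delta(t_2,k) - u_\delta(t_1,k)| \leq 4 c^{\beta}(1-\beta)\bigl(t_2^{1/(1-\beta)} - t_1^{1/(1-\beta)}\bigr).
\]
For $\beta < 0$ the exponent $p := 1/(1-\beta)$ lies in $(0,1)$, so the elementary subadditivity $a^p - b^p \leq (a-b)^p$ for $0 \leq b \leq a$ (which is immediate from $a^p = ((a-b) + b)^p \leq (a-b)^p + b^p$) converts the right-hand side into $C |t_2 - t_1|^{1/(1-\beta)}$ with $C = 4c^{\beta}(1-\beta)$, a constant depending only on $\beta, L, d$.

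The substantive work has already been done in Lemma \ref{a_priori_lower_bound}; the present corollary is just the packaging of that pointwise lower bound together with the equation into a uniform modulus of continuity. The only place where care is needed is to verify that the constant really depends only on $\beta, L, d$ and not on $\delta$ or $\|u_0\|_\infty$, but this is clear because both $c$ and the exponent arithmetic above involve only $\beta, L, d$. No essential obstacle arises. This is precisely the same mechanism that drove part (1) of Lemma \ref{uniform_Hoelder_continuity_lemma}, where the lower bound from Lemma \ref{backward_equation_short_time_estimate} played the role here played by Lemma \ref{a_priori_lower_bound}.
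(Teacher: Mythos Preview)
Your proof is correct and follows essentially the same route as the paper: use the lower bound from Lemma \ref{a_priori_lower_bound} to control $|\Delta G_\beta(u_\delta)(s,k)| \lesssim s^{\beta/(1-\beta)}$, integrate from $t_1$ to $t_2$, and then apply the subadditivity $t_2^{1/(1-\beta)} - t_1^{1/(1-\beta)} \leq (t_2 - t_1)^{1/(1-\beta)}$. Your write-up is in fact more explicit about the constants and the concavity step than the paper's terse version.
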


\begin{proof}
Let $t_2,t_1 \in [0,t^*]$, $t_2 > t_1$. We integrate (\ref{appendix_backward_equation}) in time and estimate

\begin{align}
|u(t_2,k)-u(t_1,k)| &\leq \int_0^t |\Delta G(u_\delta)(s,k)| \ \rd s \lesssim \int_{t_1}^{t_2} s^{\frac{\beta}{1-\beta}} \ \rd s \\
&\sim t_2^{\frac{1}{1-\beta}} - t_1^{\frac{1}{1-\beta}} \leq (t_2-t_1)^\frac{1}{1-\beta},
\end{align}

\noindent where we used Lemma \ref{a_priori_lower_bound} to estimate $|\Delta G_\beta(u_\delta)|$.
\end{proof}

\noindent With these preparations we can prove the first statement of Theorem \ref{appendix_existence_thm} for negative $\beta$:

\begin{proof}[Proof of existence and positivity bound for $\beta < 0$]

Let $u_{0,\delta}$ and $u_\delta$ as above. By Corollary \ref{a_priori_hoelder_estimate} and the Arzela-Ascoli Theorem there exists a subsequence $\delta \to 0$ such that $u_\delta \to u$ uniformly on $[0,t^*]$. Moreover, by Lemma \ref{a_priori_lower_bound} we have

\begin{align}
u_\delta^\beta(t,k) \leq c^\beta t^{\frac{\beta}{1-\beta}}, 
\end{align}

\noindent which implies $u_\delta^\beta \to u^\beta$ in $L^1([0,t^*])$. Using this we can pass to the limit in the integral equation 

\begin{align}
u_\delta(t,k) = u_{0,\delta}(k) - \int_0^t \Delta G(u_\delta)(s,k) \ \rd s,
\end{align}

\noindent showing that $u$ is a solution to the backward equation with initial data $u_0$ on $[0,t^*]$. Since the lower bound from Lemma \ref{a_priori_lower_bound} also holds in the limit, we can extend the solution from $t^*$ to arbitrary times via comparison principle, which also changes the lower bound to $\sim 1 \wedge t^\frac{1}{1-\beta}$ for large times.

\end{proof}

\subsection{Harnack-type inequality for $0 < \beta \leq 1$}

In the previous part Lemma \ref{a_priori_lower_bound} was crucial to prove existence of a solution. The result of the lemma, together with the positivity condition $\mathcal{P}_{L,d}$ can be interpreted as a Harnack-type inequality, see \cite{BonforteV2006}. For $ 0 < \beta < 1 $ a similar result holds, the equation however behaves differently and the indirect proof does not work here. We will pursue another approach and show the inequality directly with explicit constants, handling the case $\beta = 1$ separately. The key observation is that a large particle next to a small particle will always induce growth on the small particle, despite the size of the other neighbour of the small particle. This decouples the equation in a sense and we only need to study the local problem:

\begin{lem}[Local Problem]\label{harnack_iteration_step_lemma}
Let $0 < \beta < 1$ and $T > 0$. Consider two functions $F \in C^0([0,T];[0,\infty))$ and $u \in C^1([0,T];[0,\infty))$ which satisfy

\begin{align}
F(t) &\geq ct^{\frac{\beta}{1-\beta}}, \\
\dot{u}(t) &\geq F(t) - 2u^\beta(t).
\end{align}

\noindent on $[0,T]$. Then $u$ satisfies 

\begin{align}
u(t) \geq \eta_1(c)t^{\frac{1}{1-\beta}},
\end{align}

\noindent on the interval $[0,T]$, where $\eta_1$ is a positive strictly increasing function which depends only on $\beta$.
\end{lem}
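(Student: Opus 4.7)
The plan is to construct an explicit self-similar subsolution $v(t) = A\, t^{1/(1-\beta)}$ and invoke a comparison principle that exploits the monotone-decreasing dependence of the right-hand side of the differential inequality on $u$.

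First, I would define $A = A(c) > 0$ implicitly as the unique positive solution of
\begin{equation*}
\frac{A}{1-\beta} + 2 A^{\beta} = c.
\end{equation*}
The map $g(A) = \frac{A}{1-\beta} + 2A^{\beta}$ is continuous and strictly increasing on $[0,\infty)$ with $g(0)=0$ and $g(A) \to \infty$, so $A(c) = g^{-1}(c)$ is well defined, depends only on $\beta$, and is strictly increasing in $c$. Setting $\eta_1(c) := A(c)$ would give the desired increasing function.

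Next, I would verify that $v(t) := A(c)\, t^{1/(1-\beta)}$ is a subsolution of the differential inequality. Since $\beta/(1-\beta) > 0$ we have $v \in C^1([0,T])$ with $v(0) = 0 \leq u(0)$, and a direct computation gives
\begin{equation*}
\dot v(t) = \tfrac{A}{1-\beta}\, t^{\beta/(1-\beta)}, \qquad F(t) - 2 v(t)^{\beta} \geq (c - 2 A^{\beta})\, t^{\beta/(1-\beta)} = \tfrac{A}{1-\beta}\, t^{\beta/(1-\beta)},
\end{equation*}
so $\dot v \leq F - 2 v^{\beta}$ on $[0,T]$ by the defining equation for $A(c)$.

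Finally, I would apply a comparison argument. The right-hand side $f(t,u) := F(t) - 2 u^{\beta}$ is strictly decreasing in $u$, which is exactly what is needed to bypass the failure of Lipschitz regularity at $u = 0$. Suppose for contradiction that $w := v - u$ becomes positive somewhere in $(0,T]$; set $t_0 := \inf\{t \in [0,T] : w(t) > 0\}$, so that by continuity $w(t_0) = 0$. On any interval $(t_0,t_1]$ with $w > 0$, i.e.\ $u < v$, monotonicity of $f(t,\cdot)$ yields $\dot u \geq f(t,u) \geq f(t,v) \geq \dot v$, and integrating from $t_0$ to $t_1$ gives $u(t_1) - v(t_1) \geq u(t_0) - v(t_0) = 0$, contradicting $u(t_1) < v(t_1)$. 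Hence $u \geq v$ on $[0,T]$, which is precisely the claim.

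The main obstacle is in this last step: the non-Lipschitz character of $u \mapsto -2 u^{\beta}$ at $u = 0$ blocks the textbook Gronwall-type comparison principle, so one must invoke the one-sided monotonicity of $f(t,\cdot)$ by hand, as above. Once this is settled the argument is clean and gives the explicit constant $\eta_1(c) = A(c)$, which is in fact optimal within the family of self-similar subsolutions.
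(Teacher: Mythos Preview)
Your argument is correct. The approach, however, differs from the paper's in presentation, though the two are closely linked under the hood.

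The paper does not build an explicit subsolution. Instead it passes to the rescaled variable $v(t) = t^{1/(\beta-1)} u(t)$ and derives the autonomous differential inequality $t\dot v \geq c - \theta(v)$ with $\theta(v) = 2v^\beta + \tfrac{1}{1-\beta} v$; it then argues by contradiction that if $v$ ever dropped below $\theta^{-1}(c)$ it would have to stay below, forcing $v(t^*) - v(t) \gtrsim \log(t^*/t) \to \infty$ as $t \to 0$, which is impossible since $v \geq 0$. Your defining equation $g(A) = \tfrac{A}{1-\beta} + 2A^\beta = c$ is exactly $\theta(A) = c$, so your $\eta_1 = g^{-1}$ coincides with the paper's $\eta_1 = \theta^{-1}$, and both proofs produce the same optimal constant. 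What you gain is directness: the subsolution $A\,t^{1/(1-\beta)}$ together with the one-sided monotonicity of $u \mapsto -2u^\beta$ gives the comparison without any change of variables or logarithmic blow-up argument. What the paper's rescaling buys is that the obstruction at $t=0$ becomes transparent --- the non-Lipschitz point is pushed into a logarithmic singularity in the new time variable --- whereas you have to handle it by hand via monotonicity, as you correctly note. One small cosmetic point: to guarantee an interval $(t_0,t_1]$ on which $w>0$ throughout, it is cleaner to take $t_1$ with $w(t_1)>0$ first and then set $t_0 = \sup\{t < t_1 : w(t) \leq 0\}$ rather than your infimum; this ensures $w>0$ on all of $(t_0,t_1]$ without further argument.
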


\begin{proof}
We define the rescaled function

\begin{align}
v(t) = t^{\frac{1}{\beta -1}}u(t),
\end{align}

\noindent on the half-open interval $(0,T]$. Then it suffices to show that $v$ is bounded from below by $\eta_1(c)$. We calculate

\begin{align}
t\dot{v}(t) &= t\left(t^{\frac{1}{\beta -1}}\dot{u}(t) + \frac{1}{\beta - 1}t^{\frac{1}{\beta -1}-1}u(t) \right) \\
&\geq t\left(t^{\frac{1}{\beta -1}}F(t) -2t^{\frac{1}{\beta -1}}u^\beta(t) + \frac{1}{\beta - 1}t^{\frac{1}{\beta -1}-1}u(t) \right)  \\
&= t^{\frac{\beta}{\beta -1}}F(t) - 2v^\beta(t) - \frac{1}{1-\beta}v(t) \\
&=: t^{\frac{\beta}{\beta -1}}F(t) - \theta(v(t)),
\end{align}

\noindent in particular the assumption on $F$ implies that 

\begin{align} \label{harnack_rescaled_differential_inequality}
t\dot{v}(t) \geq c - \theta(v(t)). 
\end{align}

\noindent Since the function $\theta$ is strictly increasing on $[0,\infty)$ we can define the inverse function $\eta_1 = \theta^{-1}$, which is also strictly increasing. We claim that $v \geq \eta_1(c)$ on $(0,T]$. If this is not true, there is $\varepsilon > 0$ and $t^* \in (0,T]$ such that $v(t^*) \leq \eta_1(c) - \varepsilon$. In particular we have

\begin{align}
c - \theta(v(t^*)) \geq \tilde{\varepsilon} > 0,
\end{align} 

\noindent for some $\tilde{\varepsilon} > 0$. But then the differential inequality (\ref{harnack_rescaled_differential_inequality}) implies that $v(t) \leq \eta_1(c) - \varepsilon$, and hence $c - \theta(v(t)) \geq \tilde{\varepsilon}$ for all $ t \in (0,t^*]$. Dividing by $t$ and integrating (\ref{harnack_rescaled_differential_inequality}) in time gives

\begin{align}
v(t^*) - v(t) \geq \varepsilon \log \left(\frac{t^*}{t} \right),
\end{align}

\noindent for all $0 < t \leq t^*$. Sending $t$ to zero then gives a contradiction.

\end{proof}

\noindent The above lemma enables us to prove a Harnack type inequality:

\begin{lem}[Harnack type inequality] \label{harnack_inequality_positive}
Let $u$ be a solution to equation (\ref{appendix_backward_equation}) with $0 < \beta < 1$ and initial data $0 \leq u_0 \leq 1$. Then we have 

\begin{align}
u(t,k) \geq \eta(|k-l|)(t-s)^{\frac{1}{1-\beta}},
\end{align}

\noindent for all $k,l \in \Z$ and $0 \leq t -s \leq t^*(u(s,l))$. The function $\eta$ is strictly positive and the function $t^*$ is non-negative, strictly increasing with $t^*(u) = 0$ iff $u=0$. Furthermore, both functions depend only on $\beta$.

\end{lem}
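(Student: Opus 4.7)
The plan is to propagate positivity outward from the distinguished site $l$ one neighbour at a time, using Lemma~\ref{harnack_iteration_step_lemma} as the inductive engine. By time translation we may assume $s=0$ and write $u_0 := u(0,l)$; if $u_0 = 0$ we set $t^*(0)=0$ and there is nothing to prove, so suppose $u_0 > 0$. The key structural fact is that, since $0 \le u \le 1$ by the comparison principle (Corollary~\ref{comparison_principle_lemma}), all the quantities $u^\beta(\cdot,k)$ are non-negative, so from
\begin{align}
\dot u(t,k) \;=\; u^\beta(t,k-1) - 2 u^\beta(t,k) + u^\beta(t,k+1)
\end{align}
we may drop one of the two positive neighbour terms and retain the one-sided differential inequality $\dot u(t,k) \ge u^\beta(t,k\pm 1) - 2 u^\beta(t,k)$, matching exactly the hypothesis of Lemma~\ref{harnack_iteration_step_lemma}.

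For the base case $k = l$ I would argue directly: the estimate $\dot u(t,l) \ge -2 u^\beta(t,l)$ and comparison with the scalar ODE $\dot v = -2 v^\beta$ with $v(0)=u_0$ yield
\begin{align}
u(t,l) \;\ge\; \bigl(u_0^{1-\beta} - 2(1-\beta)\,t\bigr)^{\frac{1}{1-\beta}}.
\end{align}
Choosing any fixed constant $\eta(0)>0$ and defining
\begin{align}
t^*(u_0) \;:=\; \frac{u_0^{1-\beta}}{2(1-\beta) + \eta(0)^{1-\beta}},
\end{align}
one checks that $u(t,l) \ge \eta(0)\, t^{1/(1-\beta)}$ on $[0, t^*(u_0)]$; this function $t^*$ is strictly increasing in $u_0$, vanishes only at $0$, and depends only on $\beta$.

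For the inductive step, assume $u(t, l+j) \ge \eta(j)\, t^{1/(1-\beta)}$ on $[0, t^*]$ for some $j \ge 0$. Then on the same interval
\begin{align}
\dot u(t, l+j+1) \;\ge\; u^\beta(t, l+j) - 2 u^\beta(t, l+j+1) \;\ge\; \eta(j)^\beta\, t^{\frac{\beta}{1-\beta}} - 2 u^\beta(t, l+j+1),
\end{align}
so Lemma~\ref{harnack_iteration_step_lemma} applied with $F(t) = \eta(j)^\beta\, t^{\beta/(1-\beta)}$ and $c = \eta(j)^\beta$ gives $u(t, l+j+1) \ge \eta_1(\eta(j)^\beta)\, t^{1/(1-\beta)}$. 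Setting $\eta(j+1) := \eta_1(\eta(j)^\beta)$ (and using the symmetric argument for the leftward direction) produces a strictly positive function $\eta$ on $\N$ depending only on $\beta$, iterated $|k-l|$ times from the base case, yielding the claimed bound on the same interval $[0, t^*(u_0)]$.

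The only non-routine point is the base case: Lemma~\ref{harnack_iteration_step_lemma} degenerates at $c=0$, so one cannot use it to start the induction and must instead exploit the pointwise positivity $u_0 > 0$ via the ODE comparison above. Once $t^*$ is defined in terms of $u_0$ in a way compatible with $\eta(0)$, the inductive step is painless because Lemma~\ref{harnack_iteration_step_lemma} preserves the time interval on which the lower bound holds, so the same $t^*$ serves for all distances $|k-l|$; the rate $\eta(|k-l|)$ of course deteriorates geometrically through the recursion $\eta(j+1)=\eta_1(\eta(j)^\beta)$, but remains strictly positive, which is all that is claimed.
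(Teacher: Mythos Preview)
Your proof is correct and follows essentially the same route as the paper's: reduce to $s=0$, establish the base-case lower bound at the distinguished site directly, then iterate Lemma~\ref{harnack_iteration_step_lemma} outward one neighbour at a time via the one-sided inequality $\dot u(t,k)\ge u^\beta(t,k\pm 1)-2u^\beta(t,k)$. The only cosmetic difference is the base case: the paper uses the Lipschitz bound $|\dot u|\le 4$ (from $u\le 1$) to get $u(t,l)\ge u_0(l)-4t$ and defines $t^*$ as the inverse of $t\mapsto 4t+t^{1/(1-\beta)}$, whereas you use the ODE comparison $\dot u\ge -2u^\beta$; both yield a valid $t^*$ with the stated properties, and your recursion $\eta(j+1)=\eta_1(\eta(j)^\beta)$ is in fact the precise form of the iteration.
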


\begin{proof}
Due to translation invariance in space and time it suffices to consider the case $s=0$ and $l=0$. We will make an iterative argument, using Lemma \ref{harnack_iteration_step_lemma} in each step. First we note that due to $u_0 \leq 1$ and the comparison principle we have the Lipschitz estimate

\begin{align}
|\dot{u}| \leq 4,
\end{align}

\noindent in particular

\begin{align}
u(t,0) \geq u_0(0) - 4t.
\end{align}

\noindent The case $u_0(0) = 0$ is trivial. If $u_0(0) > 0$, the Lipschitz estimate implies

\begin{align}
u(t,0) \geq t^{\frac{1}{1-\beta}},
\end{align}

\noindent whenever $u_0(0) - 4t - t^{\frac{1}{1-\beta}} > 0$. If we set $f(t) = 4t + t^{\frac{1}{1-\beta}}$ then $t^*$ is defined as the inverse of $f$. Thus the above lower bound holds for $0 \leq t \leq t^*(u_0(0))$. For $u(t,1)$ we have

\begin{align}
\dot{u}(t,1) &= u^\beta(t,0) - 2u^\beta(t,1) + u^\beta(t,2) \\
&\geq u^\beta(t,0) - 2u^\beta(t,1).
\end{align}

\noindent This means that $u^\beta(t,0)$ and $u(t,1)$ satisfy the assumptions of Lemma \ref{harnack_iteration_step_lemma} with $T = t^*(u_0(0))$ and $c = 1$. Thus we have

\begin{align}
u(t,1) \geq \eta_1(1)t^{\frac{1}{1-\beta}},
\end{align}

\noindent on $[0,t^*(u_0(0))]$. Now we can successively apply the same argument to the pairs of functions $(u^\beta(t,1)$,$u(t,2))$,...,$(u^\beta(t,k-1)$,$u(t,k))$, for $k \in \N$. The argument for $-k$ is the same. Then the desired inequality follows with the function $\eta = \eta(r)$ ($r \in \N$) defined as

\begin{align}
\eta(r) &= \eta_1^{(r)}(1), \\
\eta(0) &= 1,
\end{align} 

\noindent where $\eta_1^{(r)}$ means that $\eta$ is $r$ times composed with itself.

\end{proof}

\begin{lem}\label{harnack_inequality_linear}
Let $u$ be a solution to the constant coefficient linear equation equation (\ref{appendix_backward_equation}) with $\beta = 1$. Then for every $k \in \Z$ and $N \in \mathbb{N}$ we have

\begin{align}
u(t,k) \geq M(u_0,k,N) \exp(-2t)I_N(2t),
\end{align}

\noindent where $I_k(t)$ denotes the $k$-th modified Bessel function of the first kind and 

\begin{align}
M(u_0,k,N) = \sum_{l=-N}^N u_0(k-l)
\end{align}

\noindent denotes the local initial mass. 

\end{lem}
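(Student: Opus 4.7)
The plan is to use the explicit fundamental solution of the constant-coefficient discrete heat equation and exploit non-negativity plus the monotonicity of the modified Bessel functions in their index.

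First I would identify the fundamental solution of
\begin{align*}
\del_t \phi = \Delta \phi, \qquad \phi(0,k) = \delta_{0,k},
\end{align*}
as $\phi(t,k) = e^{-2t} I_{|k|}(2t)$. This can be verified either by Fourier analysis on $\mathbb{T}$ (the symbol of $\Delta$ is $-2(1-\cos\xi)$, and $\int_{-\pi}^{\pi} e^{-2t(1-\cos\xi) - ik\xi}\,\rd\xi / (2\pi) = e^{-2t} I_{|k|}(2t)$), or directly by plugging the ansatz into the equation and using the recursion $I_{n-1}(x) + I_{n+1}(x) = 2 I_n'(x)$ together with $I_{-n} = I_n$. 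Summability is fine because $\sum_{l \in \Z} I_{|l|}(2t) = e^{2t}$, so $\sum_l \phi(t,l) = 1$.

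Since the equation is linear with bounded initial data $u_0 \in \ell_+^\infty(\Z)$, the unique bounded solution is given by convolution:
\begin{align*}
u(t,k) = \sum_{l \in \Z} u_0(l)\, e^{-2t} I_{|k-l|}(2t) = \sum_{l \in \Z} u_0(k-l)\, e^{-2t} I_{|l|}(2t).
\end{align*}
Because $u_0 \geq 0$ and $I_m(2t) \geq 0$, truncating to $|l| \leq N$ gives the lower bound
\begin{align*}
u(t,k) \geq \sum_{l=-N}^{N} u_0(k-l)\, e^{-2t} I_{|l|}(2t).
\end{align*}

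The final step is to replace each $I_{|l|}(2t)$ by $I_N(2t)$, which requires the monotonicity $I_m(x) \geq I_{m+1}(x)$ for $x > 0$ and integer $m \geq 0$. This is the only non-trivial input: I would deduce it from the three-term recursion $I_{m-1}(x) - I_{m+1}(x) = \tfrac{2m}{x} I_m(x)$ (which already gives $I_{m-1} > I_{m+1}$) combined with Turán's inequality $I_m(x)^2 > I_{m-1}(x) I_{m+1}(x)$ for $x>0$, $m \geq 1$; together these yield $I_m(x)/I_{m+1}(x) > 1$. Applying this for $|l| \leq N$ produces
\begin{align*}
u(t,k) \geq e^{-2t} I_N(2t) \sum_{l=-N}^{N} u_0(k-l) = M(u_0,k,N)\, e^{-2t} I_N(2t),
\end{align*}
which is the claim.

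The main obstacle is really only the index-monotonicity of $I_n$ at fixed positive argument; everything else is a direct computation with the heat kernel on $\Z$. If one prefers to avoid Turán's inequality, the weaker estimate $I_{|l|}(2t) \geq I_N(2t)$ for $|l| \leq N$ of the same parity as $N$ already follows from the recursion, and the remaining case can be handled by the same recursion applied once more, so the argument is robust.
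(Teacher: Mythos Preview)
Your proposal is correct and follows essentially the same route as the paper: derive the discrete heat kernel $\phi(t,k)=e^{-2t}I_{|k|}(2t)$ via Fourier analysis, write $u(t,k)$ as a convolution, truncate the sum to $|l|\le N$ by non-negativity, and then use that $\phi$ is decreasing in $|k|$. The only difference is that the paper simply asserts ``$\phi$ is decreasing in the second argument,'' whereas you supply an argument for $I_m(x)>I_{m+1}(x)$ via the recursion $I_{m-1}-I_{m+1}=\tfrac{2m}{x}I_m$ combined with Tur\'an's inequality $I_m^2>I_{m-1}I_{m+1}$ (so $I_m^2>I_{m-1}I_{m+1}>I_{m+1}^2$); this extra justification is fine and not strictly needed for the paper's purposes.
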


\begin{proof}
In the linear constant-coefficient case we can give an explicit formula by Fourier-analysis: We write

\begin{align}
\hat{u}(t,\theta) = \sum_{k=-\infty}^{k=+\infty} u(t,k)\exp(-ik\theta),
\end{align}

\noindent taking the time derivative on both sides and using the equation then yields

\begin{align}
\del_t \hat{u}(t,\theta) &= \exp(-i\theta)\sum_{k=-\infty}^{k=+\infty} u(t,k)\exp(-ik\theta) \\
&- 2\sum_{k=-\infty}^{k=+\infty} u(t,k)\exp(-ik\theta) \\
&+ \exp(i\theta)\sum_{k=-\infty}^{k=+\infty} u(t,k)\exp(-ik\theta) \\
&= 2(\cos(\theta) - 1)\hat{u}(t,\theta).
\end{align}

\noindent We solve this ODE in $t$ with initial data $\hat{f}$ to obtain

\begin{align}
\hat{u}(t,\theta) = \hat{f}(\theta)\exp(2t(\cos(\theta) - 1)),
\end{align}

\noindent which gives the discrete heat kernel

\begin{align}
\phi(t,k) &= \exp(-2t) \frac{1}{2\pi} \int_{-\pi}^\pi \exp(2t\cos(\theta)-ik\theta) \ \rd \theta \\
&= \exp(-2t) \frac{1}{\pi} \int_{0}^\pi \exp(2t\cos(\theta))\cos(k\theta) \ \rd \theta \\
&= \exp(-2t)I_k(2t),
\end{align}

\noindent where $I_k$ is the $k$th modified Bessel function of the first kind. Then the desired inequality follows directly by the standard representation

\begin{align}
u(t,k) = \sum_{l \in Z}u_0(k-l)\phi(t,l),
\end{align}

\noindent the fact that $\phi$ is decreasing in the second argument and the obvious estimate.
\end{proof}

\noindent We summarize the findings of this section and prove the remaining statements of Theorem \ref{appendix_existence_thm}:

\begin{proof}[Proof of positivity estimate for $0 < \beta \leq 1$]
First we consider the case $\beta \neq 1$. It suffices to consider the case $u_0 \leq 1$ by scaling (this means that for general data the constants get an additional dependence on $||u_0||_\infty$). Because $u_0 \in \mathcal{P}_{L,d}$, for every $k \in \Z$ there exists $k'$ with $u_0(k') \geq d$ and $|k-k'| \leq L$. Then Lemma \ref{harnack_inequality_positive} with $s=0$ and $l = k'$ yields 

\begin{align}
u(t,k) \geq \min_{j=1,..,L}\eta(j) t^\frac{1}{1-\beta},
\end{align}

\noindent for $0 \leq t \leq t^*(d)$ because $t^*$ is monotone. For $\beta = 1$ we note that $u_0 \in \mathcal{P}_{L,d}$ implies that $M(u,k,L) \geq 2d$, since there are at least two terms in the sum that are greater than or equal to $d$ by definition of $\mathcal{P}_{L,d}$. Then the statement follows directly from Lemma \ref{harnack_inequality_linear}.
\end{proof} 

\subsection{Nash-Aronson estimates and H\"older continuity}

For $u = u(k) \in \ell^\infty(\Z)$ we define the forward and backward difference operators

\begin{align}
\del^+ u(k) &= u(k+1) - u(k) \\
\del^- u(k) &= u(k) - u(k-1).
\end{align}

\noindent For $a = a(t,k)$ with $0 < c_1 \leq a \leq c_2$ and $a(.,k) \in C^0([0,\infty))$ we consider the discrete analogue to a parabolic evolution equation in divergence form:

\begin{align} \label{discrete_parabolic_equation_appendix}
\begin{cases}
\del_t u = \del^-(a \del^+ u) =: \mathcal{L}(t)u \\
u(0,.) = u_0.
\end{cases}
\end{align}

\noindent We denote by $\phi(t,k,s,l)$ the fundamental solution to (\ref{discrete_parabolic_equation_appendix}), in other words, $\phi(.,.,s,l)$ is the solution to the above equation starting at time $s$ with initial data $\phi_0(k) = \delta_{kl}$. Since $\mathcal{L}(t)$ is a bounded operator from $\ell^2(\Z)$ to $\ell^2(\Z)$, $\phi$ can be written as

\begin{align}
\phi(t,k,s,l) = \left \langle \exp \left(\int_s^t \mathcal{L}(r) \ \rd r \right)\delta_l, \delta_k \right \rangle,
\end{align}

\noindent where $\delta_k$ are the canonical basis vectors in $\ell^2(\Z)$. The general solution starting at time $t = s$ to (\ref{discrete_parabolic_equation}) is then given by 

\begin{align}
u(t,k) = \sum_{l \in \Z^d}\phi(t,k,s,l)u_0(l).
\end{align} 

\noindent We also define the reduced fundamental solution

\begin{align}
\psi(t,k) &= \phi(t,k,0,0) = \phi(t,0,0,k), 
\end{align}

\noindent and the corresponding "macroscopic" rescaled function

\begin{align}
&U:\R \to [0,+\infty) \\
&U(t,\xi) = t^\frac{1}{2}\psi(t,\lfloor t^\frac{1}{2} \xi \rfloor).
\end{align}

\noindent We have the following Nash-Aronson estimates on the fundamental solution:\newline

\begin{thm}\label{discrete_H_continuity_thm}
There exist constants $t_0 > 0$, $C > 0$ and $\alpha > 0$, depending only on the bounds on $a$, such that the following statements hold:

\begin{itemize}
\item Aronson estimate:
\begin{align}
\psi(t,k) &\leq \frac{C}{1 \vee t^\frac{1}{2}}\exp \left(-\frac{|k|}{1 \vee t^\frac{1}{2}} \right), \label{Aronson_estimate}
\end{align}

\noindent for every $k \in \Z$ and $t \geq 0$.

\item Nash continuity estimate:
\begin{align}
|\psi(t,k) - \psi(t,l)| &\leq \frac{C}{t^\frac{1}{2}}\left(\frac{|k-l|}{t^\frac{1}{2}} \right)^\alpha, \label{Nash_estimate}
\end{align}

\noindent for every $k,l \in \Z$ and $t \geq 0$.

\end{itemize}

\end{thm}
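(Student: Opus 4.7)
The plan is to follow the classical Nash--Aronson scheme, adapted to the one--dimensional lattice in the spirit of Giacomin--Olla--Spohn. Three ingredients will be combined: mass conservation $\sum_k \psi(t,k) = 1$, positivity $\psi \geq 0$ (parabolic maximum principle applied exactly as in Corollary \ref{comparison_principle_lemma}), and the semigroup/self-adjointness identity $\psi(t,0) = \sum_k \phi(t/2,0,0,k)^2 = \|\psi(t/2,\cdot)\|_2^2$, which follows because $\mathcal{L}(s)$ is symmetric on $\ell^2(\Z)$ for every $s$.

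\textbf{Step 1 (on-diagonal bound via discrete Nash inequality).} First I would establish the discrete Nash inequality: for every finitely supported $f:\Z \to \R$,
\begin{align*}
\|f\|_2^6 \leq C_N \|f\|_1^4 \sum_k a(t,k) |\del^+ f(k)|^2,
\end{align*}
with $C_N = C_N(c_1)$. This follows from the elementary interpolation $\|f\|_\infty \leq \|f\|_1^{1/2}\|\del^+ f\|_1^{1/2}$ on $\Z$ (or equivalently a Fourier-side Sobolev argument), Hölder's inequality, and the uniform ellipticity $a \geq c_1$. Combined with the energy identity $\frac{\rd}{\rd t}\|\psi(t)\|_2^2 = -2\sum_k a(t,k)|\del^+ \psi(t,k)|^2$ and $\|\psi(t)\|_1 = 1$, this gives the ODE inequality $\frac{\rd}{\rd t}\|\psi(t)\|_2^2 \leq -c\|\psi(t)\|_2^6$, which integrates to $\|\psi(t)\|_2^2 \leq C/t^{1/2}$ for $t \geq 1$; hence $\psi(t,0) \leq C/t^{1/2}$ on the same range. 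For $t \leq 1$, positivity together with $\sum_k \psi(t,k)=1$ gives $\psi \leq 1$ trivially, which accounts for the $1 \vee t^{1/2}$ normalization.

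\textbf{Step 2 (Aronson off-diagonal bound by Davies' method).} Next I would upgrade the on-diagonal bound to (\ref{Aronson_estimate}) via exponential perturbation. For $\lambda \in \R$, the weighted function $w_\lambda(t,k) = e^{\lambda k}\psi(t,k)$ satisfies a conjugated evolution whose generator differs from $\mathcal{L}(t)$ by first-order terms of size $O(\lambda)$ and $O(\lambda^2)$; since only finite differences appear, this perturbation is discrete-friendly and can be absorbed into the Nash argument of Step 1 for $|\lambda| \leq 1$. One obtains $w_\lambda(t,k) \leq (C/\sqrt{t})\exp(C c_2 \lambda^2 t)$, and then optimizing in $\lambda$ (taking $\lambda \sim |k|/(1\vee t)$, truncated at $1$) yields the exponential decay in $|k|/(1 \vee t^{1/2})$.

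\textbf{Step 3 (Nash continuity estimate and main obstacle).} The delicate part is (\ref{Nash_estimate}), which requires spatial Hölder regularity. The plan is to follow Nash's original oscillation argument in the discrete setting: given $t > 0$ and a point $k_0$, define the parabolic oscillation $\omega(R) = \operatorname{osc}_{Q_R} \psi$ on lattice cylinders $Q_R = [t-R^2,t] \times \{|k-k_0| \leq R\}$, and prove a dyadic decay $\omega(R/2) \leq \theta\,\omega(R)$ with $\theta < 1$ depending only on $c_1,c_2$. The input is a weak parabolic Harnack inequality, which in turn requires the \emph{lower} Aronson bound on $\psi$ obtained by a Davies-type argument symmetric to Step 2 (here one exploits mass conservation to transfer the upper bound into a lower bound on the correct scale). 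Iterating the oscillation decay from scale $R = t^{1/2}$ down to scale $|k-l|$ produces the Hölder exponent $\alpha = -\log_2 \theta$ and the estimate (\ref{Nash_estimate}). The genuine difficulty lies entirely in this step, because the moment/oscillation machinery must be reproduced discretely with constants that depend only on $c_1,c_2$; for this I would invoke the framework of Giacomin--Olla--Spohn rather than redo the De Giorgi--Moser iteration from scratch.
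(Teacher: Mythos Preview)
Your outline is correct and is essentially what the paper does: the paper's proof is a direct citation of Appendix~B in Giacomin--Olla--Spohn, invoking their Proposition~B.3 for (\ref{Aronson_estimate}) and Proposition~B.6 (combined with the semigroup property and the on-diagonal bound at time $t/2$) for (\ref{Nash_estimate}). Your Steps~1--3 are a faithful sketch of precisely those arguments, and in Step~3 you explicitly defer to the same reference.

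One small technical correction worth flagging: the identity $\psi(t,0)=\|\psi(t/2,\cdot)\|_2^2$ that you state in the preamble holds as written only when the coefficients are time-independent. For time-dependent $a$ the Chapman--Kolmogorov relation reads
\[
\phi(t,0,0,0)=\sum_m \phi(t,0,t/2,m)\,\phi(t/2,m,0,0),
\]
and $\phi(t,0,t/2,m)$ is the fundamental solution started at time $t/2$, not $\psi(t/2,m)$. The fix is standard: apply Cauchy--Schwarz to the sum and run the Nash-inequality/ODE argument of Step~1 separately on each factor (each is a probability vector solving a uniformly elliptic equation of the same type), giving $\|\phi(t,\cdot,t/2,0)\|_2\|\psi(t/2,\cdot)\|_2\lesssim t^{-1/4}\cdot t^{-1/4}=t^{-1/2}$. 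With this adjustment your Step~1 yields the correct on-diagonal rate, and the rest of the argument goes through unchanged.
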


\begin{proof}
Here we cite the results from Appendix B of \cite{GiacominOS2001}. Inequality (\ref{Aronson_estimate}) is precisely the statement of Proposition B.3. For the second inequality (\ref{Nash_estimate}) we first note that (\ref{Aronson_estimate}) implies $|\psi_{1/2}| \lesssim t^{-\frac{1}{2}}$. Then the desired estimate at a time $t^*$ follows from Proposition B.6 applied at $t = s = t^*/2$ with $f = \psi(t^*/2,.)$ and the semigroup property.
\end{proof}

\noindent These estimates have important consequences for the function $U$. Inequality (\ref{Aronson_estimate}) implies that 

\begin{align}
U(t,\xi) \leq \Phi(\xi),
\end{align}

\noindent for some integrable function $\Phi$. In particular the function family $U(t,.)$ are tight probability measures. On the other hand, the estimate (\ref{Nash_estimate}) implies that the function $U(t,.)$, which is a step-function by definition, becomes H\"older continuous in the following sense:

\begin{defn}[Approximate H\"older Continuity]
Let $\{f_n \} \subset L^\infty(\R)$ be a sequence of functions. Then $\{f_n \}$ is said to be \textbf{approximately H\"older continuous} with exponent $\alpha \in (0,1]$ if for every $\varepsilon > 0$ there exists $n = n(\varepsilon)$ such that $|x-y| \geq \varepsilon$ implies

\begin{align}
|f_n(x) - f_n(y)| \leq C |x-y|^\alpha, \label{approximate_H_continutiy}
\end{align}

\noindent for $n \geq n(\varepsilon)$ and a universal positive constant $C$. 
\end{defn}

\noindent The important observation is that H\"older continuity on the discrete microscopic level implies approximate H\"older continuity on the macroscopic scale:

\begin{lem}\label{fund_sol_H_continutiy_lemma}
The function $U(t,.)$ is approximately H\"older continuous as $t \to \infty$. Furthermore, the constants $C, \alpha$ and $t = t(\varepsilon)$ only depend on the bounds of the coefficient $a$ in (\ref{discrete_parabolic_equation_appendix}).
\end{lem}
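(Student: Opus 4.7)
The plan is to pull back directly to the Nash continuity estimate \eqref{Nash_estimate} from Theorem \ref{discrete_H_continuity_thm}. By definition of $U$, for any $\xi,\eta\in\R$ we have
\begin{align*}
|U(t,\xi)-U(t,\eta)| = t^{1/2}\bigl|\psi\bigl(t,\lfloor t^{1/2}\xi\rfloor\bigr)-\psi\bigl(t,\lfloor t^{1/2}\eta\rfloor\bigr)\bigr|.
\end{align*}
Inserting \eqref{Nash_estimate} with $k=\lfloor t^{1/2}\xi\rfloor$ and $l=\lfloor t^{1/2}\eta\rfloor$, the prefactors $t^{1/2}$ and $t^{-1/2}$ cancel, leaving
\begin{align*}
|U(t,\xi)-U(t,\eta)| \leq C\left(\frac{|\lfloor t^{1/2}\xi\rfloor-\lfloor t^{1/2}\eta\rfloor|}{t^{1/2}}\right)^\alpha,
\end{align*}
with $C,\alpha$ depending only on the bounds of $a$.

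The next step is to control the discretisation error. Since $|\lfloor s\rfloor-\lfloor r\rfloor|\leq |s-r|+1$ for all real $s,r$, we get
\begin{align*}
\frac{|\lfloor t^{1/2}\xi\rfloor-\lfloor t^{1/2}\eta\rfloor|}{t^{1/2}} \leq |\xi-\eta|+t^{-1/2}.
\end{align*}
This is the crux: the bound on $U(t,\xi)-U(t,\eta)$ is essentially $|\xi-\eta|^\alpha$ plus a lower-order remainder that vanishes as $t\to\infty$.

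Finally, given $\varepsilon>0$, pick $t(\varepsilon)$ so that $t^{-1/2}\leq\varepsilon$ for all $t\geq t(\varepsilon)$. Then whenever $|\xi-\eta|\geq\varepsilon$ and $t\geq t(\varepsilon)$, the remainder $t^{-1/2}$ is dominated by $|\xi-\eta|$, yielding
\begin{align*}
|U(t,\xi)-U(t,\eta)| \leq C\bigl(2|\xi-\eta|\bigr)^\alpha = 2^\alpha C\,|\xi-\eta|^\alpha,
\end{align*}
which is the desired approximate Hölder estimate \eqref{approximate_H_continutiy} with universal constant $2^\alpha C$ and threshold $t(\varepsilon)$, both inherited from Theorem \ref{discrete_H_continuity_thm} and hence depending only on the bounds of $a$. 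There is no real obstacle here: everything reduces to observing that the microscopic step-size $t^{-1/2}$ becomes negligible compared to any fixed macroscopic separation $\varepsilon$ once $t$ is sufficiently large. The only thing to be careful about is that $t(\varepsilon)$ must be chosen depending only on $\varepsilon$ and the constants from the Nash estimate, which is automatic from the explicit choice $t(\varepsilon)=\varepsilon^{-2}$.
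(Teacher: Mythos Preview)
Your proof is correct and follows essentially the same approach as the paper: apply the Nash continuity estimate \eqref{Nash_estimate} to $U$, observe that the floor discretisation contributes an error of order $t^{-1/2}$, and absorb this into $|\xi-\eta|$ once $|\xi-\eta|\geq\varepsilon$ and $t\geq\varepsilon^{-2}$. Your version is in fact slightly more explicit about the constants and the threshold $t(\varepsilon)=\varepsilon^{-2}$ than the paper's, but the argument is the same.
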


\begin{proof}
By the estimate (\ref{Nash_estimate}) from Theorem \ref{discrete_H_continuity_thm} we calculate

\begin{align}
|U(t,\xi)-U(t,\eta)| \lesssim \left( \frac{|\lfloor t^\frac{1}{2} \xi \rfloor - \lfloor t^\frac{1}{2} \eta \rfloor |}{t^\frac{1}{2}} \right)^\alpha.
\end{align}

\noindent Since 

\begin{align}
\frac{|\lfloor t^\frac{1}{2} \xi \rfloor - \lfloor t^\frac{1}{2} \eta \rfloor |}{t^\frac{1}{2}} = |\xi - \eta| + \mathcal{O}(t^{-\frac{1}{2}}),
\end{align}

\noindent we get the desired estimate for $|\xi - \eta| \gtrsim t^{-\frac{1}{2}} $.

\end{proof}

\noindent The next result is of crucial importance for the main result of the paper. Denote by $\mathcal{T}_\delta(\R)$ the set of step-functions with step-width at least $\delta$. Then we have:

\begin{lem}\label{stepfct_approximation_lemma}
Let $(f_n) \subset L^1(\R)$ be tight and approximately H\"older continuous. Then for every $\varepsilon > 0$ there exists $n_0$ and $\delta > 0$, such that for every $f_n$ with $n \geq n_0$ there exists $\chi \in \mathcal{T}_\delta(\R)$ with

\begin{align}
||f_n - \chi||_{L^1(\R)} \leq \varepsilon.
\end{align}
\end{lem}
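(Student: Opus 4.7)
The plan is to build $\chi$ as a piecewise constant function of step width exactly $\delta$ on a large but bounded interval $[-R,R]$, and set it to zero outside. By tightness, I would first choose $R > 0$ so large that $\int_{|\xi|>R}|f_n(\xi)|\,d\xi < \varepsilon/2$ uniformly in $n$. Denoting by $C$ and $\alpha$ the constants from the approximate H\"older hypothesis, I would then fix $\delta > 0$ so that $4CR\delta^\alpha < \varepsilon/2$, set $\varepsilon_0 := \delta/2$, and let $n_0 := n(\varepsilon_0)$ be the index given by the definition of approximate H\"older continuity.

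For $n \geq n_0$, partition $[-R,R]$ into consecutive intervals $I_k = [a_k, a_k+\delta]$ of length $\delta$, and define $\chi(\xi) = f_n(x_k)$ on $I_k$ with $x_k = a_k + \delta/2$ the midpoint, while $\chi \equiv 0$ outside $[-R,R]$. The key point is to establish the uniform oscillation bound $|f_n(x) - f_n(x_k)| \leq 2C\delta^\alpha$ on each $I_k$. For $x \in I_k$ with $|x - x_k| \geq \varepsilon_0$ this is immediate from approximate H\"older. For $x \in I_k$ with $|x - x_k| < \varepsilon_0$ the H\"older estimate cannot be applied directly, so I would introduce an auxiliary point $z$ at the endpoint of $I_k$ opposite to $x$; since $\delta = 2\varepsilon_0$ one checks $|x-z|, |x_k-z| \geq \varepsilon_0$, and a two-step triangle inequality through $z$ yields the bound with the extra factor of $2$.

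Integrating this bound on each $I_k$ gives an interior error of at most $2C\delta^{1+\alpha}$ per interval; summing over the at most $\lceil 2R/\delta\rceil$ intervals and adding the tail bound yields $||f_n - \chi||_{L^1(\R)} \leq 4CR\delta^\alpha + \varepsilon/2 < \varepsilon$. Finally $\chi \in \mathcal{T}_\delta(\R)$, since the interior pieces all have width exactly $\delta$ and the two infinite tails on which $\chi \equiv 0$ trivially have width $\geq \delta$.

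The main obstacle is the scale mismatch between the approximate H\"older regime, which controls differences only at distances $\geq \varepsilon_0$, and the step width $\delta$, which contains arbitrarily small distances down to $0$. The remedy is the choice $\delta = 2\varepsilon_0$ together with the detour through the far endpoint of each step, and this is what dictates the order in which the parameters must be selected: $R$ first, from tightness alone; then $\delta$ small relative to $R$ and $\varepsilon$; and only then $n_0$ large enough to make the approximate H\"older bound effective at scale $\varepsilon_0 = \delta/2$.
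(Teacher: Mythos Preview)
Your proof is correct and follows essentially the same approach as the paper: use tightness to reduce to a bounded interval $[-R,R]$, then approximate by a piecewise constant function of uniform step width chosen via the approximate H\"older bound. The one notable difference is that you are more careful than the paper about the scale mismatch issue---the paper simply takes step width equal to the H\"older threshold and asserts the uniform bound, whereas your detour through the far endpoint (with $\delta = 2\varepsilon_0$) makes the argument fully rigorous at the cost of an inessential factor of $2$.
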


\begin{proof}
Let $\varepsilon > 0$. Because $(f_n)$ is tight in $L^1$ there exists $R > 0$ such that

\begin{align}
\int_{|x| \geq R} |f_n(x)| \ \rd x \leq \varepsilon,
\end{align}

\noindent hence it suffices to approximate $(f_n)$ in $L^\infty$. By approximate H\"older continuity there exists $n_0$ such that 

\begin{align}
|f_n(x) - f_n(y)| \leq C |x-y|^\alpha,
\end{align}

\noindent for $n \geq n_0$ and $|x-y| \geq R^{-\frac{1}{\alpha}}\varepsilon$. This means that the piecewise-constant interpolation $\chi$ of $f_n$ with step-width $R^{-\frac{1}{\alpha}}\varepsilon$ approximates $f_n$ uniformly up to an error of $R^{-1}\varepsilon^\alpha$, hence

\begin{align}
||f_n - \chi||_{L^1} \lesssim \varepsilon + \varepsilon^\alpha.
\end{align}

\end{proof}

\noindent Combining the last two lemmas we obtain the following corollary, which is used in the proof of the main result:

\begin{cor} \label{stepfct_approximation_fund_sol}
For every $\varepsilon > 0$ there exists $T > 0$ and $\delta > 0$, such that for every $U(t,.)$ with $t \geq T$ there exists $\chi \in \mathcal{T}_\delta(\R)$ with

\begin{align}
||U(t,.) - \chi||_{L^1(\R)} \leq \varepsilon.
\end{align}

\noindent Furthermore, $T$ and $\delta$ only depend on $\varepsilon$ and the bounds on $a$.
\end{cor}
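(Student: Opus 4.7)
The plan is to reduce the statement to a direct application of Lemma \ref{stepfct_approximation_lemma} by verifying its two hypotheses — uniform tightness in $L^1$ and approximate H\"older continuity — for the one-parameter family $\{U(t,\cdot)\}_{t \geq 1}$, with all constants depending only on the bounds of $a$.

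First I would establish uniform tightness (and in fact uniform domination by a fixed integrable envelope). By the Aronson estimate \eqref{Aronson_estimate} and the definition of $U$, for $t \geq 1$,
\begin{align}
U(t,\xi) = t^{1/2}\psi(t,\lfloor t^{1/2}\xi\rfloor) \leq C \exp\!\left(-\frac{|\lfloor t^{1/2}\xi\rfloor|}{t^{1/2}}\right) \leq C' e^{-c|\xi|},
\end{align}
with $C',c > 0$ depending only on the bounds on $a$. Hence $U(t,\cdot) \leq \Phi$ for the fixed envelope $\Phi(\xi) = C' e^{-c|\xi|} \in L^1(\R)$, yielding both a uniform $L^1$-bound and tightness uniformly in $t \geq 1$. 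The second ingredient, approximate H\"older continuity as $t \to \infty$ with universal constants, is exactly the statement of Lemma \ref{fund_sol_H_continutiy_lemma}.

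With these two ingredients the conclusion follows from Lemma \ref{stepfct_approximation_lemma}. The only minor subtlety is that the latter is phrased for countable sequences whereas we have a continuous one-parameter family; this is cosmetic, since its proof uses nothing beyond the uniform tightness bound and the uniform H\"older estimate \eqref{approximate_H_continutiy}, both available for $t \geq T(\varepsilon)$. Alternatively, one may argue by contradiction: if the corollary failed for some $\varepsilon > 0$, one could extract a sequence $t_n \to \infty$ with $\inf_{\chi \in \mathcal{T}_\delta} \|U(t_n,\cdot) - \chi\|_{L^1} > \varepsilon$ for every $\delta > 0$, contradicting Lemma \ref{stepfct_approximation_lemma} applied to $(U(t_n,\cdot))_n$.

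There is no real obstacle here; the only point to track carefully is that the constants $T$ and $\delta$ produced this way depend only on $\varepsilon$ and the bounds on $a$, which is immediate because the envelope $\Phi$ above, the H\"older constants from Lemma \ref{fund_sol_H_continutiy_lemma}, and the threshold $n_0$ in Lemma \ref{stepfct_approximation_lemma} all inherit this dependence.
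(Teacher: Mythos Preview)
Your proposal is correct and follows essentially the same approach as the paper: verify tightness via the Aronson estimate \eqref{Aronson_estimate}, invoke Lemma \ref{fund_sol_H_continutiy_lemma} for approximate H\"older continuity, and conclude by Lemma \ref{stepfct_approximation_lemma}, tracking that all constants depend only on $\varepsilon$ and the bounds on $a$. Your additional remark resolving the sequence-versus-continuous-family issue is a nice clarification that the paper glosses over.
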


\begin{proof}
Approximate H\"older continuity was already established, while tightness in $L^1$ follows from the estimate (\ref{Aronson_estimate}) of Theorem \ref{discrete_H_continuity_thm}. The dependence of the constants is easily checked revisiting the proofs of the previous two lemmas.
\end{proof}

\section*{Acknowledgement}

The author would like to thank Barbara Niethammer and Juan J. L. Vel\'azquez for inspiration, helpful discussions and proofreading. This work was supported by the German Research Foundation through the CRC 1060 \textit{The Mathematics of Emergent Effects}.

\bibliography{Nonlinear_Discrete_Coarsening_GrowthSolutions.bbl}

\begin{thebibliography}{HNO04}

\bibitem[BV06]{BonforteV2006}
Matteo Bonforte and Juan~Luis Vazquez.
\newblock {Global positivity estimates and Harnack inequalities for the fast
  diffusion equation}.
\newblock {\em Journal of Functional Analysis}, 240(2):399--428, 2006.

\bibitem[EG09]{EsedogluG2009}
Selim Esedoglu and John~B. Greer.
\newblock {Upper bounds on the coarsening rate of discrete, ill-posed nonlinear
  diffusion equations}.
\newblock {\em Communications on Pure and Applied Mathematics}, 62(1):57--81,
  2009.

\bibitem[ES08]{EsedogluS2008}
Selim Esedoglu and Dejan Slep\v{c}ev.
\newblock {Refined upper bounds on the coarsening rate of discrete, ill-posed
  diffusion equations}.
\newblock {\em Nonlinearity}, 21(12):2759, 2008.

\bibitem[GOS01]{GiacominOS2001}
Giambattista Giacomin, Stefano Olla, and Herbert Spohn.
\newblock {Equilibrium Fluctuations for $\nabla_{\varphi}$ Interface Model}.
\newblock {\em Ann. Probab.}, 29(3):1138--1172, 07 2001.

\bibitem[GW03]{GlasnerW2003}
K.~B. Glasner and T.~P. Witelski.
\newblock {Coarsening dynamics of dewetting films}.
\newblock {\em Phys. Rev. E}, 67:016302, Jan 2003.

\bibitem[GW05]{GlasnerW2005}
K.B. Glasner and T.P. Witelski.
\newblock {Collision versus collapse of droplets in coarsening of dewetting
  thin films}.
\newblock {\em Physica D: Nonlinear Phenomena}, 209(1):80--104, 2005.
\newblock Non-linear Dynamics of Thin Films and Fluid Interfaces.

\bibitem[HK02]{HellenOK2002}
E.~K.~O. Hell{\'e}n and J.~Krug.
\newblock {Coarsening of sand ripples in mass transfer models}.
\newblock {\em Phys. Rev. E}, 66:011304, Jul 2002.

\bibitem[HNO04]{HenselerNO2004}
Reiner Henseler, Barbara Niethammer, and Felix Otto.
\newblock {A Reduced Model for Simulating Grain Growth}.
\newblock In Pierluigi Colli, Claudio Verdi, and Augusto Visintin, editors,
  {\em {Free Boundary Problems}}, pages 177--187, Basel, 2004. Birkh{\"a}user
  Basel.

\bibitem[HNV16]{HelmersNV2016}
Michael Helmers, Barbara Niethammer, and Juan J.~L. Vel{\'a}zquez.
\newblock {Mathematical Analysis of a Coarsening Model with Local
  Interactions.}
\newblock {\em J. Nonlinear Science}, 26(5):1227--1291, 2016.

\bibitem[KO02]{KohnO2002}
Robert~V. Kohn and Felix Otto.
\newblock {Upper Bounds on Coarsening Rates}.
\newblock {\em Communications in Mathematical Physics}, 229(3):375--395, Sep
  2002.

\bibitem[Nas58]{Nash1958}
J.~Nash.
\newblock {Continuity of Solutions of Parabolic and Elliptic Equations}.
\newblock {\em American Journal of Mathematics}, 80(4):931--954, 1958.

\bibitem[vvL01]{VanDerMeervdWL2001}
{Roger M.} {van der Meer}, J.P. {van der Weele}, and Detlef Lohse.
\newblock {Bifurcation diagram for compartmentalized granular gases}.
\newblock {\em Physical review E: covering statistical, nonlinear, biological,
  and soft matter physics}, 63(061304):061304--1--061304--9, 2001.

\end{thebibliography}
\bibliographystyle{alpha}

\end{document}